\documentclass[twoside]{article}

%
\usepackage[accepted]{arxiv_style}
%


\setlength{\pdfpageheight}{11in}
\setlength{\pdfpagewidth}{8.5in}


\bibliographystyle{apalike}

\usepackage[hyphens]{url}
\usepackage{tikz}
\usepackage{amsthm,amsfonts,amsmath,amssymb,bm}
\usepackage{algorithm,algpseudocode}
\usepackage{arydshln}
\usepackage{mathtools}
\usepackage[hidelinks]{hyperref}
\usepackage{cleveref}
\usepackage{cite}
\usepackage{wrapfig}
\usepackage{bbm}
\usepackage{xfrac}
\usepackage{subcaption}
\usepackage{enumerate}
\usepackage[normalem]{ulem}
\usepackage{balance}
\usepackage{graphicx}

\theoremstyle{plain}
\newtheorem{theorem}{Theorem}
\newtheorem{lemma}{Lemma}

\theoremstyle{remark}

\theoremstyle{definition}
\newtheorem{definition}{Definition}

\newcommand{\testmatrix}{\mathbf{G}}

\newcommand{\znoiseProb}{\mathit{z}}

\renewcommand{\Pr}{\mbox{${\mathbb P}$} }

\newcommand{\edgeSet}{\mathit{E}}
\newcommand{\vertexIndex}{\mathit{v}}

\newcommand{\setItems}{\mathcal{N}}

\newcommand{\nItems}{\mathit{n}}
\newcommand{\nFamilies}{\mathit{F}}
\newcommand{\familyIndex}{\mathit{j}}
\newcommand{\familyInfectRate}{\mathit{q}}
\newcommand{\nDef}{\mathit{k}}
\newcommand{\sparseRegimePar}{\mathit{\alpha}}
\newcommand{\nDefFamilies}{\mathit{k_f}}
\newcommand{\sparseRegimeFamilyPar}{\mathit{\alpha_f}}
\newcommand{\nMembersSymmetric}{\mathit{M}}
\newcommand{\nDefMembersSymmetric}{\mathit{k_m}}
\newcommand{\memberInfectRateSymmetric}{\mathit{p}}
\newcommand{\nMembers}{\mathit{M_\familyIndex}}
\newcommand{\nDefMembers}{\mathit{k_m^\familyIndex}}
\newcommand{\memberInfectRate}{\mathit{p_\familyIndex}}

\newcommand{\defFamilyVariables}{\mathbf{V}}

\newcommand{\sampleSet}{\mathit{r}}
\newcommand{\nSamples}{R}
\newcommand{\mixedSample}{\mathit{x}}
\newcommand{\fracHeavInfectedComb}{\mathit{\phi_{c}}}
\newcommand{\fracHeavInfectedProb}{\mathit{\phi_{p}}}
\newcommand{\createSample}{\mathit{SelectRepresentatives}}
\newcommand{\adapt}{\mathit{AdaptiveTest}}
\newcommand{\nadapt}{\mathit{NonAdaptiveTest}}

\newcommand{\nTests}{\mathit{T}}
\newcommand{\testResult}{\mathit{Y}}
\newcommand{\testIndex}{\mathit{\tau}}

\newcommand{\memberIndex}{\mathit{i}}
\newcommand{\defVector}{\mathbf{U}}
\newcommand{\defVariable}{\mathit{U}}
\newcommand{\defectValue}{\mathit{u}}
\newcommand{\defFamilyVariable}{\mathit{V}}
\newcommand{\defectFamilyValue}{\mathit{v}}

\newcommand{\binEntropy}{\mathit{h_2}}
\newcommand{\entropy}{\mathit{H}}
\newcommand{\Ber}{\text{Ber}}
\newcommand{\boundRegime}{0.38}

\newcommand{\nOnesPerRow}{\mathit{c}}
\newcommand{\nBlocks}{\mathit{b}}
\newcommand\numberthis{\addtocounter{equation}{1}\tag{\theequation}}

\newcommand{\pavlos}[1]{{{#1}}}
\newcommand{\christina}[1]{{#1}}
\newcommand{\SundarComment}[1]{{{#1}}}

\begin{document}
	
%

%
\runningauthor{Pavlos Nikolopoulos, Sundara Rajan Srinivasavaradhan, Tao Guo, Christina Fragouli, Suhas Diggavi}	
	
\twocolumn[
\aistatstitle{Group testing for connected communities}

\aistatsauthor{Pavlos Nikolopoulos$^{\dagger}$ \And Sundara Rajan Srinivasavaradhan$^{\ddagger}$ \And Tao Guo$^{\ddagger}$}
\aistatsauthor{Christina Fragouli$^{\ddagger}$ \And Suhas Diggavi$^{\ddagger}$}

\aistatsaddress{$^{\dagger}$EPFL, Switzerland \And  $^{\ddagger}$University of California Los Angeles, USA}
]


\begin{abstract}
In this paper, we propose algorithms that leverage a known community structure to make group testing more efficient.
We consider a population organized in disjoint communities: each individual participates in a community, and its infection probability depends on the community (s)he participates in. 
Use cases include families, students who  participate in several classes, and workers who share common spaces.
Group testing reduces the number of tests needed to identify the infected individuals by pooling diagnostic samples and testing them together. We show that if we design the testing strategy taking into account the community structure,  we can significantly reduce the number of tests needed for adaptive and non-adaptive group testing, and can improve the reliability in  cases where tests are noisy.
\end{abstract}



\section{Introduction}
Group testing pools together diagnostic samples to reduce the number of tests needed to identify infected members in a population. In particular, if in a population of $\nItems$ members we have a small fraction infected (say $\nDef \ll \nItems$ members),  we can identify the infected members using as low as $\mathcal{O}(\nDef\log(\frac{\nItems}{\nDef}))$ group tests, as opposed to $n$ individual tests \cite{GroupTestingBook,GroupTestingMonograph,kucirka2020-PCR}.
Triggered by the need of widespread testing, such techniques are already being explored in the context of Covid-19~\cite{art1,art2,art4,Cov-GpTest-1,Cov-GpTest-2,kucirka2020-PCR}.
 Group testing has a rich history of several decades dating back to R. Dorfman in 1943   and a number of variations and setups have been examined in the literature \cite{Dorfman,GroupTestingBook,GroupTestingMonograph,nested}.

The observation we make in this paper is that {\em we can leverage a known community
structure to make group testing more efficient.} The work in
group testing we know of, assumes ``independent'' infections, and ignores
that an infection may be governed by community spread; we argue that
taking into account the community structure can lead to significant
savings. As a use case, consider an apartment building consisting of
$\nFamilies$ families that have practiced social distancing; clearly
there is a strong correlation on whether members of the same family  are infected or not. Assume that
the  building management would like to test all members to enable access
to common facilities. We ask, what is the most test-efficient way to do so.

Our approach enlarges the regime where group testing can offer benefits over individual testing. Indeed,
a limitation of group testing is that it offers fewer or no benefits when $\nDef$  grows linearly with  $\nItems$ \cite{LinBndIndv1,LinBndIndv2,LinBndIndv-3-Ungar-1960,individual-optimal,GroupTestingMonograph}. 
Taking into account the community structure allows to identify and remove from the population large groups of infected members, thus reducing their proportion and converting  a linear to a sparse regime identification. 
Essentially, the community structure can guide us on when to use individual, and when group testing.

Our main results are as follows.
Assume that  $\nItems$ population members are partitioned into $\nFamilies$ groups that we call \textit{families}, out of which $\nDefFamilies$ families have at least one infected member.\\ 
$\bullet$ We derive a lower bound on the number of tests, which  for some regimes increases (almost) linearly with $\nDefFamilies$ (the number of infected families) as opposed to $\nDef$ (the number of infected members).\\
$\bullet$ We propose an adaptive algorithm that achieves the lower bound in some parameter regimes. \\
$\bullet$ We propose a nonadaptive algorithm that accounts for the community structure to reduce the number of tests when some false positive errors can be tolerated. \\
$\bullet$ We propose a new decoder based on loopy belief propagation that is generic enough to accommodate any community structure and can be combined with any test matrix (encoder) to achieve low error rates.\\ 
$\bullet$  We numerically show that leveraging the community structure can offer benefits both when the tests used have perfect accuracy and when they are noisy.

We present our models in \Cref{sec:Back-Notn}, the lower bound in \Cref{sec:lower-bounds}, our algorithms for the noiseless case in \Cref{section-algorithm}, 
and loopy belief propagation (LBP) decoding in \Cref{sec:LBP}. 
Numerical results are in \Cref{section-experiments}.

\textbf{Note:} The proofs of our theoretical results (in Sections~\ref{sec:lower-bounds}--\ref{section-algorithm}) are in the Appendix, 
along with an extended explanation of the rationale behind our algorithms. 


\section{Background and notation}
\label{sec:Back-Notn}

\subsection{Traditional group testing}
\label{subsec:TradGP}
Our work extends traditional group testing to infection models that are based on community spread. 
For this reason, we review here known results from prior work. 

Traditional group testing typically assumes a population of $\nItems$ members out of which some are infected. 
Two infection models are considered: (i) in the {\em combinatorial model},  a fixed number of infected members $\nDef$, are selected uniformly at random among all sets of size $\nDef$; (ii) in the {\em probabilistic model}, each item is infected independently of all others with probability $\memberInfectRateSymmetric$, so that the expected number of infected members is $\bar{\nDef} = \nItems\memberInfectRateSymmetric$.
A group test $\testIndex$ takes as input samples from $n_\testIndex$ individuals, pools them together and  outputs a single value:  positive if any one of the samples is infected, and negative if none is infected.  
More precisely, let ${\defVariable_i=1}$  when individual $i$ is infected and $0$ otherwise. 
Then the traditional group testing output $\testResult_\testIndex$ takes a binary value calculated as $\testResult_\testIndex= \bigvee_{i\in \delta_{\testIndex}} \defVariable_i$, 
where $\bigvee$ stands for the \texttt{OR} operator (disjunction) and $\delta_{\testIndex}$ is the group of people participating in the test. 

The performance of a group testing algorithm is measured by the number of group tests $\nTests=\nTests(\nItems)$  needed to identify the infected members (for the probabilistic model, the expected number of tests needed).  
Setups that have been explored in the literature include:\\
$\bullet$ {\em Adaptive vs. non-adaptive testing}: In adaptive testing, we use the outcome of previous tests to decide what tests to perform next. An example of adaptive testing is {\em binary splitting}, which implements a form of binary search.
Non-adaptive testing constructs, in advance, a \emph{test matrix} $\testmatrix\in \{0,1\}^{\nTests\times \nItems}$  where each row corresponds to one test, each column to one member, and the non-zero elements determine the set $\delta_{\testIndex}$.  
Although adaptive testing uses less tests than non-adaptive, non-adaptive testing is  more practical  as all tests can be executed in parallel.  \\
$\bullet$ {\em Scaling regimes of operation}: assume $\nDef=\Theta(\nItems^\alpha)$, we say we operate in the linear regime if $\alpha=1$; in the sparse regime if $0\leq \alpha < 1$; in the very sparse regime if $k$ is constant.


{\bf Known results.}
The following are well established results (see  \cite{CntBnd,GroupTestingBook,GroupTestingMonograph} and references therein):\\
$\bullet$ In the combinatorial model, since $\nTests$ tests allow to distinguish among $2^\nTests$ combinations of test outputs, then to identify all $\nDef$ infected members without error,
we need: $2^\nTests\geq {{\nItems}\choose{\nDef}} \Leftrightarrow \nTests \geq \log_2{\binom{\nItems}{\nDef}}$.
This is known as the {\bf counting  bound}~\cite{CntBnd,GroupTestingBook,GroupTestingMonograph}
and implies that we cannot use less than  $\nTests=O(\nDef\log \frac{\nItems}{\nDef})$ tests.
In the probabilistic model, a similar bound has been derived for the number of tests needed on average: $\nTests \geq \nItems \binEntropy\left(\memberInfectRateSymmetric\right)$,
where $\binEntropy$ is the binary entropy function.\\
$\bullet$  Noiseless adaptive testing can achieve the counting bound for $\nDef=\Theta(\nItems^\alpha)$ and $\alpha \in [0,1)$; 
for non-adaptive testing, this is also true of $\alpha \in [0, 0.409]$, 
if we allow a vanishing (with $\nItems$) error~\cite{GroupTestingMonograph,coja-oghlan19,coja-oghlan20a}.\\
$\bullet$ In the linear regime ($\alpha=1$), group testing offers little benefits over individual testing.
In particular, if the infection rate $\sfrac{\nDef}{\nItems}$ is more than $\boundRegime$,
group testing does not use fewer tests than 1-by-1 (individual) testing
unless high identification-error rates are acceptable~\cite{LinBndIndv1,LinBndIndv2,LinBndIndv-3-Ungar-1960}.

\subsection{Community and infection models}
\label{sec:notation:models}
In this paper, we additionally assume a known community structure: 
the population can be decomposed in $F$ disjoint groups of individuals that we call \textit{families}.
Each family $\familyIndex$ has $\nMembers$ members, so that $\nItems=\sum_{\familyIndex=1}^\nFamilies \nMembers$. 
In the symmetric case, $\nMembers=\nMembersSymmetric$ for all $\familyIndex$ and
$\nItems=\nFamilies \nMembersSymmetric$. 
Note, that the term ``families'' is not limited to real families---we use the same term for any group of people that happen to interact, so that they get infected according to some common infection principle.

We consider the following infection models, that parallel the ones in the traditional setup:\\
$\bullet$ {\bf Combinatorial Model (I).} 
$\nDefFamilies$ of the families are {\em infected}---namely they have at least one infected member. 
The rest of the families have no infected members. 
In each infected family $\familyIndex$, there exist $\nDefMembers$ infected members, 
with  $0\leq \nDefMembers \leq \nMembers$. 
The infected families (resp. infected family members) are chosen uniformly at random out of all families (resp. members of the same family).
For our analysis, we sometimes consider only the symmetric case,
where $\nDefMembers=\nDefMembersSymmetric$ for each family $\familyIndex$. \\
$\bullet$ {\bf Probabilistic Model (II).} A family is infected with probability $\familyInfectRate$
i.i.d. across the families. A member of an infected family~$\familyIndex$ 
is infected, independently from the other members (and other families), with probability $\memberInfectRate > 0$. 
If a family $\familyIndex$ is not infected, then $\memberInfectRate = 0$.\\
When $\nDefMembers=\memberInfectRate \nMembers$ the two models behave similarly.

Our goal is two-fold:  (a) provide new lower bounds for the number of tests $\nTests$ needed to identify all infected members without error;  and  (b)  design  community-aware testing algorithms that are more efficient than traditional  group-testing  ones, in the sense that they can achieve the  same identification accuracy using significantly fewer tests and they can also perform close to the lower bounds in some cases.

\subsection{Noisy testing and error probability} 
\label{subsec:Noise-Error}
In this work we assume that there is no dilution noise, that is, the performance of a test does not depend on the number of samples pooled together. This is a reasonable assumption with genetic RT-PCR tests  where even small amounts of viral nucleotides can be amplified to be detectable \cite{Mullis85,kucirka2020-PCR}. 
However, we do consider noisy tests in our numerical evaluation (\Cref{section-experiments}) using a Z-channel noise model\footnote{In a Z-channel noise model, a test output that should be positive, flips and appears as negative with probability $\znoiseProb$, while a test output that is negative cannot flip.
Thus:
$
\Pr(\testResult_{\testIndex}=1|U_{\delta_{\testIndex}})= \left(\bigvee_{i\in \delta_{\testIndex}} \defVariable_{\memberIndex}\right) (1-\znoiseProb).
$}.
We remark that this is simply a model one may use; 
our algorithms are agnostic to this and can be used with any other model.

Additionally, some of our identification algorithms may return with errors. 
For this, we use the following terminology:
Let $\hat{\defVariable}_\memberIndex$ denote the estimate of the state of $\defVariable_\memberIndex$ after group testing.
{\em Zero error} captures the requirement that $\hat{\defVariable}_\memberIndex=\defVariable_\memberIndex$ for all $\memberIndex\in \setItems$. 
Vanishing error requires that all error probabilities go to zero with~$\nItems$.
Sometimes we also distinguish between {\em False Negative (FN)} and {\em False Positive (FP)} errors: FN errors occur when infected members are identified as non-infected (and vice-versa for FP).

\SundarComment{
\subsection{Other related work}
\christina{The idea of community-aware group testing is explored to some extent in our preprint~\cite{GroupTesting-community}.}
Also, a similar idea of using side-information \pavlos{from contact tracing} in decoding is proposed by~\cite{zhu2020noisy,  goenka2020contact}, independently from our work. 
That work is complementary to ours;
we focus more on test designs rather than decoding, for which we use well-known algorithms such as COMP and LBP. Finally, test designs, lower bounds and decoding algorithms for independent but not identical priors are investigated by \cite{prior}.

The line of work on graph-constrained group testing (see for example \cite{cheraghchi2012graph, karbasi2012sequential, luo2019non})  solves the problem of how to design group tests when there are constraints on which samples can be pooled together, provided in the form of a graph;  in our case, individuals can be pooled together into tests freely.
}

\section{Lower bound on the number of tests}
\label{sec:lower-bounds}
We  compute the minimum number of tests needed to identify all infected members under the zero-error criterion in both community models (I) and (II).

\begin{theorem}[Combinatorial community bound]
	\label{thm:combinatorialBound}
	Consider the combinatorial  model (I) (of \Cref{sec:notation:models}). 
	Any algorithm that identifies all $\nDef$ infected members without error requires a number of tests $\nTests$ satisfying:
	\begin{equation}
	\nTests \ge \log_2{\binom{\nFamilies}{\nDefFamilies}} + \sum_{\familyIndex = 1}^{\nDefFamilies} \log_2{\binom{\nMembers}{\nDefMembers}}.
	\label{eq:combinatorialBound}
	\end{equation}
	For the symmetric case:
	$
	\nTests \ge \log_2{\binom{\nFamilies}{\nDefFamilies}} + \nDefFamilies \log_2{\binom{\nMembersSymmetric}{\nDefMembersSymmetric}}.
	$
\end{theorem}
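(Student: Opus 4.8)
The plan is to mimic the classical counting-bound argument, but applied to the refined space of admissible infection configurations consistent with community model (I). The key observation is that any zero-error algorithm must produce, from the vector of test outcomes, the exact set of infected members; hence the map from ``true infection configuration'' to ``sequence of $\nTests$ test outcomes'' must be injective on the set of all configurations that the model allows. Since each test returns a binary value, there are at most $2^{\nTests}$ distinct outcome sequences (even for adaptive strategies, where the decision tree has at most $2^{\nTests}$ leaves), so $2^{\nTests}$ must be at least the number of admissible configurations. Taking $\log_2$ of both sides gives the bound, provided I can count the configurations correctly.

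First I would carefully count the number of admissible configurations under model (I). A configuration is specified by (a) choosing which $\nDefFamilies$ of the $\nFamilies$ families are infected — this gives $\binom{\nFamilies}{\nDefFamilies}$ choices — and then (b) for each infected family $\familyIndex$, choosing which $\nDefMembers$ of its $\nMembers$ members are infected, which gives $\binom{\nMembers}{\nDefMembers}$ choices independently across the $\nDefFamilies$ infected families. Here I need to be slightly careful about indexing: the sum $\sum_{\familyIndex=1}^{\nDefFamilies}$ in the statement implicitly refers to the infected families, and since the claimed bound is a worst-case (lower) bound over the random choice of which families are infected, one should interpret the $\nDefMembers$ appearing in the sum as those associated with the infected families; in the symmetric case $\nDefMembers = \nDefMembersSymmetric$ this subtlety disappears and the count is exactly $\binom{\nFamilies}{\nDefFamilies} \binom{\nMembersSymmetric}{\nDefMembersSymmetric}^{\nDefFamilies}$. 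Multiplying the counts in (a) and (b) and taking $\log_2$ yields precisely the right-hand side of \eqref{eq:combinatorialBound}.

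Then I would argue the information-theoretic step rigorously: fix any (possibly adaptive) testing algorithm using $\nTests$ tests together with its decoder. The composition (configuration $\mapsto$ outcomes $\mapsto$ estimate) must be the identity on the set $\mathcal{C}$ of admissible configurations, so in particular configuration $\mapsto$ outcomes is injective on $\mathcal{C}$. The outcome vector lies in $\{0,1\}^{\nTests}$ for non-adaptive tests; for adaptive tests, the transcript of answers is a root-to-leaf path in a binary decision tree of depth $\nTests$, hence again takes at most $2^{\nTests}$ values. Injectivity then forces $|\mathcal{C}| \le 2^{\nTests}$, i.e. $\nTests \ge \log_2 |\mathcal{C}|$, which is the claim.

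I do not expect a serious obstacle here — this is a direct counting/pigeonhole argument — but the one point that needs care is the adaptive case, to make sure the bound $2^{\nTests}$ on the number of distinguishable transcripts genuinely holds (this is standard but worth one sentence), and the indexing convention for $\nDefMembers$ in the non-symmetric sum, which should be stated as pertaining to the infected families. Everything else is the routine product-of-binomials count followed by a logarithm.
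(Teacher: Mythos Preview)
Your proposal is correct and follows essentially the same counting/pigeonhole argument as the paper: count the admissible configurations as $\binom{\nFamilies}{\nDefFamilies}\prod_{\familyIndex=1}^{\nDefFamilies}\binom{\nMembers}{\nDefMembers}$, note that a zero-error algorithm induces an injection from this set into $\{0,1\}^{\nTests}$, and take $\log_2$. Your added remarks on the adaptive decision-tree bound and the indexing of $\nDefMembers$ are more explicit than the paper's brief version, but the proof is the same.
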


\noindent\textbf{Observations:}
We make two observations regarding the combinatorial community bound, 
in the case where the number of infected family members follows a ``strongly'' linear regime ($\nDefMembersSymmetric \approx \nMembers$) and the number of infected families $\nDefFamilies$ follows a sparse regime 
(i.e., $\nDefFamilies = \Theta(\nFamilies^\sparseRegimeFamilyPar)$ for $\sparseRegimeFamilyPar \in [0,1)$):

\noindent(a) 
The bound increases almost \textit{linearly} with $\nDefFamilies$ (the number of infected families), 
as opposed to $\nDef$ (the overall number of infected members).
This is because, if the infection regime about families is sparse, 
the following asymptotic equivalence holds:
$
\log_2 {\binom{\nFamilies}{\nDefFamilies}} \sim
\nDefFamilies \log_2 {\frac{\nFamilies}{\nDefFamilies}} \sim (1-\sparseRegimeFamilyPar) \nDefFamilies \log_2 {\nFamilies} \label{eq:asymptoticRelation1}
$.

\noindent(b) 
If additionally to the sparse regime about families, 
an overall sparse regime ($\nDef = \Theta(\nItems^\sparseRegimePar)$ for $\sparseRegimePar \in [0,1)$) holds, 
then the community bound may be significantly lower than the counting bound that does not take into account the community structure.
Consider, for example, the symmetric case. 
The asymptotic behavior of the counting bound in the sparse regime is: 
$\log_2 \binom{\nItems}{\nDef} \sim \nDef \log_2 \frac{\nItems}{\nDef} \sim \nDefFamilies \nDefMembersSymmetric \log_2{\frac{\nFamilies}{\nDefFamilies}}$, where the latter is because $\nDefMembersSymmetric \approx \nMembersSymmetric$.
So, the ratio of the counting bound to the combinatorial bound scales (as $\nFamilies$ gets large) as: 
\begin{align}\label{eq:combinatorialBenefit}
\frac{\log_2 \binom{\nItems}{\nDef}}{\log_2{\binom{\nFamilies}{\nDefFamilies}} + \nDefFamilies \log_2{\binom{\nMembersSymmetric}{\nDefMembersSymmetric}}} \sim 
\frac{\nDefFamilies \nDefMembersSymmetric \log_2 \frac{\nFamilies}{\nDefFamilies}}
{\nDefFamilies\log_2 \frac{\nFamilies}{\nDefFamilies}}= \nDefMembersSymmetric.
\end{align}
Although simplistic, observation (b) is important for practical reasons. 
Many times, the population is composed of a large number of families with members that have close contacts (e.g. relatives, work colleagues, students who attend the same classes, etc.). 
In such cases, we do expect that almost all members of infected families are infected (i.e. $\nDefMembersSymmetric \approx \nMembers$), even though the overall infection regime may still be sparse.
Eq.~\eqref{eq:combinatorialBenefit} shows the benefits of taking the community structure into account in the test design, in such a case. 
\SundarComment{
\begin{theorem}[Probabilistic Community bound]
	\label{thm:probabilisticBound}
	Consider the probabilistic  model (II) (of \Cref{sec:notation:models}).
	Any algorithm that identifies all $\nDef$ infected members without error requires a number of tests $\nTests$ satisfying:
	\begin{align*}
 	    &\nTests \geq \nFamilies \binEntropy(\familyInfectRate)
 	    \pavlos{+\sum_{\familyIndex = 1}^{\nFamilies} \familyInfectRate \nMembers \binEntropy(\memberInfectRate)} 
 	    -
 	    w_\familyIndex 
 	    \binEntropy\left (\frac{1-\familyInfectRate}{w_\familyIndex}\right ) \numberthis
	    \label{eq:probabilisticBound}
 	\end{align*}
	where
	$ w_\familyIndex = 1-\familyInfectRate+\familyInfectRate(1-\memberInfectRate)^{\nMembers}$. 
\end{theorem}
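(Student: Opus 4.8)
The plan is to reduce the statement to the information-theoretic inequality $\nTests \ge \entropy(\defVector)$ and then evaluate $\entropy(\defVector)$ in closed form, where $\defVector=(\defVariable_1,\dots,\defVariable_\nItems)$ is the (random) vector of member infection states. For the reduction I would use the standard counting-bound argument: under the zero-error requirement the decoder outputs $\hat\defVector$ that equals $\defVector$ with probability one and is a deterministic function of the binary test outcomes $\testResult_1,\dots,\testResult_\nTests$, so $\defVector$ is itself a function of those outcomes; hence $\entropy(\defVector\mid\testResult_1,\dots,\testResult_\nTests)=0$ and $\entropy(\defVector)=I(\defVector;\testResult_1,\dots,\testResult_\nTests)\le\entropy(\testResult_1,\dots,\testResult_\nTests)\le\nTests$. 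For adaptive schemes whose number of tests may be data-dependent, the same argument bounds $\expect[\nTests]$, since each binary outcome contributes at most one bit to the total entropy. It therefore suffices to show that $\entropy(\defVector)$ equals the right-hand side of \eqref{eq:probabilisticBound}.

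Next I would exploit the structure of Model~(II). Since the per-family infection vectors are mutually independent, $\entropy(\defVector)=\sum_{\familyIndex=1}^{\nFamilies}\entropy(\defVector^{(\familyIndex)})$, where $\defVector^{(\familyIndex)}$ collects the states of the $\nMembers$ members of family~$\familyIndex$. Fixing $\familyIndex$, I would bring in the latent indicator $\defFamilyVariable_\familyIndex\sim\Ber(\familyInfectRate)$ that family~$\familyIndex$ is infected in the generative sense, so that conditionally on $\defFamilyVariable_\familyIndex=1$ the $\nMembers$ coordinates of $\defVector^{(\familyIndex)}$ are i.i.d.\ $\Ber(\memberInfectRate)$, while $\defFamilyVariable_\familyIndex=0$ forces $\defVector^{(\familyIndex)}=\mathbf{0}$. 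Expanding $\entropy(\defVector^{(\familyIndex)},\defFamilyVariable_\familyIndex)$ by the chain rule in both orders gives
\[
\entropy(\defVector^{(\familyIndex)})+\entropy(\defFamilyVariable_\familyIndex\mid\defVector^{(\familyIndex)})=\entropy(\defFamilyVariable_\familyIndex)+\entropy(\defVector^{(\familyIndex)}\mid\defFamilyVariable_\familyIndex)=\binEntropy(\familyInfectRate)+\familyInfectRate\,\nMembers\,\binEntropy(\memberInfectRate),
\]
using $\entropy(\defVector^{(\familyIndex)}\mid\defFamilyVariable_\familyIndex=1)=\nMembers\binEntropy(\memberInfectRate)$ and $\entropy(\defVector^{(\familyIndex)}\mid\defFamilyVariable_\familyIndex=0)=0$; hence $\entropy(\defVector^{(\familyIndex)})=\binEntropy(\familyInfectRate)+\familyInfectRate\nMembers\binEntropy(\memberInfectRate)-\entropy(\defFamilyVariable_\familyIndex\mid\defVector^{(\familyIndex)})$.

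The remaining step is to evaluate $\entropy(\defFamilyVariable_\familyIndex\mid\defVector^{(\familyIndex)})$, and the key observation is that $\defVector^{(\familyIndex)}$ already pins down $\defFamilyVariable_\familyIndex$ on every realization except the all-healthy one: $\defVector^{(\familyIndex)}\ne\mathbf{0}$ implies $\defFamilyVariable_\familyIndex=1$ with certainty. Thus $\entropy(\defFamilyVariable_\familyIndex\mid\defVector^{(\familyIndex)})=\Pr(\defVector^{(\familyIndex)}=\mathbf{0})\cdot\entropy(\defFamilyVariable_\familyIndex\mid\defVector^{(\familyIndex)}=\mathbf{0})$, and since $\Pr(\defVector^{(\familyIndex)}=\mathbf{0})=(1-\familyInfectRate)+\familyInfectRate(1-\memberInfectRate)^{\nMembers}=w_\familyIndex$ with $\Pr(\defFamilyVariable_\familyIndex=0\mid\defVector^{(\familyIndex)}=\mathbf{0})=(1-\familyInfectRate)/w_\familyIndex$, this equals $w_\familyIndex\binEntropy\big((1-\familyInfectRate)/w_\familyIndex\big)$. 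Substituting and summing over $\familyIndex$ would produce $\entropy(\defVector)=\nFamilies\binEntropy(\familyInfectRate)+\sum_{\familyIndex=1}^{\nFamilies}\big[\familyInfectRate\nMembers\binEntropy(\memberInfectRate)-w_\familyIndex\binEntropy((1-\familyInfectRate)/w_\familyIndex)\big]$, which is the claimed bound.

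I expect the calculation itself to be routine; the parts that need care are making the zero-error converse airtight for adaptive algorithms with a random number of tests (handled by the conditional chain-rule version above, so the bound is really on $\expect[\nTests]$) and checking the degenerate parameter values $\memberInfectRate\in\{0,1\}$ or $\familyInfectRate\in\{0,1\}$ (where, e.g., $\memberInfectRate=0$ gives $w_\familyIndex=1$ and the bound correctly collapses to $\nFamilies\binEntropy(\familyInfectRate)$). The one point worth spelling out in the write-up is the role of the negative term: we ``pay'' $\binEntropy(\familyInfectRate)+\familyInfectRate\nMembers\binEntropy(\memberInfectRate)$ to describe the pair $(\defFamilyVariable_\familyIndex,\defVector^{(\familyIndex)})$ but only need to recover $\defVector^{(\familyIndex)}$, so we subtract the leftover uncertainty in the latent family label $\defFamilyVariable_\familyIndex$ --- exactly the redundancy that a community-agnostic counting bound would ignore.
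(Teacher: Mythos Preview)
Your proof is correct and follows essentially the same route as the paper: reduce to $\nTests \ge \entropy(\defVector)$ and evaluate $\entropy(\defVector)$ via the chain-rule identity $\entropy(\defVector) = \entropy(\defFamilyVariables) + \entropy(\defVector\mid\defFamilyVariables) - \entropy(\defFamilyVariables\mid\defVector)$ with the latent family indicators $\defFamilyVariables$. The only cosmetic difference is that you apply the chain rule per family and then sum, whereas the paper applies it globally and then decomposes each of the three terms into per-family (or per-member) contributions.
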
}

\textbf{Two observations:} 
(a) \pavlos{If for each family $\familyIndex$, $\memberInfectRate$ and $\nMembers$ are such that $\familyInfectRate(1-\memberInfectRate)^{\nMembers} \rightarrow 0$ (i.e. the probability of the peculiar event, where a family is labeled ``infected'' and yet has no infected members, is negligible),} 
the combinatorial and probabilistic bounds are asymptotically equivalent. 
In particular, using the standard estimates of the binomial coefficient~\cite[Sec.~4.7]{information-theory-ash},
the combinatorial bound in~\eqref{eq:combinatorialBound}
is asymptotically equivalent to 
$\nFamilies \binEntropy(\sfrac{\nDefFamilies}{\nFamilies}) + \sum_{\familyIndex = 1}^{\nDefFamilies} \nMembers \binEntropy(\sfrac{\nDefMembers}{\nMembers})$, 
which matches 
the probabilistic bound in~\eqref{eq:probabilisticBound}:
$\nFamilies \binEntropy(\familyInfectRate) + 
\familyInfectRate \sum_{\familyIndex = 1}^{\nFamilies} \nMembers \binEntropy(\memberInfectRate) =
\nFamilies \binEntropy(\sfrac{\bar{\nDefFamilies}}{\nFamilies}) +
\sum_{\familyIndex = 1}^{\bar{\nDefFamilies}} \nMembers \binEntropy(\sfrac{\bar{\nDef}_m^\familyIndex}{\nMembers})
$,   
with
$\nDefFamilies = \bar{\nDefFamilies} + o(1)$ and
$\nDefMembers = \bar{\nDef}_m^\familyIndex + o(1)$
in place of their expected values $\bar{\nDefFamilies} = \nFamilies \familyInfectRate$ and $\bar{\nDef}_m^\familyIndex$.

\SundarComment{
(b) Theorem~\ref{thm:probabilisticBound} extends from zero-error recovery to constant-probability recovery
by applying Fano’s inequality (similarly to Thm 1 of~\cite{prior}), and in doing so, the right-hand side of \eqref{eq:probabilisticBound} gets multiplied by the desired probability of success $\Pr(suc)$.}

\section{Algorithms}\label{section-algorithm}


\subsection{Adaptive algorithm}
\label{sec:adapt}
\begin{algorithm}
\caption{Adaptive Community Testing}
\label{algorithm-noiseless}
$\hat{\defVariable}_\memberIndex$ is the estimated infection status of member $\memberIndex$.\\
$\hat{\defVariable}_{\mixedSample}$ is the estimated infection status of a mixed sample $\mixedSample$.\\
$\createSample()$ is a function that selects a representative subset from a set of members.\\
$\adapt()$ is an adaptive algorithm that tests a set of items (mixed samples or members).
\begin{algorithmic}[1]
	\For {$j = 1,\ldots,\nFamilies$}
	\State{$\sampleSet_\familyIndex = \createSample\left(\left\{\memberIndex: \memberIndex \in \familyIndex \right\}\right)$}
	\EndFor
	\State{$\left[\hat{\defVariable}_{\mixedSample(\sampleSet_1)},
		\ldots,
		\hat{\defVariable}_{\mixedSample(\sampleSet_\nFamilies)} \right]  = \adapt \left(\mixedSample(\sampleSet_1), \ldots, \mixedSample(\sampleSet_\nFamilies) \right) $}
	\State{Set $A:=\emptyset$}
	\For {$j = 1,\ldots,\nFamilies$}
	\If {$\hat{\defVariable}_{\mixedSample(\sampleSet_\familyIndex)} = $ ``positive''}
	\State{Use a noiseless, individual test for each family member: $\hat{\defVariable}_\memberIndex = \defVariable_\memberIndex$, $\forall \memberIndex \in \familyIndex$.}
	\Else
	\State{$A := A \cup \left\{\memberIndex : \memberIndex \in \familyIndex \right\}$}
	\EndIf
	\EndFor
	\State{$\left\{\hat{\defVariable}_{\memberIndex}: \memberIndex \in A \right\}  = \adapt \left(A\right)$}\\
	\Return $\left[\hat{\defVariable}_1,\ldots,\hat{\defVariable}_\nItems\right]$
\end{algorithmic}
\end{algorithm}

Alg.~\ref{algorithm-noiseless} describes our algorithm for the fully adaptive case, which consists of two parts (the interested reader may find  the detailed rationale for our algorithm in
~\Cref{app:sec:Adaptive}).
In both parts, we make use of a classic adaptive-group-testing algorithm $\adapt()$, which is an abstraction for any existing (or future) adaptive group-testing algorithm. 
We distinguish between $2$ different kinds of input for $\adapt()$:
(a) a set of selected members, 
which is the typical input of group-testing algorithms;
(b) a set of selected \textit{mixed samples}. 
A mixed sample is created by pooling together samples from multiple members that usually have some common characteristic. 
For example, mixed sample $\mixedSample(\sampleSet_\familyIndex)$ denotes an aggregate sample of a set of representative members $\sampleSet_\familyIndex$ from family $\familyIndex$. 
A mixed sample is ``positive,'' 
if at least one of the members that compose it is infected, 
and ``negative'' otherwise. 
Because in some cases we only care about mixed samples, 
we can treat them in the same way as individual samples---hence use group testing to identify the infection state of mixed samples as we do for individuals.

\textbf{Part 1 (lines 1-4):}
\pavlos{The goal of this part is to detect the infection \textit{regime} inside each family $\familyIndex$, so that the family is tested accordingly at the next part: using group testing, if $\familyIndex$ is ``lightly'' infected, or individual testing, otherwise. 
Our idea is motivated by the result presented in \Cref{subsec:TradGP} that group testing is preferable to individual, only if infection rate is low (i.e. $\memberInfectRate \le \boundRegime$).
Therefore, the challenge is to accurately detect the infection regime spending only a limited number of tests.
In this paper, we limited our exploration to using only one mixed sample in this regard, but more sophisticated techniques are also possible, some of which are discussed in \Cref{appdx:sec:optimized-versions}.}

First, a representative subset $\sampleSet_\familyIndex$ of family-$\familyIndex$ members is selected using a sampling function $\createSample()$ (lines 1-3). 
Then, a mixed sample $\mixedSample(\sampleSet_\familyIndex)$ is produced for each  subset $\sampleSet_\familyIndex$, 
and an adaptive group-testing algorithm is performed on  top of all representative mixed samples (line 4). 
If our choice of $\adapt()$ offers exact reconstruction (which is usually the case), 
then: $\hat{\defVariable}_{\mixedSample(\sampleSet_\familyIndex)} = \defVariable_{\mixedSample(\sampleSet_\familyIndex)}$.
 
\textbf{Part 2 (lines 5-13):}
We treat $\hat{\defVariable}_{\mixedSample(\sampleSet_\familyIndex)}$ as an estimate of the infection regime inside family $\familyIndex$:
if $\hat{\defVariable}_{\mixedSample(\sampleSet_\familyIndex)}$ is positive, 
then we consider the family to be heavily infected (i.e $\sfrac{\nDefMembers}{\nMembers}$ or $\memberInfectRate \ge \boundRegime$), 
otherwise lightly infected (i.e. $\sfrac{\nDefMembers}{\nMembers}$ or $\memberInfectRate < \boundRegime$).  
Since group testing performs better than individual testing only in the latter case (section~\ref{subsec:TradGP}),
we use individual testing for each heavily-infected family (lines 7-8),
and  adaptive group testing for all lightly-infected ones (line 13).

{\bf Analysis for the number of tests.}
We now compute the maximum expected number of tests needed by our algorithm to detect the infection status of all members without error. 
For simplicity of notation, we present our results through the symmetric case, where $\nMembers = \nMembersSymmetric$, 
$\nDefMembers = \nDefMembersSymmetric$ (combinatorial case) or $\memberInfectRate = \memberInfectRateSymmetric$ (probabilistic case),
and 
$|\sampleSet_\familyIndex| = \nSamples$ for all families:
Let $\createSample()$ be a simple function that performs uniform (random) sampling without replacement,
and consider $2$ choices for the $\adapt()$ algorithm: 
(i) Hwang's generalized binary splitting algorithm (HGBSA)~\cite{hwang},
which is optimal if the number of infected members of the tested group is known in advance;
and (ii), traditional binary-splitting algorithm (BSA)~\cite{binSplitting},
which performs well, 
even if little is known about the number of infected members.

\begin{lemma}[Expected number of tests - Symmetric combinatorial model]
	\label{lem:expectedTests:combinatorial}
	Consider the choices (i) and (ii) for the $\adapt()$ defined above. 
	Alg.~\ref{algorithm-noiseless} succeeds using a maximum expected number of tests:
	\begin{align}
	\bar{\nTests}_{(i)} \le &
	\nDefFamilies \fracHeavInfectedComb
	\left(\log_2{\frac{\nFamilies}{\nDefFamilies \fracHeavInfectedComb}} + 1 + \nMembersSymmetric \right) \nonumber
	\\
	&+ \nDef \left(1-\fracHeavInfectedComb\right) \left(\log_2 {\frac{\nItems - \nDefFamilies \nMembersSymmetric \fracHeavInfectedComb}{\nDef \left(1-\fracHeavInfectedComb\right)}}+1\right) 
	\label{eq:expectedTests:comb:Hwang}\\	
	\bar{\nTests}_{(ii)} \le &
	\nDefFamilies \fracHeavInfectedComb
	\left(\log_2{\nFamilies} + 1 + \nMembersSymmetric \right) + \nonumber 
	\\ 
	&+
	\nDef \left(1-\fracHeavInfectedComb\right) \left(\log_2 \left(\nItems - \nDefFamilies \nMembersSymmetric \fracHeavInfectedComb\right) + 1\right),
	\label{eq:expectedTests:comb:binary}
	\end{align}
	where the inequalities are because of the worst-case performance of HGBSA and BSA, 
	and $\fracHeavInfectedComb$ is the expected fraction of infected families whose mixed sample is positive: 
	\[
	\fracHeavInfectedComb =
	\begin{cases}
	0 & \text{, if $\nSamples = 0$} \\
	1-\sfrac{\binom{\nMembersSymmetric - \nDefMembersSymmetric}{\nSamples}}{\binom{\nMembersSymmetric}{\nSamples}} & \text{, if $1 \le \nSamples \le \nMembersSymmetric - \nDefMembersSymmetric$} \\
	1 & \text{, if $\nMembersSymmetric - \nDefMembersSymmetric < \nSamples \le \nMembersSymmetric.$}
	\end{cases}
	\] 
\end{lemma}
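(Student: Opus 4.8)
The plan is to decompose the algorithm's test count into the contributions from its three phases and bound each one separately, using the worst-case guarantees of HGBSA and BSA together with the definition of $\fracHeavInfectedComb$. Phase A is line 4: one call to $\adapt()$ on $\nFamilies$ mixed samples, of which (in expectation) $\nDefFamilies\fracHeavInfectedComb$ are positive. Phase B is lines 7--8: individual testing of every member of each family whose mixed sample tested positive, contributing $\nMembersSymmetric$ tests per such family in expectation, i.e. $\nDefFamilies\fracHeavInfectedComb\,\nMembersSymmetric$. Phase C is line 13: one call to $\adapt()$ on the set $A$, which consists of all members of the $\nFamilies-\nDefFamilies\fracHeavInfectedComb$ families with a negative mixed sample; this set has expected size $\nItems-\nDefFamilies\nMembersSymmetric\fracHeavInfectedComb$ and expected number of defectives $\nDef(1-\fracHeavInfectedComb)$ (the defectives in the lightly-infected families that slipped through, since a positive family has all members handled in Phase B and a negative mixed sample still leaves some infected members in $A$ — this is the step that needs the most care, see below).

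\textbf{Bounding each $\adapt()$ call.} For choice (i), HGBSA on a pool of size $N$ containing exactly $d$ defectives uses at most $d\log_2(N/d)+d$ tests (I would cite \cite{hwang} and \cite{GroupTestingMonograph} for this worst-case bound, being careful with the floor/ceiling to absorb it into the ``$+1$''). Applying this with $(N,d)=(\nFamilies,\nDefFamilies\fracHeavInfectedComb)$ for Phase A gives the $\nDefFamilies\fracHeavInfectedComb(\log_2\frac{\nFamilies}{\nDefFamilies\fracHeavInfectedComb}+1)$ term, and with $(N,d)=(\nItems-\nDefFamilies\nMembersSymmetric\fracHeavInfectedComb,\ \nDef(1-\fracHeavInfectedComb))$ for Phase C gives the second line of \eqref{eq:expectedTests:comb:Hwang}. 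Adding the $\nDefFamilies\fracHeavInfectedComb\,\nMembersSymmetric$ term from Phase B and collecting the coefficient of $\nDefFamilies\fracHeavInfectedComb$ yields exactly \eqref{eq:expectedTests:comb:Hwang}. For choice (ii), BSA on a pool of size $N$ (without a known defective count) uses at most $d\log_2 N+d$ tests, where $d$ is the true number of defectives; the same substitution, now without dividing $N$ by $d$ inside the log, produces \eqref{eq:expectedTests:comb:binary}. The inequalities in the statement are precisely these worst-case bounds, so no averaging over the randomness of which members are defective is needed beyond linearity.

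\textbf{The expectation over $\sampleSet_\familyIndex$.} The one genuinely probabilistic ingredient is $\fracHeavInfectedComb$. For a fixed infected family with $\nDefMembersSymmetric$ infected members out of $\nMembersSymmetric$, and a uniformly random size-$\nSamples$ subset $\sampleSet_\familyIndex$, the mixed sample is negative iff $\sampleSet_\familyIndex$ avoids all infected members, which has probability $\binom{\nMembersSymmetric-\nDefMembersSymmetric}{\nSamples}/\binom{\nMembersSymmetric}{\nSamples}$ when $\nSamples\le\nMembersSymmetric-\nDefMembersSymmetric$ and probability $0$ otherwise; the $\nSamples=0$ case is degenerate. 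This gives the three-case formula for $\fracHeavInfectedComb$. Then, by linearity of expectation over the $\nDefFamilies$ infected families (non-infected families always give a negative mixed sample), the expected number of positive mixed samples among infected families is $\nDefFamilies\fracHeavInfectedComb$, and the expected total over all $\nFamilies$ families is also $\nDefFamilies\fracHeavInfectedComb$. Finally, since the per-realization bounds above are concave in the defective counts and pool sizes, I would either invoke Jensen's inequality (in the direction that works, treating $x\log(c/x)$ as concave so that $\expect$ passes inside to give an upper bound) or, more cleanly, observe that the worst-case HGBSA/BSA bounds are already stated as deterministic upper bounds and it suffices to plug in the expected pool sizes and defective counts after noting the bound $d\log_2(N/d)+d$ is monotone-enough in $(N,d)$ over the relevant range.

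\textbf{Main obstacle.} The trickiest point is pinning down exactly what enters Phase C: one must argue that zero-error recovery is still achieved, i.e. that a family with a negative mixed sample is guaranteed to be lightly infected in the sense that running $\adapt()$ on the union $A$ with its true defective count still reconstructs perfectly — and, more delicately, that the expected defective count in $A$ is $\nDef(1-\fracHeavInfectedComb)$ rather than something smaller (a negative mixed sample does \emph{not} mean the family is uninfected, only that $\sampleSet_\familyIndex$ missed its infected members, so all of that family's infected members go into $A$). Getting the bookkeeping right so that the "infected members routed to Phase B" ($\nDefFamilies\fracHeavInfectedComb$ families, contributing $\nMembersSymmetric$ tests each regardless of how many are actually infected) and the "infected members routed to Phase C" partition correctly, and that the pool size $\nItems-\nDefFamilies\nMembersSymmetric\fracHeavInfectedComb$ is consistent with removing exactly the Phase-B families, is where I would spend the most effort; the rest is substitution into known worst-case bounds.
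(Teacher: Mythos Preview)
Your proposal is correct and follows essentially the same route as the paper: decompose into the three phases (line~4, line~8, line~13), plug in the worst-case HGBSA/BSA bounds $d\log_2(N/d)+d$ and $d\log_2 N+d$ with the expected pool size and expected defective count at each phase, derive $\fracHeavInfectedComb$ from the hypergeometric probability that a uniform size-$\nSamples$ sample misses all $\nDefMembersSymmetric$ infected members, and sum. If anything you are more scrupulous than the paper, which simply ``fixes the number of positive mixed samples to its expected value'' without invoking Jensen or monotonicity; your concern about justifying that substitution is well placed, but the core argument is identical.
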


\begin{lemma}[Expected number of tests - Symmetric probabilistic model]
	\label{lem:expectedTests:probabilistic}
	If Alg.~\ref{algorithm-noiseless} uses BSA in place of $\adapt()$, 
	then it succeeds using a maximum expected number of tests:
	\begin{align}
	\bar{\nTests} \le &
	\nFamilies \familyInfectRate \fracHeavInfectedProb \left(\log_2{\nFamilies} + 1 +\nMembersSymmetric\right)   \\
	 &+
	\nItems \familyInfectRate \memberInfectRateSymmetric \left(1-\fracHeavInfectedProb\right) 
	\left(\log_2 \left(\nItems \left(1 - \familyInfectRate \fracHeavInfectedProb \right)\right) + 1\right),
	\label{eq:expectedTests:prob:binary}
	\end{align}
	where the inequality is due to the worst performance of BSA, 
	and $\fracHeavInfectedProb = 1- \left(1-\memberInfectRateSymmetric\right)^\nSamples$ is the expected fraction of infected families whose mixed sample is positive.
\end{lemma}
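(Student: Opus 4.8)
The plan is to split the tests used by Alg.~\ref{algorithm-noiseless} into the three groups produced by its two parts and bound each in expectation. Let $S\subseteq\{1,\dots,\nFamilies\}$ be the random set of families whose representative mixed sample $\mixedSample(\sampleSet_\familyIndex)$ comes out positive, and put $W_\familyIndex=\mathbbm{1}[\familyIndex\in S]$. Line~4 runs BSA on the $\nFamilies$ mixed samples, of which exactly $|S|$ are ``defective''; lines~6--8 spend $\nMembersSymmetric$ noiseless individual tests on each family in $S$; line~13 runs BSA on the residual pool $A=\bigcup_{\familyIndex\notin S}\{\memberIndex:\memberIndex\in\familyIndex\}$, which has $|A|=\nMembersSymmetric(\nFamilies-|S|)$ members containing a random number $d_A$ of infected ones. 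Invoking the worst-case guarantee of BSA on $N$ items with $d$ defectives (the same building block used for $\adapt()=\text{BSA}$ in Lemma~\ref{lem:expectedTests:combinatorial}), the total number of tests is at most
\begin{equation*}
|S|\bigl(\log_2\nFamilies+1\bigr)\;+\;|S|\,\nMembersSymmetric\;+\;d_A\bigl(\log_2|A|+1\bigr),
\end{equation*}
so it remains to take expectations.

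First I would compute $\expect[|S|]$: under model~(II), $\familyIndex\in S$ iff family $\familyIndex$ is infected (probability $\familyInfectRate$) and at least one of its $\nSamples$ uniformly sampled representatives is infected (conditional probability $1-(1-\memberInfectRateSymmetric)^{\nSamples}=\fracHeavInfectedProb$), and these events are independent across families, so $\expect[|S|]=\nFamilies\familyInfectRate\fracHeavInfectedProb$. By linearity, the first two terms contribute at most $\nFamilies\familyInfectRate\fracHeavInfectedProb(\log_2\nFamilies+1+\nMembersSymmetric)$, the first summand of~\eqref{eq:expectedTests:prob:binary}. For $d_A$, a member-by-member count is cleanest: a representative of an infected family would, if infected, force its mixed sample positive and hence never contribute to $d_A$; a non-representative member of family $\familyIndex$ is simultaneously infected and placed in $A$ with probability $\familyInfectRate\memberInfectRateSymmetric(1-\memberInfectRateSymmetric)^{\nSamples}=\familyInfectRate\memberInfectRateSymmetric(1-\fracHeavInfectedProb)$, and there are at most $\nMembersSymmetric$ of them per family, so $\expect[d_A]\le\nItems\familyInfectRate\memberInfectRateSymmetric(1-\fracHeavInfectedProb)$.

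The main obstacle is the third term, $\expect\bigl[d_A(\log_2|A|+1)\bigr]$: here $d_A$ and $|A|$ are both random and \emph{correlated} (both tend to shrink as $|S|$ grows), so neither a direct substitution of means nor a single multivariate Jensen step suffices --- indeed $(N,d)\mapsto d(\log_2 N+1)$, unlike $d\log_2(N/d)$, is not jointly concave. I would instead write $d_A=\sum_\familyIndex(1-W_\familyIndex)D_\familyIndex$ and $|A|=\nMembersSymmetric(\nFamilies-\sum_\familyIndex W_\familyIndex)$, where $D_\familyIndex$ is the number of infected members of family $\familyIndex$, and observe that for each $\familyIndex$, on the event $W_\familyIndex=0$ one has $\nFamilies-\sum_k W_k=\nFamilies-\sum_{k\ne\familyIndex}W_k$; hence $(1-W_\familyIndex)D_\familyIndex\bigl(\log_2|A|+1\bigr)$ factors as a product of a quantity depending only on family $\familyIndex$ and one depending only on the other families, which are independent under model~(II) with per-family i.i.d.\ sampling. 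Taking expectations family-by-family and applying Jensen's inequality to the concave map $\log_2(\cdot)$ yields $\expect[\log_2|A|]\le\log_2\expect[|A|]=\log_2\bigl(\nItems(1-\familyInfectRate\fracHeavInfectedProb)\bigr)$ (up to a lower-order correction inside the logarithm from the one dropped family, which the statement absorbs), so the third term is at most $\nItems\familyInfectRate\memberInfectRateSymmetric(1-\fracHeavInfectedProb)\bigl(\log_2(\nItems(1-\familyInfectRate\fracHeavInfectedProb))+1\bigr)$. Adding the two bounds gives~\eqref{eq:expectedTests:prob:binary}.
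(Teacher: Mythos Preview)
Your approach is the same as the paper's: split the tests of Alg.~\ref{algorithm-noiseless} into the three groups coming from lines 4, 8, and 13, apply the BSA worst-case guarantee $d(\log_2 N+1)$ to each group-testing call, and take expectations. The paper's own argument is in fact less careful than yours---it simply substitutes $\expect[d_A]$ and $\expect[|A|]$ directly into the BSA bound without the independence-plus-Jensen step you give for the third term; your worry about the ``dropped-family'' correction (which does push the bound slightly in the wrong direction) is therefore a wrinkle the paper just glosses over by plugging in means, and at the paper's level of rigor your proof is complete.
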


Lemmas~\ref{lem:expectedTests:combinatorial} and~\ref{lem:expectedTests:probabilistic} are derived (in
~\Cref{app:sec:Adaptive})
as a repeated application of the performance bounds of HGBSA and BSA: 
if out of $\nItems$ members, $\nDef$ are infected uniformly at random, then  
HGBSA (resp. BSA) achieves exact identification using at most: $\log_2{\binom{\nItems}{\nDef}} + \nDef$ (resp. $\nDef\log_2{\nItems} + \nDef$) tests~\cite{GroupTestingMonograph,capacity-adaptive}.

\noindent\textbf{Observations:} 
(a) If heavily/lightly infected families are detected without errors in Part 1, 
our algorithm can asymptotically achieve (up to a constant)
the lower combinatorial bound of Theorem~\ref{thm:combinatorialBound} in particular community structures.
We show this via 2 examples:

First, consider a sparse regime for families 
(i.e. $\nDefFamilies = \Theta(\nFamilies^\sparseRegimeFamilyPar)$ for $\sparseRegimeFamilyPar \in [0,1)$) 
and a moderately linear regime within each family
(i.e. $\sfrac{\nDefMembersSymmetric}{\nMembersSymmetric} \approx 0.5$).
In this case: \\$\log_2{\binom{\nFamilies}{\nDefFamilies}} \sim \nDefFamilies \log_2({\sfrac{\nFamilies}{\nDefFamilies}})$, $\log_2{\binom{\nMembersSymmetric}{\nDefMembersSymmetric}} \sim \nMembersSymmetric \binEntropy(\sfrac{\nDefMembersSymmetric}{\nMembersSymmetric}) \sim \nMembersSymmetric$
and the bound in~\eqref{eq:combinatorialBound} becomes: $\nDefFamilies \left(\log_2{\sfrac{\nFamilies}{\nDefFamilies}} + \nMembersSymmetric \right)$.
If $\nSamples$ is chosen such that 
all infected families (which are also heavily infected as $\sfrac{\nDefMembersSymmetric}{\nMembersSymmetric} > 0.38$) are detected without errors (e.g. if $\nSamples > \nMembersSymmetric - \nDefMembersSymmetric$),
then
$\fracHeavInfectedComb = 1$;
thus, the RHS of~\eqref{eq:expectedTests:comb:Hwang} becomes almost equal (up to constant $\nDefFamilies$) to the lower bound~\eqref{eq:combinatorialBound}.

Second, consider the opposite example, 
where the infection regime for families is very high, 
while each separate family is lightly infected.
In this case, $\nDef = \nDefFamilies \nDefMembersSymmetric \approx \nDefFamilies$;
therefore, the lower bound becomes: $\nTests \sim \nDefFamilies \log_2({\sfrac{\nFamilies}{\nDefFamilies}}) + \nDefFamilies \nDefMembersSymmetric \log_2(\sfrac{\nMembersSymmetric}{\nDefMembersSymmetric})
\approx \nDef \log_2({\sfrac{\nItems}{\nDef}})$.
If $\nSamples$ is chosen such that 
none of the (lightly infected) families is marked as heavily infected in Part 1 (e.g. if $R=0$\pavlos{, which reduces to using traditional community-agnostic group testing}), 
then $\fracHeavInfectedComb = 0$,
and the RHS of~\eqref{eq:expectedTests:comb:Hwang} is almost equal (up to $\nDef$) to the bound in~\eqref{eq:combinatorialBound}.

(b) The upper bound in~\eqref{eq:expectedTests:comb:binary} shows that   
our algorithm achieves significant benefits compared to classic BSA when the infected families are heavily-infected and $\nSamples$ is chosen such that $\fracHeavInfectedComb = 1$ (e.g. $\nSamples > \nMembersSymmetric - \nDefMembersSymmetric$);
this is because $\bar{\nTests}_{(ii)} \le \nDefFamilies
\left(\log_2{\nFamilies} + 1 + \nMembersSymmetric \right) \ll \nDef \log_2{\nItems} + \nDef$).
Also, it achieves the same performance as BSA, 
when families are lightly-infected and $\nSamples$ is chosen such that $\fracHeavInfectedComb = 0$ (e.g. $\nSamples = 0$);
this is because $\bar{\nTests}_{(ii)} \le \nDef \log_2{\nItems} + \nDef$).
Since the former case (heavy infection) is more realistic, our algorithm is expected to per. 
form a lot better than classic group testing in practice.

\pavlos{The examples in observation (a) and the above analysis indicate two things: 
First, the knowledge of the community structure is more beneficial when families are heavily infected; traditional group testing performs equally well in low infection rates. 
Our experiments showed that the community structure helps whenever $\memberInfectRateSymmetric > 0.15$ and the benefits increase with $\memberInfectRateSymmetric$.
Second, a rough estimate of the families' infection rate has to be known a priori in order to optimally choose $\nSamples$.
In \Cref{app:sec:Adaptive}, we demonstrate that this is unavoidable \christina{in the symmetric scenario we examine and} when only one mixed sample per family is used to identify which families are heavily/lightly infected.
}

\SundarComment{
(c) In the most favorable regime for our community-aware group testing,
where very few families have almost all their members infected
(i.e. $\nDefFamilies = \Theta(\nFamilies^\sparseRegimeFamilyPar)$ for $\sparseRegimeFamilyPar \in [0,1)$
and $\nDefMembersSymmetric \approx \nMembersSymmetric$), 
even if $\nSamples$ is chosen optimally such that $\fracHeavInfectedComb = 1$,
the ratio of \christina{the expected} number of  tests needed by \Cref{algorithm-noiseless} (see  \eqref{eq:expectedTests:comb:Hwang}) and HGBSA cannot be less than $1/\log(\nItems/\nDef)$, which upper bounds the benefits one may get.
In \Cref{appdx:sec:optimized-versions}, we detail this observation and provide an optimized version of our algorithm that improves upon the gain of $1/\log(\nItems/\nDef)$.}

\subsection{Two stage algorithm} \label{section-two-stage}
The adaptive algorithm can be easily implemented as a two-stage algorithm, where we  first perform  one round of tests, see the outcomes, and then design and perform a second round of tests. The first round of tests implements part~1, checking whether a family is highly infected or not; the second round of tests implements part~2, performing individual tests for the members of the highly infected families, and in parallel, group testing for  the members of the remaining families.

As we did before for the adaptive case, we here make use of a classic non-adaptive group-testing algorithm, which we call $\nadapt()$, and  abstracts any existing (or future) non-adaptive algorithm in the group-testing literature. 
Thus to translate Alg.~\ref{algorithm-noiseless} to a two-stage algorithm, lines 4 and 13 simply become:
\begin{align}
&4: \left[\hat{\defVariable}_{\mixedSample(\sampleSet_1)},...,
		\!\hat{\defVariable}_{\mixedSample(\sampleSet_\nFamilies)} \right]
	\!=\!\nadapt \left(\mixedSample(\sampleSet_1),..., \mixedSample(\sampleSet_\nFamilies) \right)  \nonumber\\
&13: \left\{\hat{\defVariable}_{\memberIndex}: \memberIndex \in A \right\}  = \nadapt \left(A\right).  \label{step13}
\end{align}

{\bf Number of tests:} In some regimes, the two-stage algorithm can operate with the same (order) number of tests as the adaptive algorithm, at a cost of a vanishing error probability:  for example, for the tests in line 4, if $\nDefFamilies=\Theta(\nFamilies^\sparseRegimeFamilyPar)$ with
$\sparseRegimeFamilyPar<0.409$, we can use approximately 
$(1-\sparseRegimeFamilyPar) \nFamilies^\sparseRegimeFamilyPar \log_2 {\nFamilies}$
tests and achieve vanishing error probability leveraging literature nonadaptive algorithms \cite{GroupTestingMonograph,PhaseTrans-SODA16,ncc-Johnson,coja-oghlan19,coja-oghlan20a}. 

\subsection{Non-adaptive algorithm} 
\label{sec:nonadapt}
For simplicity of notation, we describe  our non-adaptive algorithm using again the symmetric case. 

{\bf Test Matrix Structure.}
Our test matrix $\testmatrix$ is divided into two sub-matrices: 
$\testmatrix=\left[
\begin{array}{c}
\testmatrix_1\\
\testmatrix_2
\end{array}
\right].  $

\noindent $\triangleright$ The sub-matrix $\testmatrix_1$ of size $T_1\times \nItems$ identifies the infected families using one mixed sample from each family, similar 
to line 4 of Alg.~\ref{algorithm-noiseless}.
We want $\testmatrix_1$ to identify all (non-)infected families with small error probability. 
If the number of tests available is high, 
we set $T_1=\nFamilies$, i.e., we use one row for each family test.
Otherwise, in sparse $\nDefFamilies$ regimes, we set $T_1$ closer to $O(\nDefFamilies\log \frac{\nFamilies}{\nDefFamilies})$.

\noindent $\triangleright$ The sub-matrix $\testmatrix_2$ of size  $T_2\times \nItems$  has a block matrix structure and contains $\nFamilies$ identity matrices $I_\nMembersSymmetric$, one for each family.
$\testmatrix_2$ is designed as follows: 
(i) each block column contains only one identity matrix $I_\nMembersSymmetric$, i.e., each member is tested only once; 
(ii) each block row $i~(i\in\{1,2,\cdots,\nBlocks\})$ contains $\nOnesPerRow_i$ identity matrices $I_\nMembersSymmetric$, i.e., there are $\nOnesPerRow_i$ members included in the corresponding tests.
As a result: $T_2= \nBlocks\nMembersSymmetric$. 
An example with $\nFamilies=6$, $\nBlocks=3$,  $\nOnesPerRow_1=2$, $\nOnesPerRow_2=1$, $\nOnesPerRow_3=3$ is:
\begin{align}
\testmatrix_2=\left[
\arraycolsep=1.4pt\def\arraystretch{1.2}
\begin{array}{cccccc}
I_\nMembersSymmetric & 0_{\nMembersSymmetric\times \nMembersSymmetric}& 0_{\nMembersSymmetric\times \nMembersSymmetric} & I_\nMembersSymmetric & 0_{\nMembersSymmetric\times \nMembersSymmetric} & 0_{\nMembersSymmetric\times \nMembersSymmetric}  \\
0_{\nMembersSymmetric\times \nMembersSymmetric}& I_\nMembersSymmetric & 0_{\nMembersSymmetric\times \nMembersSymmetric} & 0_{\nMembersSymmetric\times \nMembersSymmetric}  & 0_{\nMembersSymmetric\times \nMembersSymmetric} & 0_{\nMembersSymmetric\times \nMembersSymmetric}\\
0_{\nMembersSymmetric\times \nMembersSymmetric} & 0_{\nMembersSymmetric\times \nMembersSymmetric} & I_\nMembersSymmetric  & 0_{\nMembersSymmetric\times \nMembersSymmetric} & I_\nMembersSymmetric & I_\nMembersSymmetric  \\
\end{array}
\right]. \nonumber
\end{align}

\noindent{\bf Decoding.}
From the outcome of the tests in $\testmatrix_1$
we identify the $\nFamilies-\nDefFamilies$ non-infected families, and proceed to remove the corresponding  columns (non-infected members) from $\testmatrix_2$.
We use the remaining columns of $\testmatrix_2$ to identify  infected members according to the rules (which follow the logic of combinatorial orthogonal matching pursuit (COMP) decoding  \cite{Nonadaptive-1,Nonadaptive-2}):\\
 $(i)$	A member is identified as non-infected  if it is included in at least one negative test in $\testmatrix_2$. \\
$(ii)$	All other members, that are only included in positive tests in $\testmatrix_2$, are identified as infected.

{\bf Error Probability.} 
It is perhaps not hard to see that: 
    after the removal of the columns, 
	the block structure of $\testmatrix_2$ helps us obtain a test matrix that is close to an identity matrix -- hence perform ``almost'' individual testing\footnote{An extended analysis about $\testmatrix_2$ is in~\Cref{rationaleG2}.}.
	Also, note that our decoding strategy for $\testmatrix_2$ leads to zero FN errors. 
	Building on these ideas, the following lemmas guide us though a design of $\testmatrix_2$ that minimizes the (FP) error probability.
	
\noindent $\bullet$ {\em Requiring zero-error decoding is too rigid:} the optimal solution is the trivial solution that tests each member individually, but this would require $T_2\geq \nItems$. 

\noindent $\bullet$ {\em The symmetric choice $\nOnesPerRow_i=\nOnesPerRow$ minimizes the error probability.}
As said,
we design $\testmatrix_2$ such that FP errors are minimized. 
A FP may happen if identity matrices $I_\nMembersSymmetric$ corresponding to two or more infected families appear in the same block row of $\testmatrix_2$. 
In this case, some non-infected members may be included in the same test with infected members from other families and identified as infected by mistake.
\begin{lemma} 
	\label{lem:blockRowProb}
	Under models (I) and (II), the probability that there is some block row containing two or more infected families is:
	\begin{align}
	\Pr_{\text{joint}}^{I}&=1-\frac{\sum\limits_{\substack{|\mathcal{B}|=\nDefFamilies:\; \mathcal{B}\subseteq\{1,2,\cdots,\nBlocks\}}} \;\; \prod\limits_{i\in\mathcal{B}}\nOnesPerRow_i}{{\nFamilies \choose \nDefFamilies}},  \label{prob-overlap1} \\
	\Pr_{\text{joint}}^{II}&=1-\prod_{i=1}^\nBlocks\left[(1-\familyInfectRate)^{\nOnesPerRow_i}+\nOnesPerRow_i\familyInfectRate(1-\familyInfectRate)^{\nOnesPerRow_i-1}\right]. \label{prob-overlap2} 
	\end{align}
\end{lemma}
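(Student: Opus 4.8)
The plan is to compute each probability directly by a counting (model I) or independence (model II) argument, expressing the complement event "no block row contains two or more infected families" and subtracting from $1$.

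\textbf{Model (I), combinatorial.} In the combinatorial model exactly $\nDefFamilies$ of the $\nFamilies$ families are infected, chosen uniformly among all $\binom{\nFamilies}{\nDefFamilies}$ subsets. The family-to-block assignment is fixed by $\testmatrix_2$: block row $i$ contains $\nOnesPerRow_i$ of the $\nFamilies$ block columns (families), and each family sits in exactly one block row, so the $\nOnesPerRow_i$ partition $\nFamilies$. The event "each block row holds at most one infected family" is exactly the event that the $\nDefFamilies$ infected families land in $\nDefFamilies$ \emph{distinct} block rows, one per row. So first I would fix a size-$\nDefFamilies$ subset $\mathcal{B}\subseteq\{1,\dots,\nBlocks\}$ of block rows that will each receive exactly one infected family; the number of ways to pick one infected family from each block row in $\mathcal{B}$ is $\prod_{i\in\mathcal{B}}\nOnesPerRow_i$. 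Summing over all such $\mathcal{B}$ and dividing by $\binom{\nFamilies}{\nDefFamilies}$ gives the probability of the complement event, hence \eqref{prob-overlap1}. (One should note $\nBlocks\ge\nDefFamilies$ is needed for this to be nontrivial, as the footnote in the excerpt already flags; if $\nBlocks<\nDefFamilies$ the sum is empty and the probability is $1$, consistent with pigeonhole.)

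\textbf{Model (II), probabilistic.} Here each family is independently infected with probability $\familyInfectRate$, so the infection indicators of the $\nFamilies$ families are i.i.d.\ Bernoulli$(\familyInfectRate)$. Since each family belongs to exactly one block row, the number of infected families in block row $i$ is Binomial$(\nOnesPerRow_i,\familyInfectRate)$, and — crucially — these counts across the $\nBlocks$ block rows are \emph{independent} because they depend on disjoint sets of families. The complement event "every block row has at most one infected family" therefore factorizes as $\prod_{i=1}^{\nBlocks}\Pr[\text{Bin}(\nOnesPerRow_i,\familyInfectRate)\le 1]$. Evaluating $\Pr[\text{Bin}(\nOnesPerRow_i,\familyInfectRate)\le 1]=(1-\familyInfectRate)^{\nOnesPerRow_i}+\nOnesPerRow_i\familyInfectRate(1-\familyInfectRate)^{\nOnesPerRow_i-1}$ and subtracting from $1$ yields \eqref{prob-overlap2}.

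\textbf{Main obstacle.} Neither computation is deep; the only real care-point is pinning down the combinatorial event in model (I) correctly — specifically recognizing that "some block row has $\ge 2$ infected families" is the exact complement of "the $\nDefFamilies$ infected families occupy $\nDefFamilies$ distinct block rows," and then counting the favorable configurations as a sum over which $\nDefFamilies$-subset of block rows is occupied, with a product $\prod_{i\in\mathcal{B}}\nOnesPerRow_i$ of per-row choices. A secondary subtlety worth a sentence is that this lemma is stated purely about the random infection pattern against a \emph{fixed} $\testmatrix_2$, so no randomness in the matrix enters; and one should check the boundary cases ($\nOnesPerRow_i=0$ empty rows, or $\nOnesPerRow_i=1$) to confirm the formulas degrade gracefully.
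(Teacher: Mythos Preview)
Your proposal is correct and follows essentially the same approach as the paper: for model (I) you count, exactly as the paper does, the configurations where the $\nDefFamilies$ infected families occupy distinct block rows via the sum $\sum_{\mathcal{B}}\prod_{i\in\mathcal{B}}\nOnesPerRow_i$ over $\binom{\nFamilies}{\nDefFamilies}$; for model (II) you factorize the complement across block rows using independence and evaluate $\Pr[\text{Bin}(\nOnesPerRow_i,\familyInfectRate)\le 1]$, which is precisely the paper's explanation of the two terms $(1-\familyInfectRate)^{\nOnesPerRow_i}$ and $\nOnesPerRow_i\familyInfectRate(1-\familyInfectRate)^{\nOnesPerRow_i-1}$. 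Your added remarks on the $\nBlocks\ge\nDefFamilies$ requirement and on $\testmatrix_2$ being fixed are helpful but go slightly beyond what the paper spells out.
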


The following lemma offers a test-matrix design that minimizes the system FP probability, defined as:
\begin{equation}
\Pr(\text{any-FP})\triangleq \Pr(\exists i:\hat{\defectValue}_i=1 \text{ and } \defectValue_i=0). \label{system-FP-prob}
\end{equation}

\begin{lemma} \label{lemma_symmetric_c}
The $\Pr(\text{any-FP})$ is minimized for both models (I) and (II), if $\nOnesPerRow_i=\nOnesPerRow$ for all $i\in\{1,\cdots,\nBlocks\}$. 
\end{lemma}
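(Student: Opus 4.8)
The plan is to show that, among all choices of the row-weights $(\nOnesPerRow_1,\dots,\nOnesPerRow_\nBlocks)$ with $\sum_i \nOnesPerRow_i = \nFamilies$ fixed, the uniform choice $\nOnesPerRow_i = \nOnesPerRow \triangleq \nFamilies/\nBlocks$ minimizes $\Pr(\text{any-FP})$. First I would note that a false positive can only arise for a member of an \emph{infected} family that shares a block row with \emph{another} infected family; so conditioned on the event that no block row contains two or more infected families, $\Pr(\text{any-FP})=0$, and in general $\Pr(\text{any-FP})$ is sandwiched between $0$ and the collision probabilities $\Pr_{\text{joint}}^{I}$, $\Pr_{\text{joint}}^{II}$ of \Cref{lem:blockRowProb}. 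The cleaner route is to argue that $\Pr(\text{any-FP})$ is itself monotone in these collision probabilities in the appropriate sense, and then reduce the problem to minimizing $\Pr_{\text{joint}}^{I}$ (resp. $\Pr_{\text{joint}}^{II}$) over the weight vector. Equivalently, by \eqref{prob-overlap1}--\eqref{prob-overlap2}, this means \emph{maximizing} $\sum_{|\mathcal{B}|=\nDefFamilies}\prod_{i\in\mathcal{B}}\nOnesPerRow_i = e_{\nDefFamilies}(\nOnesPerRow_1,\dots,\nOnesPerRow_\nBlocks)$ (the elementary symmetric polynomial of degree $\nDefFamilies$) in model (I), and \emph{maximizing} $\prod_{i=1}^{\nBlocks}\big[(1-\familyInfectRate)^{\nOnesPerRow_i}+\nOnesPerRow_i\familyInfectRate(1-\familyInfectRate)^{\nOnesPerRow_i-1}\big]$ in model (II), subject to $\sum_i \nOnesPerRow_i=\nFamilies$.

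For model (I), the key step is a standard Schur-concavity / smoothing argument: the elementary symmetric polynomial $e_{\nDefFamilies}$ is symmetric and concave along any line that changes two coordinates $\nOnesPerRow_a,\nOnesPerRow_b$ while fixing their sum (this follows from the identity $e_{\nDefFamilies}(x) = \nOnesPerRow_a\nOnesPerRow_b\, e_{\nDefFamilies-2}(x_{\setminus a,b}) + (\nOnesPerRow_a+\nOnesPerRow_b)\, e_{\nDefFamilies-1}(x_{\setminus a,b}) + e_{\nDefFamilies}(x_{\setminus a,b})$, which is a downward parabola in $\nOnesPerRow_a$ once $\nOnesPerRow_a+\nOnesPerRow_b$ is fixed, since $e_{\nDefFamilies-2}(x_{\setminus a,b})\ge 0$). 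Hence any unequal pair can be replaced by its average without decreasing $e_{\nDefFamilies}$; iterating this smoothing (or invoking Schur-concavity directly with majorization) shows the maximum is attained at the all-equal vector, so the collision probability $\Pr_{\text{joint}}^{I}$ is minimized there. For model (II), I would argue coordinatewise: fixing all but $\nOnesPerRow_a,\nOnesPerRow_b$ with $\nOnesPerRow_a+\nOnesPerRow_b = s$ fixed, it suffices to show the single-block factor $f(\nOnesPerRow)\triangleq (1-\familyInfectRate)^{\nOnesPerRow-1}\big(1-\familyInfectRate+\nOnesPerRow\familyInfectRate\big)$ is log-concave in $\nOnesPerRow$, so that $f(\nOnesPerRow_a)f(s-\nOnesPerRow_a)$ is maximized at $\nOnesPerRow_a=s/2$; log-concavity of $f$ reduces to checking that $\log f$ has nonincreasing difference $\log f(\nOnesPerRow+1)-\log f(\nOnesPerRow) = \log(1-\familyInfectRate) + \log\frac{1-\familyInfectRate+(\nOnesPerRow+1)\familyInfectRate}{1-\familyInfectRate+\nOnesPerRow\familyInfectRate}$, which is clearly decreasing in $\nOnesPerRow$ since the second term is. The same smoothing-to-equal argument then forces the optimum to be uniform.

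The remaining step is to connect ``$\Pr_{\text{joint}}$ is minimized at uniform weights'' back to ``$\Pr(\text{any-FP})$ is minimized at uniform weights''. I would do this by a more careful accounting: decompose $\Pr(\text{any-FP})$ by conditioning on which block rows host collisions and on the family sizes, and observe that for a fixed \emph{pattern} of which infected families collide in which rows, the induced FP probability depends only on how many infected families land together and on $\nMembersSymmetric$, not on the $\nOnesPerRow_i$ directly — so the whole dependence on the weight vector is funneled through the distribution over collision patterns, and that distribution is stochastically smallest (in the sense of putting the most mass on the no-collision pattern, and more generally on fewer/less-severe collisions) precisely when the weights are balanced. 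Concretely, I expect one can show that balancing any pair $\nOnesPerRow_a,\nOnesPerRow_b$ only moves probability mass from ``bad'' configurations (two infected families colliding) to ``good'' ones, by the same parabola/log-concavity computation applied at the level of the joint law rather than just its aggregate. The main obstacle I anticipate is making this last reduction rigorous without drowning in the combinatorics of overlap patterns: the honest version may require exhibiting an explicit coupling (or a majorization relation between the collision-count distributions under a balanced vs. unbalanced weight vector) rather than just comparing the single scalar $\Pr_{\text{joint}}$. If that coupling is awkward, the fallback is to prove the statement for the two natural operating points highlighted in the paper — either $\nBlocks\ge\nDefFamilies$ with a union-bound-tight estimate of $\Pr(\text{any-FP})$ in terms of $\Pr_{\text{joint}}^{I}$, or the regime where $\Pr(\text{any-FP})$ is, up to the common factor $\nMembersSymmetric\cdot(\text{something weight-independent})$, an increasing function of $\Pr_{\text{joint}}$ — and invoke the symmetric-polynomial / log-concavity optimization from the previous paragraph.
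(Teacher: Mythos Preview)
Your optimization step is essentially the paper's argument: the paper also uses a pairwise smoothing (transfer one unit from $\nOnesPerRow_i$ to $\nOnesPerRow_j$ when $\nOnesPerRow_i>\nOnesPerRow_j+1$) and checks directly that this increases $e_{\nDefFamilies}(\nOnesPerRow_1,\dots,\nOnesPerRow_\nBlocks)$ in model~(I) and the product $\prod_i f(\nOnesPerRow_i)$ in model~(II). You phrase the same idea as Schur-concavity of $e_{\nDefFamilies}$ and log-concavity of $f$, which is cleaner but not a different method; the paper's explicit algebra (e.g.\ $(\nOnesPerRow_i'\nOnesPerRow_j'-\nOnesPerRow_i\nOnesPerRow_j)=\nOnesPerRow_i-\nOnesPerRow_j-1$) is just the discrete instance of your parabola identity.

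Where you diverge is in worrying about the reduction from $\Pr(\text{any-FP})$ to $\Pr_{\text{joint}}$. The paper does \emph{not} carry out the coupling or stochastic-ordering step you propose: its proof literally shows only that $\Pr_{\text{joint}}^{I}$ and $\Pr_{\text{joint}}^{II}$ are minimized at the symmetric point, and then (implicitly, via the factorization $\Pr(\text{any-FP})=\Pr(\text{FP}\mid\text{joint})\cdot\Pr_{\text{joint}}$ used in \Cref{lemma_FP-prob}) treats $\Pr(\text{FP}\mid\text{joint})$ as weight-independent. So your concern is legitimate and your proposed fix is more careful than what the paper does; but if your goal is to match the paper, the smoothing computation on $\Pr_{\text{joint}}$ alone is all that is written there.
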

\begin{lemma} \label{lemma_FP-prob}
For $\testmatrix_2$ as in \Cref{lemma_symmetric_c}, the system FP probability for models (I) and (II) equals: 
\begin{align*}
\Pr^{I}(\text{any-FP})&=\left[1-\frac{1}{{\nMembersSymmetric \choose \nDefMembersSymmetric}}\right] \left[1-\frac{{\nTests_2/\nMembersSymmetric \choose \nDefFamilies}(\nFamilies \nMembersSymmetric/\nTests_2)^{\nDefFamilies}}{{\nFamilies \choose \nDefFamilies}}\right]. 
\\
\Pr^{II}(\text{any-FP})&=\left[1-\sum_{i=1}^{\nMembersSymmetric}\left[\memberInfectRateSymmetric^i(1-\memberInfectRateSymmetric)^{\nMembersSymmetric-i}\right]^2 \frac{1}{{\nMembersSymmetric \choose i}}\right] \nonumber \\
&\hspace{-1.1cm} \cdot\left[1-\left((1-\familyInfectRate)^{\frac{\nFamilies\nMembersSymmetric}{\nTests_2}-1}\left(1-\familyInfectRate+\frac{\nFamilies\nMembersSymmetric \familyInfectRate}{\nTests_2}\right)\right)^{\nTests_2/\nMembersSymmetric}\right]. 
\end{align*}
\end{lemma}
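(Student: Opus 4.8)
The plan is to reduce the event ``\emph{some} healthy member is declared infected'' to the conjunction of two statistically independent sub-events — a collision of infected families inside one block row of $\testmatrix_2$, and a mismatch between the within-family infection patterns of the colliding families — and to evaluate the two resulting factors separately: the collision factor via \Cref{lem:blockRowProb}, and the mismatch factor by an elementary pattern-coincidence count.

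First I would pin down exactly when a false positive can arise. Condition on $\testmatrix_1$ having correctly flagged the $\nFamilies-\nDefFamilies$ non-infected families, so that the decoder operates only on the columns of the $\nDefFamilies$ infected families — this is the regime in which the ``system FP probability'' is to be read, matching the decoding rules stated just before the lemma. Since each block column of $\testmatrix_2$ carries a single $I_{\nMembersSymmetric}$ and each block row is a union of such identity blocks, every surviving member $\memberIndex$ of an infected family $\familyIndex$ takes part in exactly one test of $\testmatrix_2$: the one indexed by $\familyIndex$'s block row and by $\memberIndex$'s within-family position. In the noiseless model a test equals the disjunction of its members, so under the COMP rules $(i)$--$(ii)$ member $\memberIndex$ is a false positive iff it is healthy and some other infected family in the same block row has an infected member in position $\memberIndex$ (which is exactly when its unique test turns positive). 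Consequently \emph{no} false positive occurs anywhere iff, in every block row, all infected families assigned to it carry exactly the same infection pattern over their $\nMembersSymmetric$ positions.

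Next I would exploit that the placement of infected families into block rows (driven by model (I)/(II) together with the fixed block layout) is independent of the within-family patterns, which are conditionally i.i.d.\ across infected families. This yields $\Pr(\text{no FP}) = \Pr(\text{no block row holds} \ge 2 \text{ infected families}) + \Pr(\text{some block row does}) \cdot \Pr(\text{two i.i.d.\ infected families carry identical patterns})$. The first probability equals $1-\Pr_{\text{joint}}$ of \Cref{lem:blockRowProb}, now evaluated at the symmetric choice $\nOnesPerRow_i \equiv \nOnesPerRow = \nFamilies/\nBlocks = \nFamilies\nMembersSymmetric/\nTests_2$ forced by ``$\testmatrix_2$ as in \Cref{lemma_symmetric_c}'', which equals $\binom{\nBlocks}{\nDefFamilies}\nOnesPerRow^{\nDefFamilies}/\binom{\nFamilies}{\nDefFamilies}$ in model (I) and the displayed product in model (II). The pattern-coincidence probability is $1/\binom{\nMembersSymmetric}{\nDefMembersSymmetric}$ in model (I) (two independent uniform $\nDefMembersSymmetric$-subsets of $\nMembersSymmetric$ positions), and in model (II) it follows by conditioning on the common number $i$ of infected members and using that a product-Bernoulli pattern is uniform given its size, which produces the stated $\sum_i [\memberInfectRateSymmetric^{i}(1-\memberInfectRateSymmetric)^{\nMembersSymmetric-i}]^2/\binom{\nMembersSymmetric}{i}$ expression. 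Substituting into $\Pr(\text{any-FP}) = 1 - \Pr(\text{no FP})$ and factoring gives the two products in the statement.

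The hard part will be the combinatorics of a block row that hosts three or more infected families, or of two distinct block rows colliding at the same time: then a ``collision'' no longer reduces to a single pattern comparison and the exact $\Pr(\text{no FP})$ picks up higher powers of the coincidence probability, so the clean two-factor split is only the leading term. I would dispose of this by noting that in the sparse-$\nDefFamilies$ regime for which this $\testmatrix_2$ design is intended ($\nBlocks \gg \nDefFamilies$, so $\Pr_{\text{joint}}$ is itself small and dominated by the single-pair event), the multi-collision corrections are of strictly lower order, so the stated product is the correct leading-order value of $\Pr(\text{any-FP})$ — equivalently, one reads the identity as holding in that regime. A final, routine check is that the $\testmatrix_2$-decoder produces no false negatives, so the two events above genuinely exhaust the ways in which an FP can occur.
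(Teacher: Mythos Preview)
Your proposal is correct and follows essentially the same route as the paper: factor $\Pr(\text{any-FP})$ as $\Pr_{\text{joint}}\cdot\bigl(1-\Pr(\text{no FP}\mid\text{joint})\bigr)$, invoke \Cref{lem:blockRowProb} in its symmetric form for the first factor, compute the second via the pattern-coincidence probability of two infected families, and finish with the substitution $\nBlocks=\nTests_2/\nMembersSymmetric$, $\nOnesPerRow=\nFamilies\nMembersSymmetric/\nTests_2$. Your explicit caveat about triple and multi-row collisions is well placed---the paper's proof implicitly makes the same two-family approximation without flagging it, so your reading of the identity as a leading-order statement in the sparse-$\nDefFamilies$ regime is the honest one.
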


$\Pr(\text{any-FP})$ can be pessimistic; a more practical metric is the average fraction of members that are misidentified (error rate):
$
    R(\text{error})\triangleq \sfrac{|\{i:\hat{\defectValue}_i\neq \defectValue_i\}|}{\nItems}.
$

\begin{lemma}
\label{lemma-ErrorRate}
For $\testmatrix_2$ as in \Cref{lemma_symmetric_c}, the error rate is calculated for models (I) and (II) as:
\begin{align}
R_I(\text{error})&< \frac{\nDefFamilies(\nMembersSymmetric-\nDefMembersSymmetric)}{\nFamilies\nMembersSymmetric}\cdot \Pr_{\text{joint}}^{I}, \label{ErrorRate-1} \\
R_{II}(\text{error})&<(1-\memberInfectRateSymmetric)\familyInfectRate\big[1-(1-\familyInfectRate)^{\nOnesPerRow-1}\big]. \label{ErrorRate-2} 
\end{align}
\end{lemma}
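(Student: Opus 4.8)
The plan is to reduce the lemma to a bound on the false‑positive (FP) probability of a single member. Two facts stated earlier do most of the work: the COMP‑style decoding of $\testmatrix_2$ produces \emph{zero} FN errors, and in the regime where $\testmatrix_1$ uses one test per family ($T_1=\nFamilies$) it flags exactly the infected families, so the columns of $\testmatrix_2$ that survive the pruning are precisely those of members of infected families. Granting these, $R(\text{error})$ counts only FP members, and by linearity of expectation $R(\text{error})=\tfrac{1}{\nItems}\sum_{\memberIndex}\Pr(\hat{\defectValue}_\memberIndex=1,\ \defectValue_\memberIndex=0)$; it therefore suffices to control the FP event of a generic member $(\familyIndex,\ell)$ and then sum. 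I would flag at the outset the assumption that $\testmatrix_1$ correctly identifies the infected families, so that mislabellings by $\testmatrix_1$ are not counted here (they are charged to the $\testmatrix_1$ stage separately).

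\textbf{Characterizing a FP via the block structure.} Next I would use \Cref{lemma_symmetric_c}: with $\nOnesPerRow_i=\nOnesPerRow$, after pruning, $\testmatrix_2$ restricted to infected families still has the block form with $\nOnesPerRow$ families per block row, and member $(\familyIndex,\ell)$ lies in exactly one test — position $\ell$ of the block row containing $\familyIndex$. That test is negative iff $\defectValue_{\familyIndex',\ell}=0$ for every infected family $\familyIndex'$ in that block row, so $(\familyIndex,\ell)$ is decoded infected iff some infected family $\familyIndex'$ in its block row (possibly $\familyIndex$ itself) has an infected $\ell$‑th member. Hence a FP at $(\familyIndex,\ell)$ requires $\defectValue_{\familyIndex,\ell}=0$ together with $\defectValue_{\familyIndex',\ell}=1$ for some other infected family $\familyIndex'\neq\familyIndex$ in the same block row; in particular that block row must contain at least two infected families.

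\textbf{The two models.} For model (I) I would bound the number of FP members above by $(\nMembersSymmetric-\nDefMembersSymmetric)$ times the number of infected families that share a block row with another infected family, since each such family contributes at most its $\nMembersSymmetric-\nDefMembersSymmetric$ healthy members. Taking expectations and noting that ``$\familyIndex$ shares a block row with another infected family'' implies ``some block row holds $\ge 2$ infected families'', each family's collision probability is at most $\Pr_{\text{joint}}^{I}$ from \Cref{lem:blockRowProb}; there are $\nDefFamilies$ infected families and $\nItems=\nFamilies\nMembersSymmetric$, which gives \eqref{ErrorRate-1}, the inequality being strict because, conditioned on a collision, typically fewer than $\nMembersSymmetric-\nDefMembersSymmetric$ positions actually flip. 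For model (II) every member has the same FP probability by the i.i.d.\ structure, so I would compute it for fixed $(\familyIndex,\ell)$ by conditioning: with probability $\familyInfectRate$ family $\familyIndex$ is infected (its column survives) and then with probability $1-\memberInfectRateSymmetric$ member $\ell$ is healthy; the block row of $\familyIndex$ has $\nOnesPerRow-1$ other families, each independently supplying an infected $\ell$‑th member with probability $\familyInfectRate\memberInfectRateSymmetric$, so none does with probability $(1-\familyInfectRate\memberInfectRateSymmetric)^{\nOnesPerRow-1}$. The exact FP probability is thus $\familyInfectRate(1-\memberInfectRateSymmetric)\bigl(1-(1-\familyInfectRate\memberInfectRateSymmetric)^{\nOnesPerRow-1}\bigr)$, and since $\familyInfectRate\memberInfectRateSymmetric\le\familyInfectRate$ this is at most $(1-\memberInfectRateSymmetric)\familyInfectRate\bigl(1-(1-\familyInfectRate)^{\nOnesPerRow-1}\bigr)$, yielding \eqref{ErrorRate-2} (strict once $\memberInfectRateSymmetric<1$, $\familyInfectRate\in(0,1)$ and $\nOnesPerRow\ge 2$).

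\textbf{Main obstacle.} I do not expect a hard estimate here; the delicate part is the bookkeeping in the characterization step and being honest about the role of $\testmatrix_1$ — one must argue that, in the regime considered, the error bounds above isolate the contribution of $\testmatrix_2$ (healthy members of infected families only), since a FP elsewhere would require $\testmatrix_1$ to have promoted a healthy family. Once that is pinned down, the rest is a union bound over infected families in model (I) and the elementary inequality $(1-ab)\ge(1-a)$ for $a,b\in[0,1]$ in model (II).
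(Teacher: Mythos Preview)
Your proof is correct. For model (I) your argument coincides with the paper's: both upper-bound the number of FPs by $k_f(M-k_m)$ on the event that some block row contains two or more infected families, giving $R_I(\text{error})<\frac{k_f(M-k_m)}{FM}\Pr_{\text{joint}}^{I}$; your intermediate step (counting only the colliding infected families) is slightly more refined but relaxes to the same bound.

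For model (II) you take a genuinely different and shorter route than the paper. The paper computes the expected number of FPs per block row by conditioning first on the number $j\ge 2$ of infected families in that row and then on the number $i$ of them whose $\ell$-th member is infected, obtaining
\[
\sum_{j=2}^{c}\binom{c}{j}q^{j}(1-q)^{c-j}\cdot M\sum_{i=1}^{j}\binom{j}{i}p^{i}(1-p)^{j-i}(j-i),
\]
simplifies the inner sum to $j(1-p)-j(1-p)^{j}$, drops the negative term, and then collapses the outer sum via $\sum_{j\ge 2}\binom{c}{j}q^{j}(1-q)^{c-j}j=cq-cq(1-q)^{c-1}$. You instead exploit the per-member symmetry directly: the FP probability of a fixed member is exactly $q(1-p)\bigl(1-(1-qp)^{c-1}\bigr)$, and a single monotonicity step $(1-qp)^{c-1}\ge(1-q)^{c-1}$ finishes. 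Your approach is cleaner and yields the exact error rate as a by-product; the paper's double-sum computation is more laborious but makes explicit how FPs accumulate across the possible infected-family configurations within a block row.
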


%


\section{Loopy belief propagation decoder}
\label{sec:LBP}
We now describe our new algorithm for decoding infection status of the individuals (and families). This is accomplished by estimating the posterior probability  of the corresponding individual  (or family) being infected via \textit{loopy belief propagation} (LBP). 
LBP computes the posterior marginals exactly  when the underlying factor graph describing the joint distribution is a tree (which is rarely the case) \cite{kschischang2001factor}. 
Nevertheless, it is an algorithm of practical importance and has achieved success on a variety of applications.
Also, LBP offers soft information (posterior distributions), which can be proved more useful than hard decisions in the context of disease-spread management.

We use LBP for our probabilistic model, because it is fast and can be easily configured to take into account the community structure \pavlos{leading to more reliable identification}.  
Many inference algorithms exist that estimate the posterior marginals, some of which have also been employed for group testing. 
For example, GAMP~\cite{zhu2020noisy} and Monte-Carlo sampling~\cite{cuturi2020noisy} yield more accurate decoders.
However, taking into account the statistical information provided by the community structure was proved not trivial with such decoders. 
Moreover, the focus of this work is to examine whether benefits from accounting for the community structure (both at the test design and the decoder) exist; 
hence we think that considering a simple (possibly sub-optimal) decoder based on LBP is a good first step; 
we defer more complex designs to future work.

We next describe the factor graph and the belief propagation update rules for our probabilistic model (II).
Let the infection status of each family $\familyIndex$ be $\defFamilyVariable_\familyIndex \sim \Ber(\familyInfectRate)$. 
Moreover, let $\defFamilyVariable(\defVariable_{\memberIndex})$ denote the family that $\defVariable_{\memberIndex}$ belongs to.
\begin{align}\label{eq:factorizedDistribution}
\Pr(\defFamilyVariable_1,&...,\defFamilyVariable_\nFamilies,\defVariable_{1},...,\defVariable_{\nItems},\testResult_{1},...,\testResult_{\nTests}) = \nonumber\\
&\prod_{\familyIndex=1}^\nFamilies \Pr(\defFamilyVariable_\familyIndex) \prod_{\memberIndex=1}^\nItems \Pr(\defVariable_{\memberIndex}|\defFamilyVariable(\defVariable_{\memberIndex})) \prod_{\testIndex = 1}^\nTests \Pr(\testResult_{\testIndex}|\defVariable_{\delta_{\testIndex}}),
\end{align}
where $\delta_{\testIndex}$ is the group of people participating in the test.
\Cref{eq:factorizedDistribution} can be represented by a factor graph, where variable nodes correspond to each random variable  $\defFamilyVariable_\familyIndex,\defVariable_{\memberIndex},\testResult_{\testIndex}$ and factor nodes correspond to $\Pr(\defFamilyVariable_\familyIndex), \Pr(\defVariable_{\memberIndex}|\defFamilyVariable(\defVariable_{\memberIndex})), \Pr(\testResult_{\testIndex}|\defVariable_{\delta_{\testIndex}})$.

Given the result of each test is $y_\testIndex$, i.e., $\testResult_{\testIndex}=y_{\testIndex}$, 
LBP computes the marginals $\Pr(\defFamilyVariable_\familyIndex=v|\testResult_1=y_1,...,\testResult_\nTests=y_T)$  and  $\Pr(\defVariable_{\memberIndex}=u|\testResult_1=y_1,...,\testResult_\nTests=y_\nTests)$, 
by iteratively exchanging messages across the variable and factor nodes.
The messages are viewed as \textit{beliefs} about that variable or distributions (a local estimate of $\Pr(\text{variable}|\text{observations})$). 
Since all random variables are binary, each message is a 2-dimensional vector. 

We use the factor graph framework from \cite{kschischang2001factor} to compute the messages:
Variable nodes $\testResult_\testIndex$ continually transmit the message $[0,1]$ if $Y_\testIndex=1$ and $[1,0]$ if $Y_\testIndex = 0$ on its incident edge, at every iteration.
Each other variable node ($\defFamilyVariable_\familyIndex$ and $\defVariable_{\memberIndex}$) uses the following rule: 
for incident each edge $e$, the node computes the elementwise product of the messages from every other incident edge $e'$ and transmits this along $e$.
For the factor node messages, we derive closed-form expressions for the sum-product update rules (akin to equation (6) in \cite{kschischang2001factor}). 
\pavlos{The exact messages are described in \Cref{app:sec:LBP}.}


\section{Numerical evaluation}\label{section-experiments} 

In this section, we evaluate the benefits (in terms of number of tests and error rate) from taking the community structure into account in practical scenarios, where noiseless or noisy tests are used.

\textbf{Experimental setup I: Symmetric.}
In our simulations, we consider $2$ different use cases about the community structure:
(Community 1) a neighborhood with $\nFamilies = 200$ families of $\nMembersSymmetric = 5$ members each, and
(Community 2) a university department with $\nFamilies = 20$ classes of $\nMembersSymmetric = 50$ students each. 
In each use case, 
we also examine $2$ different infection regimes:
(a) a linear regime, where $\sfrac{\bar{\nDef}}{\nItems} = 0.1$; and
(b) a sparse regime, where $\bar{\nDef} = \sqrt{\nItems} = 32$.
Finally, we consider both noiseless tests that have perfect accuracy and noisy tests that follow the Z-channel  model from \Cref{subsec:Noise-Error}.
For each scenario, 
we average over $500$ randomly generated community structures, in which the members/students are infected according to the symmetric probabilistic model (II):
first a family/class is chosen at random w.p.~$\familyInfectRate$ to be infected and then each of its members/students gets randomly infected w.p.~$\memberInfectRateSymmetric$.

\textbf{Results.} Our results were similar in all scenarios; 
for brevity, we show here only the sparse regime. Further results can be found in the Appendix of the supplementary submitted document.

\begin{figure}
	\vspace{-0.2cm}
	\centering
	\captionsetup{justification=centering}	
	\includegraphics[ height=3cm, width = 0.4\textwidth]{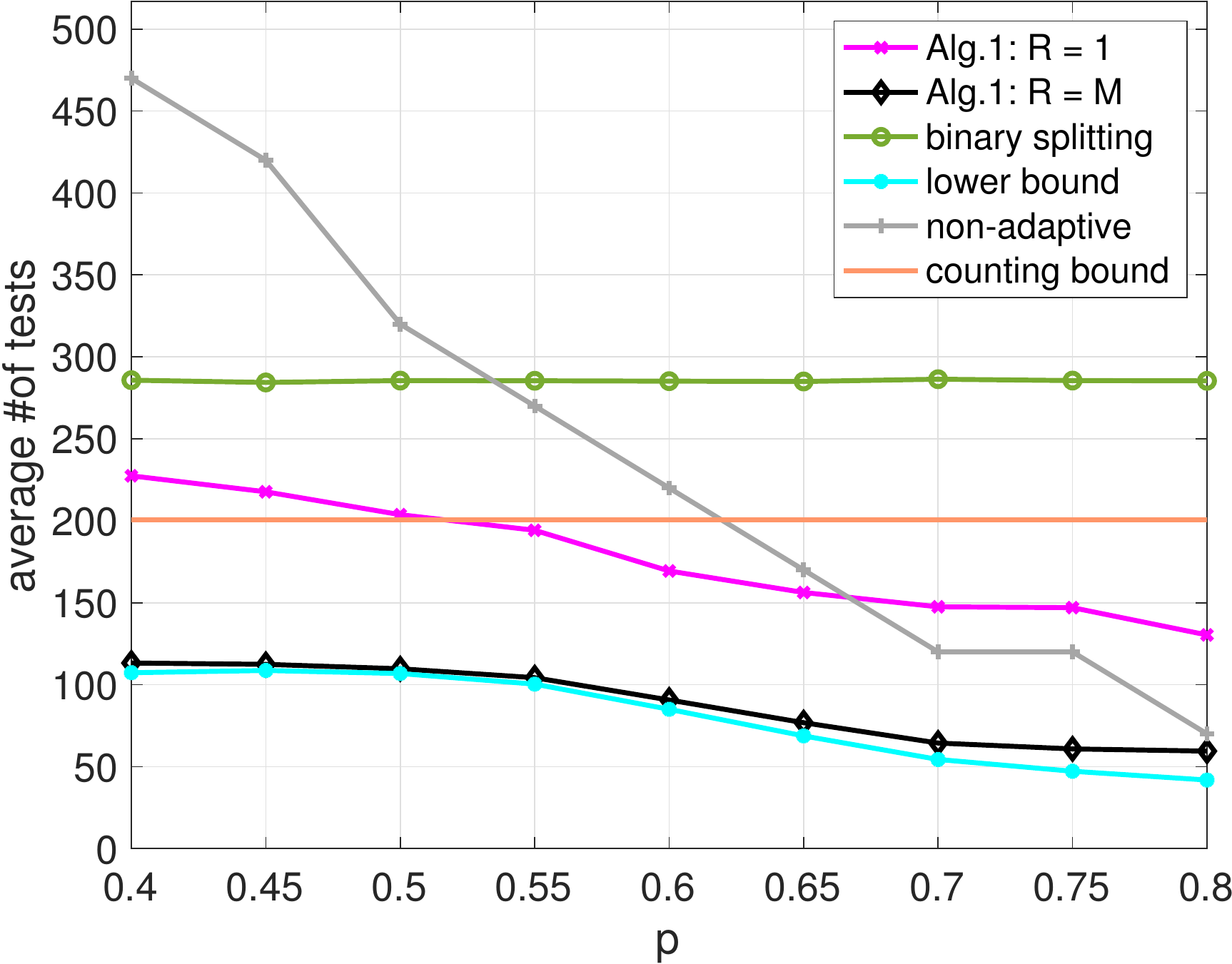}
	\vspace{-0.2cm}
	\caption{Noiseless case: Average number of tests.}
	\label{fig:nTests}
	\vspace{-0.2cm}
\end{figure}

\begin{figure*}[tbh!]
	\begin{minipage}{0.33\textwidth}
	\centering
	\captionsetup{justification=centering}		
	\includegraphics[ height=3cm, width = \textwidth]{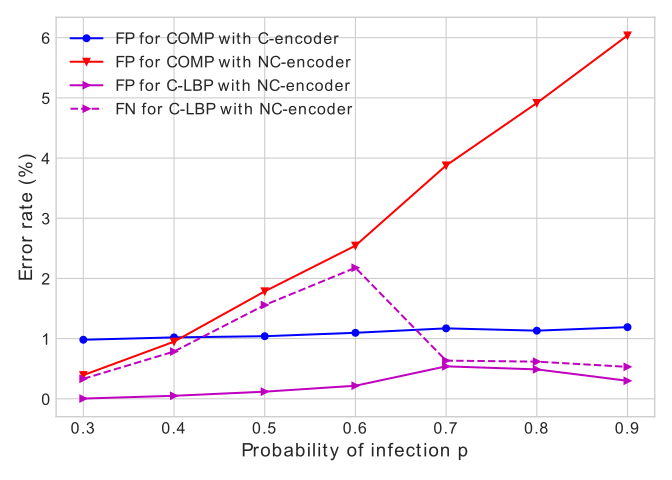}
	\vspace{-0.4cm}
	\caption{Noiseless case: Average error rate with few tests.}
	\label{fig:noiseless-p}
	\end{minipage}\hfill
	\begin{minipage}{0.33\textwidth}
	\centering
	\captionsetup{justification=centering}		
	\includegraphics[ height=3cm, width = \textwidth]{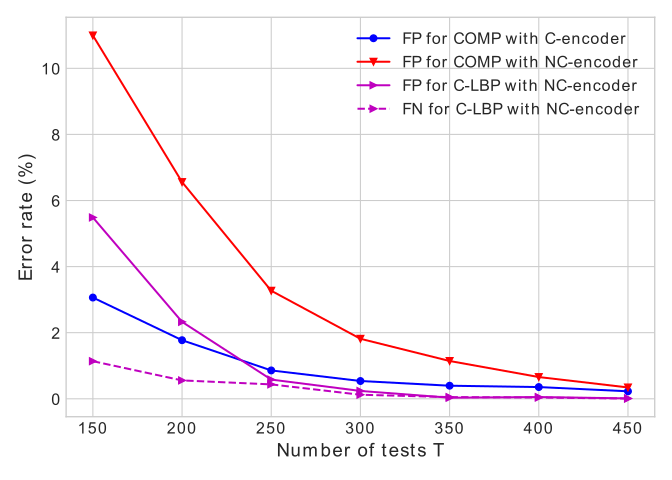}
	\vspace{-0.4cm}
	\caption{Noiseless case: Average error rate ($\memberInfectRateSymmetric = 0.6$).}
	\label{fig:noiseless-T}
	\end{minipage}\hfill
	\begin{minipage}{0.33\textwidth}
	\centering
	\captionsetup{justification=centering}	
	\includegraphics[ height=3cm, width = \textwidth]{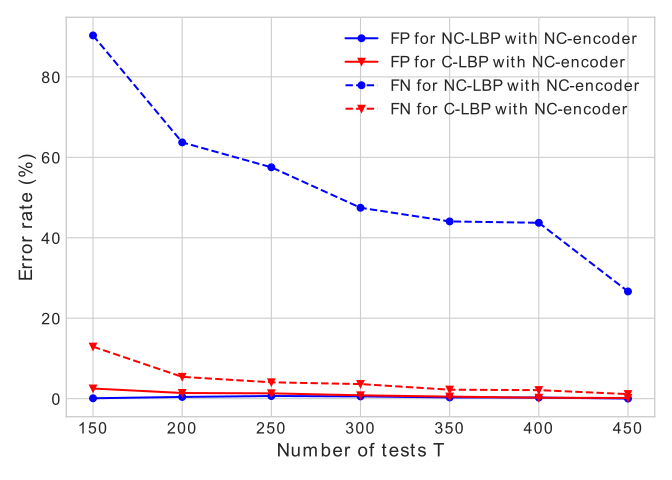}
	\vspace{-0.4cm}
	\caption{Noisy case: Average error rate ($\memberInfectRateSymmetric = 0.8$).}
	\label{fig:noisy}
	\end{minipage}\hfill
\vspace{-0.3cm}
\end{figure*}

(i)~\textit{Noiseless testing -- Average number of tests:}
In this experiment, we measure the average number of tests needed by 3 algorithms that achieve zero-error reconstruction
(Alg.~\ref{algorithm-noiseless} with $\nSamples = 1$, Alg.~\ref{algorithm-noiseless} with $\nSamples = \nMembersSymmetric$,
and
classic BSA),
and a nonadaptive algorithm (\Cref{sec:nonadapt}) that uses $\nTests_1 = \nFamilies$ tests for $\testmatrix_1$ and has FP rate around $0.5\%$.
Alg.~\ref{algorithm-noiseless} assumes no prior knowledge of the number of infected families/classes or members/students, 
hence uses BSA for the $\adapt()$.

Fig.~\ref{fig:nTests} depicts our results about Community 2 and for $\memberInfectRateSymmetric \in [0.4,0.8]$.
Both versions of Alg.~\ref{algorithm-noiseless} need significantly fewer tests compared to classic BSA,
while staying below the counting bound.
This indicates the potential benefits from the community structure,
even when the number of infected members is unknown.
More interestingly, when $\nSamples = \nMembersSymmetric$, Alg.~\ref{algorithm-noiseless}
performs close to the lower bound in most realistic scenarios $\memberInfectRateSymmetric \in [0.5,0.8]$ (as also shown in \Cref{sec:adapt}). 
The relevant result in the linear regime, was slightly worse:~50-70 tests above the lower bound.
Last, the grey line shows number of tests needed by our
nonadaptive algorithm; we observe that even that algorithm can perform better than BSA, when $\memberInfectRateSymmetric > 0.55$ and small FP rates are tolerated. 

(ii)~\textit{Noiseless testing -- Average error rate:}
We here quantify the additional cost in terms of error rate, 
when one goes from a two-stage adaptive algorithm that achieves zero-error identification to much faster single-stage nonadaptive algorithms. 
In each run, we first run 
our two-stage algorithm (\Cref{section-two-stage}) that uses a classic constant-column-weight test design at each stage and measure the number of tests it requires to achieve zero errors.
Then, we use the \textit{same} number of tests to infer the members' infection status through 2 nonadaptive algorithms that account for the community structure either at the test matrix (encoding) part or the decoding and a traditional one that does not consider it at all:
``COMP with C-encoder'' is our nonadaptive algorithm that uses a COMP decoder as described in \Cref{sec:nonadapt};
``C-LBP with NC-encoder'' is an algorithm that uses classic constant-column-weight test design combined with our LBP decoder form \Cref{sec:LBP};
and ``COMP with NC-encoder'' is a traditional nonadaptive algorithm, that we use as a benchmark and uses a constant-column-weight test matrix with a COMP decoder.
``C'' denotes that the community is taken into account, while ``NC'' denotes that it is ignored.
It is important to note that the number of tests needed by the two-stage algorithm (and therefore all other algorithms) gets lower as $\memberInfectRateSymmetric$ gets large, something that affects the results (as discussed further below).

Fig.~\ref{fig:noiseless-p} depicts the FP and FN error rates\footnote{FN rate is the percentage of {\em infected} individuals identified as negative and vice versa for FP.} (averaged over $500$ runs)
as a function of $\memberInfectRateSymmetric \in [0.3,0.9]$ for Community~1. 
We observe that any community-aware nonadaptive algorithm performs better than traditional nonadaptive group testing (red line) when $\memberInfectRateSymmetric > 0.4$---the absolute performance gap ranges from $0.4\%$ (when $\memberInfectRateSymmetric = 0.3$) to $5.5\%$ (when $\memberInfectRateSymmetric = 0.9$).
``COMP with C-encoder'' has a stable FP rate across for all $\memberInfectRateSymmetric$ values that was close to $1\%$, and a zero FN rate by construction.
Our LBP decoder, may yield both FN and FP errors. Also, being an approximate inference algorithm, it may produce worse results than COMP when $\memberInfectRateSymmetric \in [0.42,0.67]$, but performs better when the infection rate is higher. 

Fig.~\ref{fig:noiseless-T} examines the effect of the number of tests.
Starting from the average number of tests used by the two stage algorithm when $\memberInfectRateSymmetric=0.6$, we compute the FP and FN rates for larger numbers of tests.
Our experiment shows a transition around $\nTests = 240$, after which point ``C-LBP with NC-encoder'' performs better than ``COMP with C-encoder''. 
In fact, ``COMP with C-encoder'' seems to converges to zero FP errors much slower. 
This result was common for other $\memberInfectRateSymmetric$ values, the transition just occurred at different $\nTests$. 
We therefore conclude that one may use our ``COMP with C-encoder'' when the number of tests available is limited or they just want to use a simple decoder;
otherwise if the testing budget is larger, one should better go with ``C-LBP with NC-encoder''.

(iii)~\textit{Noisy testing:}
Assuming the Z-channel noise of \Cref{subsec:Noise-Error} with parameter $\znoiseProb = 0.15$, we evaluate the performance of our community-based LBP decoder of \Cref{sec:LBP} against a LBP that does not account for community---namely its factor graph has no $\defFamilyVariable_\familyIndex$ nodes. 

Fig.~\ref{fig:noisy} depicts our results for Community 1 and
for a selected $\memberInfectRateSymmetric = 0.8$ and a number of tests as given from the two-stage algorithm of the previous experiments. 
We observe that the knowledge of the community structure (in C-LBP) reduces both FP and FN rates achieved community-unaware NC-LBP.
Especially, FN error rates drop significantly (up to $80\%$ when tests are few), which is important in our context since FN errors lead to further infections.
Our results were similar for other $\memberInfectRateSymmetric$ values as well.

\textbf{Experimental setup II: Asymmetric.}
\begin{figure}[t!]
	\vspace{-0.2cm}
	\centering
	\captionsetup{justification=centering}	
	\includegraphics[ height=3.5cm, width = 0.35\textwidth]{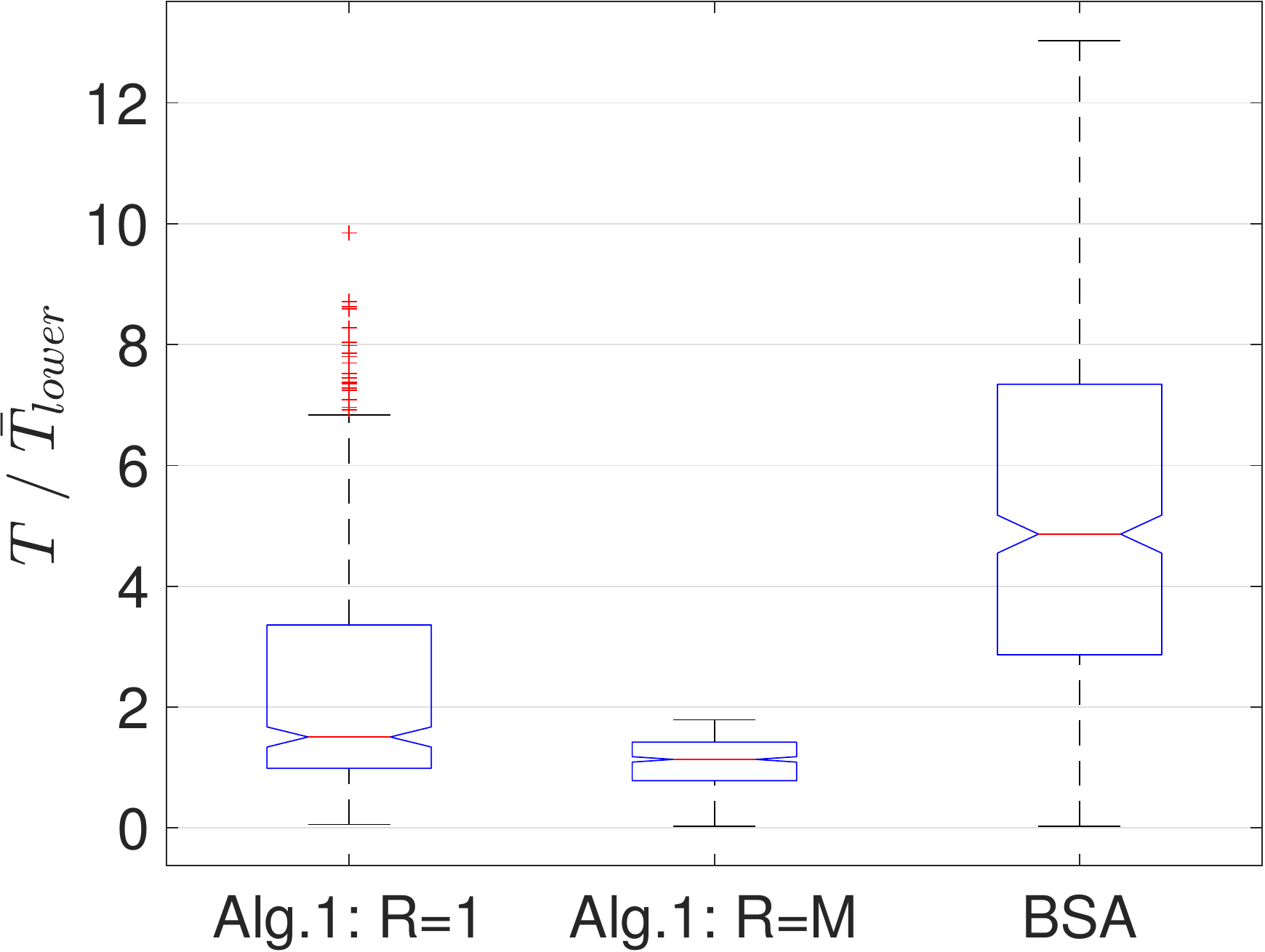}
	\vspace{-0.2cm}
	\caption{Asymmetric case: \SundarComment{Ratio of the number of tests needed to the lower bound \eqref{lem:expectedTests:probabilistic}}.}
	\label{fig:assymteric}
	\vspace{-0.2cm}
\end{figure}
\pavlos{In our asymmetric setup, infections follow 	again the probabilistic model (II), but this time for each family $\familyIndex$, $\nMembers$ and $\memberInfectRate$ are selected uniformly at random from the intervals $[5, 50]$ and $[0.4, 0.8]$, respectively.} 
	
\SundarComment{ Fig.~\ref{fig:assymteric} is a box plot depicting our results for the sparse regime ($\familyInfectRate = 3\%$) over 500 randomly generated instances, as described above. The middle line in the box represents the mean and the ends of the box represent the lower and upper quartiles respectively. The crosses represent outlier points.}
BSA needs on average $5.23\times$ (that can
reach up to $13\times$) more tests compared to the probabilistic bound, while the two versions of \Cref{algorithm-noiseless} with $\nSamples = 1$ and $\nSamples = \nMembersSymmetric$ need only $2.4\times$ and $1.11\times$ (that can reach up to
$9.85\times$ and $1.8\times$) more tests, respectively.
Also, the significantly smaller range between the $25$-th and $75$-th percentiles of the boxplots related to \Cref{algorithm-noiseless} indicate a more predictable performance w.r.t. BSA.

\section{Conclusions}\label{section-conclusions}

\christina{The new observation we make in this paper is that taking into account infection correlations, as dictated by a known community structure,  enables to reduce the number of group tests required to identify the infected members of a population and can improve the identification accuracy when the number of tests is fixed. 

In this paper we make this point assuming a nonoverlaping  community structure, a specific noise model and binary group testing. We considered a combinatorial and probabilistic model, derived lower bounds on the number of tests needed, explored adaptive, two-stage and non-adaptive algorithms for the noiseless case, as well as algorithms for the noisy case. 
	Our algorithms are not always optimal w.r.t.\ the lower bounds, but perform significantly better than community-agnostic group testing; 
	per our experiments, they need upto $55-75\%$ fewer tests (on average) to achieve the same identification accuracy.

We posit that such benefits are possible in a number of other community or noise or group test models; as an example, the followup work in \cite{icc-paper} illustrates benefits when the families overlap. Understanding what are benefits  in more sophisticated models remains as an open question.  }

\clearpage
\pavlos{\section*{Acknowledgments}
This work was supported in part by NSF grants \#2007714, \#1705077 and UC-NL grant LFR-18-548554.
We would also like to thank Katerina Argyraki for her ongoing support and the valuable discussions we have had about this project, as well as the anonymous reviewers (especially Reviewer 1) for their constructive comments that helped improve our paper.} 

\balance
\bibliography{bibliography}

\clearpage
\twocolumn[
\aistatstitle{Appendix}]
\appendix
\numberwithin{equation}{section}


\section{Appendix for \Cref{sec:lower-bounds}: The lower bounds}
\label{app:sec:lower-bounds}
\subsection{Proof of \Cref{thm:combinatorialBound}}
\begin{proof}
	Ineq.~\eqref{eq:combinatorialBound} is because of the following counting argument:
	There are only $2^\nTests$ combinations of test results. 
	But, because of the community model I, there are $\binom{\nFamilies}{\nDefFamilies} \cdot \prod_{\familyIndex = 1}^{\nDefFamilies} \binom{\nMembers}{\nDefMembers}$ possible sets of infected members that each must give a different set of results. Thus, $$2^\nTests \ge \binom{\nFamilies}{\nDefFamilies} \cdot \prod_{\familyIndex = 1}^{\nDefFamilies} \binom{\nMembers}{\nDefMembers},$$ which reveals the result.
	The RHS of the latter inequality is because 
	there are $\binom{\nFamilies}{\nDefFamilies}$ combinations of infected families, 
	and for each infected family $\familyIndex$, there are $\binom{\nMembers}{\nDefMembers}$ possible combinations of infected family members---hence for each combination of $\nDefFamilies$ infected families, 
	there are $\prod_{\familyIndex = 1}^{\nDefFamilies} \binom{\nMembers}{\nDefMembers}$ possible combinations of infected family members.  The symmetric bound is obtained as a corollary by taking $\nMembers = \nMembersSymmetric$ and $\nDefMembers = \nDefMembersSymmetric$ for each infected family $\familyIndex$.
\end{proof}

\subsection{Proof of \Cref{thm:probabilisticBound}}

\begin{proof}
Let 
$\defFamilyVariables$ be the indicator random vector for the infection status of all families.
By rephrasing~\cite[Theorem 1]{prior}, any probabilistic group testing algorithm using $\nTests$ noiseless tests can achieve a zero-error reconstruction of $\defVector$ if:
	\begin{equation}
	\nTests \ge \entropy(\defVector) = \entropy(\defFamilyVariables) + \entropy(\defVector|\defFamilyVariables) - \entropy(\defFamilyVariables|\defVector).
	\end{equation} 	
	The first term is:
	    $\entropy({\defFamilyVariables}) = \sum_{\familyIndex=1}^\nFamilies \entropy(\defFamilyVariable_\familyIndex) = \nFamilies \binEntropy(\familyInfectRate).$
	
	\noindent The second term is calculated as:
	\begin{align*}
	    &\entropy({\bf U}|{\defFamilyVariables}) = \sum_{\vertexIndex=1}^n \entropy(U_v|\defFamilyVariable_{\edgeSet_\vertexIndex}) \\
	     &=\sum_{\vertexIndex=1}^n \sum_{x \in \{0,1\}}\Pr({\defFamilyVariable}_{\edgeSet_\vertexIndex}={x}) \entropy(U_v|\defFamilyVariable_{\edgeSet_\vertexIndex}=x)\\
	    &= \sum_{\vertexIndex=1}^n \left(
	    \familyInfectRate \entropy(U_v|\defFamilyVariable_{\edgeSet_\vertexIndex}=1) + (1-\familyInfectRate) \entropy(U_v|\defFamilyVariable_{\edgeSet_\vertexIndex}=0)\right )\\
	    &= \sum_{\vertexIndex=1}^n 
	    \familyInfectRate \binEntropy(p_{\edgeSet_\vertexIndex}) \pavlos{= \familyInfectRate \sum_{\familyIndex = 1}^{\nFamilies} \nMembers \binEntropy(\memberInfectRate)} 
	    ,
	\end{align*}
	where $\edgeSet_\vertexIndex$ is the family containing vertex $v$.
	
	\noindent Finally, we compute the third term as:
	\begin{align*}
	    &\entropy(\defFamilyVariables|\defVector) = \sum_{\familyIndex=1}^\nFamilies \entropy(\defFamilyVariable_\familyIndex|\defVector)
	    = \sum_{\familyIndex=1}^\nFamilies \entropy(\defFamilyVariable_\familyIndex|\defVector_{\mathcal S_j})\\
	    &= \sum_{\familyIndex=1}^\nFamilies \Pr(\defVector_{S_j}=\mathbf 0) \binEntropy(\Pr(\defFamilyVariable_\familyIndex=0|\defVector_{S_j}=\mathbf 0))\\
	    &= \sum_{\familyIndex=1}^\nFamilies (1-\familyInfectRate+\familyInfectRate(1-\memberInfectRate)^{|S_j|}) \binEntropy\left (\frac{1-\familyInfectRate}{1-\familyInfectRate+\familyInfectRate(1-\memberInfectRate)^{|S_j|}}\right )
	\end{align*}
	where $S_j$ is the set of members who belong to family $\familyIndex$ \pavlos{and $|S_j| = \nMembers$.} 
	Combining all the 3 terms concludes the proof.
\end{proof}

\section{Appendix for \Cref{sec:adapt}: The noiseless adaptive case}
\label{app:sec:Adaptive}
\subsection{Rationale for Alg.~\ref{algorithm-noiseless}}
\pavlos{Group testing already has a rich body of literature with near-optimal test designs in the case of independent infections, we do not try to improve upon them. 
	Instead, we adapt these ideas
	to incorporate the correlations arisen from the community structure.
	All test designs described in this section are conceptually divided into two parts. 
	This split is guided by the community structure and attempts to identify the different infection regimes inside the community, 
	so that the best testing method (individual or classic group testing) is used.
	We show that such a two-part design is enough to significantly reduce the cost of group testing and also achieve the lower bound in some cases.}

\textbf{Two-part design:}
Two parts of \Cref{algorithm-noiseless} serve complimentary goals:

The goal of Part 1 is to detect the infection \textit{regime} inside each family $\familyIndex$: i.e., to accurately estimate  
which of the $\nFamilies$ families have a high infection rate (``heavily'' infected) and which are  have a low or zero infection rate (``lightly'' infected).
Our interest in detecting the infection regime is motivated by prior work~\cite{LinBndIndv1,LinBndIndv2}, 
which has shown that group testing offers benefits over individual testing, only if the infection rate is low ($\memberInfectRate \le \boundRegime$).
This allows us to define the two regimes as follows:
In the combinatorial model I (resp. probabilistic model II), 
a is considered heavily infected when $\sfrac{\nDefMembers}{\nMembers} \ge \boundRegime$ (resp. $\memberInfectRate \ge \boundRegime$);
conversely, it is considered lightly infected family when $\sfrac{\nDefMembers}{\nMembers} < \boundRegime$ (resp. $\memberInfectRate < \boundRegime$).

For each family $\familyIndex$, 
we regard $\hat{\defVariable}_{\mixedSample(\sampleSet_\familyIndex)}$ as an estimate of the family's infection regime.
If $\hat{\defVariable}_{\mixedSample(\sampleSet_\familyIndex)}$ is positive, 
we consider the family to be highly infected and therefore perform individual testing for all of its members. 
Otherwise, if $\hat{\defVariable}_{\mixedSample(\sampleSet_\familyIndex)}$ is negative, we consider the family to be lightly infected and group test its members with all other lightly infected families. 
The challenge is therefore to produce accurate enough regime estimates, 
such that the overall number of tests that are needed from Alg.~$\ref{algorithm-noiseless}$ to achieve exact infection-status reconstruction for all members $\memberIndex = 1,\ldots,\nItems$ is minimal.
We discuss this challenge further below.  

Given all estimates $\hat{\defVariable}_{\mixedSample(\sampleSet_\familyIndex)}$ from Part 1, 
the goal of the Part 2 is then to identify all infected members, 
by using the appropriate testing method (group or individual testing)
according to the infection regime of each family (light or heavy).
In this way, at the end of Part 2, the algorithm returns an estimate $\hat{\defVariable}_\memberIndex$ of the true infection status $\defVariable_\memberIndex$ of each individual member $\memberIndex$.

\begin{figure}
    \centering
    \includegraphics[width = 7cm]{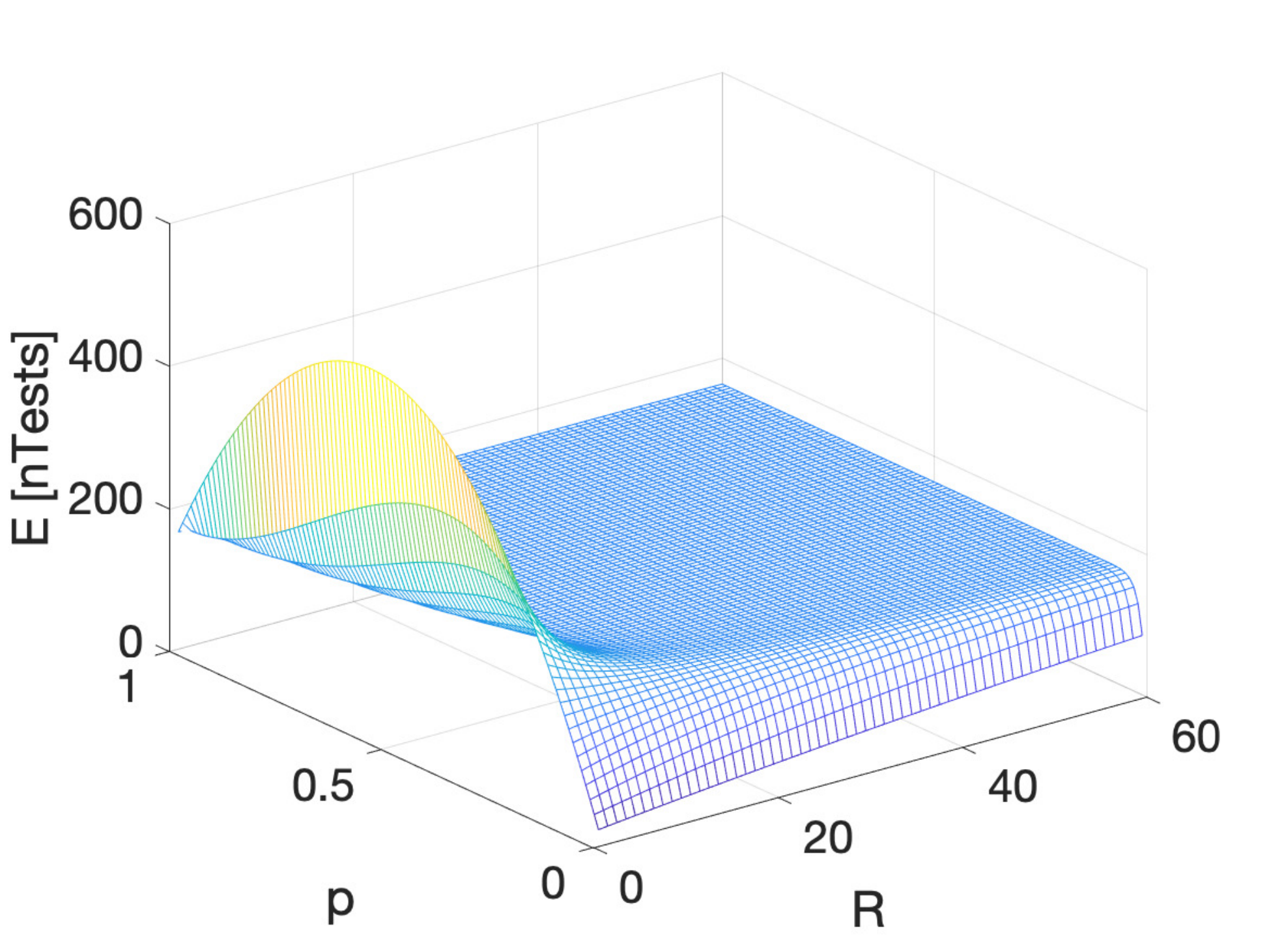}
    \caption{Expected number of tests from \eqref{eq:expectedTests:prob:binary} as a function of size of representative set and probability of infection inside a family.}
    \label{fig:3Dplot}
\end{figure}

\textbf{Selection of family representatives:}
Function $\createSample()$ at line 2 refers to \textit{any} sampling function on a set of family members,  
as long as it returns a fixed number of members from family $\familyIndex$.
That is, one may use their own sampling function, 
as long as the accuracy of Part 1 is well defined. 
In this paper, we consider only random-sampling functions without replacement (i.e. $|\sampleSet_\familyIndex|$ members are randomly chosen from the family members and each subset of that size has the same probability of being selected as the representative subset).
But perhaps, more elaborate sampling functions may be considered 
in other contexts. \christina{For example, if the internal structure of family $\familyIndex$ can be represented through a contact graph,
in which only specific family nodes have external contacts with other families, 
it may make sense to include (some of) these nodes into the representative group with certainty.}

When only one mixed sample per family is used to identify the heavily/lightly infected families, the cardinality of the representative subset $|\sampleSet_\familyIndex|$ is essential, but the optimal choice of it is not trivial.
$|\sampleSet_\familyIndex|$ affects the accuracy of regime estimate---hence the performance of our algorithm in terms of the expected number of tests that it uses.
Unfortunately, choosing the number of representatives optimally is not easy even in the symmetric case that is examined in \Cref{sec:adapt}.
Ideally, in the symmetric case, we would like to choose $|\sampleSet_\familyIndex| = \nSamples$ such that the bounds in Lemmas~\ref{lem:expectedTests:combinatorial} and~\ref{lem:expectedTests:probabilistic} are minimized. 
However, this requires solving equations of the form ${ye^{y}=x}$, 
which is generally possible through Lambert functions for $x \ge -\frac{1}{e}$, but the latter does not hold in our case. 
Fig.~\ref{fig:3Dplot} demonstrates that there exists no unique $\nSamples$ that is optimal for any infection probability $\memberInfectRateSymmetric$ in $(0,1)$ through an example of $\nFamilies = 50$ families with $\nMembersSymmetric = 60$ members each.
The figure plots the bound of Lemma~\ref{lem:expectedTests:probabilistic} as a function of $\memberInfectRateSymmetric$ and $\nSamples$.
As we can see, there is no single minimizer $\nSamples^{\star}$: if $\memberInfectRateSymmetric < 0.15$, then $\nSamples$ must be picked equal to $0$ (which yields traditional group testing); otherwise, if $\memberInfectRateSymmetric > 0.15$, then $\nSamples$ must be selected equal to $\nMembersSymmetric$.

Therefore, in order to optimally choose $\nSamples$, a rough estimate about $\memberInfectRateSymmetric$ has to be known a priori. 
If the latter is not possible, then one may use a few more tests at the first stage of our algorithm to better detect whether a family is heavily infected. We provide such an optimization in the next section.

\textbf{Function $\adapt()$:}
In both parts of our algorithm, we make use of a classic adaptive-group-testing algorithm, which we call $\adapt()$. 
This may be regarded as an abstraction for any existing (or future) adaptive algorithm in the group-testing literature. 
In our analysis, however, we mostly focus on the classic binary splitting algorithm 
because of its good performance in realistic cases, 
where the numbers of infected families and/or members ($\nDefFamilies$, $\nDefMembers$) are unknown~\cite{binSplitting}. 

In this section, we consider only adaptive algorithms that offer noiseless (zero-error) reconstruction.
Note, however, the fact that $\adapt()$ offers exact reconstruction is not enough to guarantee an accurate detection of any family's infection regime in Part 1.
For example, consider the following case, 
where the true infection rate within a family $\familyIndex$ is not very low (say $\memberInfectRate = 0.6$), 
yet none of the family representative in set $\sampleSet_\familyIndex$ happened to be infected. 
Intuitively, the error probability of detection in Part 1 should depend on the number of selected representatives $|\sampleSet_\familyIndex|$ from each family $\familyIndex$
and the infection rate among its members $\memberInfectRate$. 
In our analysis, we examine different scenaria w.r.t.~these parameters and discuss which parametrization (i.e. value of $|\sampleSet_\familyIndex|$) optimizes the expected number of the tests required by our algorithm. 

\subsection{Modified/Optimized versions of Alg.~\ref{algorithm-noiseless}}
\label{appdx:sec:optimized-versions}
\noindent$\bullet$ One modification of our algorithm is the following:
In Part 1, instead of selecting only one representative group for each family, 
we select $m_s$ representative subgroups, each of size $s$, 
and we treat each of these subgroups as a single ``(super)-member''.
That is, we identify whether each subgroup is positive (has at least one positive member) or not, and based on this information, \christina{using for example majority vote,} we can classify the family as heavily or lightly infected; \christina{ essentially we can solve an estimation problem as in \cite{GroupTestingMonograph} (see Chapter 5.3), \cite{walter1980estimation, sobel1975group}}.
In this regard, Alg.~\ref{algorithm-noiseless} is just a special case of this approach, with $m_s=1$ and $s=|\sampleSet_\familyIndex|$.


Intuitively, we expect that such a modification would increase the estimation accuracy of $\hat{\memberInfectRate}$ and reduce the error of the related hypothesis test, at the cost of few more tests.
As a result, it could need fewer tests on expectation than Alg.~\ref{algorithm-noiseless}, 
hence perform better in some cases. 
However, the potential improvement \christina{ would depend on parameters such as the family size - for instance for small size families it {is not expected to be large.}}
To keep things simple, we prefer not to analyze this algorithm in this paper and defer it to future work.

\noindent$\bullet$ Another modification could be the following: 
instead of leveraging the community structure to perform individual tests where needed, 
we could use it to improve traditional binary splitting algorithm by running it on multiple testing groups that are related to the community structure.
For example, consider a symmetric case
where: 
we split all $\nItems = \nFamilies\nMembersSymmetric$ members into $\nMembersSymmetric$ groups of $\nFamilies$ people (one from each family),
then run binary splitting to each of these groups. 

This modification is also related to Hwang's binary splitting algorithm, but achieves only logarithmic benefits compared to binary splitting, as opposed to our algorithm that may perform much better in real cases (see \Cref{sec:adapt}).
In fact, the expected number of tests needed by this modified algorithm would be at most $\nDef \log_2{(\sfrac{\nItems}{\nMembersSymmetric})} + O(\nDef)$:
each group $g$ has $\nDef_g$ infected member and binary splitting needs $\nDef_g \log_2{(\sfrac{\nItems}{\nMembersSymmetric})} + O(\nDef_g)$ tests to identify all of them. By adding together the number of tests for each group $g$, we deduce the result.

\pavlos{\noindent$\bullet$ A last modification occurred to us after a related comment of one of our reviewers, who we thank.
	As discussed in \Cref{sec:adapt}, when a sparse regime holds for families 
	(i.e. $\nDefFamilies = \Theta(\nFamilies^\sparseRegimeFamilyPar)$ for $\sparseRegimeFamilyPar \in [0,1)$) 
	and a heavily linear regime holds within each family
	(i.e. $\nDefMembersSymmetric \approx \nMembersSymmetric$)\footnotetext{The symmetric example is only used here only to better illustrate the advantages of the modification proposed. The idea is similar for the asymmetric case.}, 
	the  benefits of \Cref{lem:expectedTests:combinatorial} with regard to Hwang’s binary splitting (HBSA) cannot be more than $1/\log(\nItems/\nDef)$.
	This is because, in Eq.~\eqref{eq:expectedTests:comb:Hwang}, we get the additive term $\nDefFamilies \nMembersSymmetric > \nFamilies \nMembersSymmetric = \nDef$, which comes from the second stage of \Cref{algorithm-noiseless}.
	
	Nevertheless, if $\nDefMembersSymmetric > \nMembersSymmetric - \nDefMembersSymmetric$ (i.e., the infection rate inside each family is more than $0.5$), then at the second stage of our algorithm it makes more sense to look for not-infected members and stop testing once we find them. 	
	In that case, we need at least $\nDefFamilies * (\nMembersSymmetric - \nDefMembersSymmetric)$ tests, which can be less than k, and therefore could lead to more benefits on average. 
	
	For example, consider the case where $\nDefMembersSymmetric = \nMembersSymmetric-1$. 
	Then the expected number of individual tests needed to find the 1 not-infected member inside each infected family can be computed as follows:
	Without loss of generality, suppose that we test the members at some fixed ordering without replacement and the not-infected member has a uniformly random position in that ordering. Then, the probability of the not-infected item being at a given position $i$ in the ordering is equal to $1/\nMembersSymmetric$ and we need $i$ tests to find it. As a result, the expected number of tests is $\sum_{i=1}^{\nMembersSymmetric} i * 1/\nMembersSymmetric = (\nMembersSymmetric+1)/2$. From linearity of expectation, the expected number of tests for all infected families at the second stage of our algorithm (if we further assume that all infected families are identified without error at the first stage---i.e., $\fracHeavInfectedComb= 1$) will be: $\nDefFamilies * (\nMembersSymmetric+1)/2 < \nDefFamilies * (\nMembersSymmetric-1) = \nDefFamilies*\nDefMembersSymmetric = \nDef$.
	Hence, is this particular regime, the modification of our algorithm can achieve benefits more than $1/\log(\nItems/\nDef)$. 
	
	In the more general case, where $\nMembersSymmetric-\nDefMembersSymmetric > 1$, the relevant probabilities for the computation of the expected number of tests can be obtained from the negative hypergeometric distribution (since sampling is without replacement).
	
	In the extreme case, where for each infected family $\nDefMembersSymmetric$ is known and equal to $\nMembersSymmetric$, all we need to do is to identify the infected families and label all their members as infected. In that case the benefit would be $\nDefFamilies/\nDef$.  
	Note, that to achieve these higher benefits described above, the knowledge of the number of infected members per family is required, but this is also the case for HBSA.	
	
}

\subsection{Proof of \Cref{lem:expectedTests:combinatorial}}
\begin{proof}
	Let $\fracHeavInfectedComb$ be the expected fraction of infected families whose mixed sample is positive. 
	Since $\createSample()$ is uniform random sampling without replacement, we can compute $\fracHeavInfectedComb$ when $1 \le \nSamples \le \nMembersSymmetric - \nDefMembersSymmetric$ using the hypergeometric distribution $Hyper(\nMembersSymmetric,\nDefMembersSymmetric,\nSamples)$, as follows:
	the probability of a random mixed sample $\mixedSample(\sampleSet_\familyIndex)$ being negative (i.e. all members of $\sampleSet_\familyIndex$ are negative) is given by the PMF of $Hyper(\nMembersSymmetric,\nDefMembersSymmetric,\nSamples)$ evaluated at $0$, and it is therefore equal to $\sfrac{\binom{\nMembersSymmetric - \nDefMembersSymmetric}{\nSamples}}{\binom{\nMembersSymmetric}{\nSamples}}$, which yields $\fracHeavInfectedComb = 1-\sfrac{\binom{\nMembersSymmetric - \nDefMembersSymmetric}{\nSamples}}{\binom{\nMembersSymmetric}{\nSamples}}$.
	We also define the following for completeness: $\fracHeavInfectedComb = 0$ when $\nSamples = 0$ and $\fracHeavInfectedComb = 1$ when $\nMembersSymmetric - \nDefMembersSymmetric < \nSamples \le \nMembersSymmetric $.
	
	Fixing the number of positive mixed samples in Part 1 of Alg.~\ref{algorithm-noiseless} to its expected value: $\nDefFamilies \cdot \fracHeavInfectedComb$, we now compute the maximum number of tests needed by the algorithm to succeed.
	
	Alg.~\ref{algorithm-noiseless} performs testing at lines 4, 8, 13.
	
	$\bullet$ At line 4, it identifies the positive mixed samples to mark the corresponding families as heavily infected and all others as lightly infected. 
	If HGBSA is used for $\adapt()$, 
	then Alg.~\ref{algorithm-noiseless} is expected to succeed at this step using $\nDefFamilies \fracHeavInfectedComb \log_2{\frac{\nFamilies}{\nDefFamilies \fracHeavInfectedComb}} + \nDefFamilies \fracHeavInfectedComb$ tests.
	Similarly, if BSA is used for $\adapt()$, 
	then then Alg.~\ref{algorithm-noiseless} is guaranteed to succeed at this step using at most $\nDefFamilies \fracHeavInfectedComb \log_2{\nFamilies} + \nDefFamilies \fracHeavInfectedComb$~\cite{GroupTestingMonograph,capacity-adaptive}.
		
	$\bullet$ At line 8, the expected number of individual tests is equal to:
	$\nMembersSymmetric \nDefFamilies \fracHeavInfectedComb$. 
	This is the same irrespectively from whether $\adapt()$ is binary splitting or Hwang's algorithm as it only depends on $\fracHeavInfectedComb$.
	
	$\bullet$ At line 13, the expected number of items that are tested is:
	$\nItems - \nDefFamilies \fracHeavInfectedComb \nMembersSymmetric$,
	and the expected number of infected members is:
	$\nDef - \nDefFamilies \fracHeavInfectedComb \nDefMembersSymmetric = \nDef \left(1-\fracHeavInfectedComb\right)$.
	So, if HGBSA is used for $\adapt()$, 
	then Alg.~\ref{algorithm-noiseless} is guaranteed to succeed at this step using 
	$\nDef \left(1-\fracHeavInfectedComb\right) \log_2{\frac{(\nItems - \nDefFamilies \fracHeavInfectedComb \nMembersSymmetric)}{\nDef \left(1-\fracHeavInfectedComb\right)}} + \nDef \left(1-\fracHeavInfectedComb\right)$ tests.
	Similarly, if BSA is used, then Alg.~\ref{algorithm-noiseless} is expected to succeed in at most: $\nDef \left(1-\fracHeavInfectedComb\right) \log_2{(\nItems - \nDefFamilies \fracHeavInfectedComb \nMembersSymmetric)} + \nDef \left(1-\fracHeavInfectedComb\right)$ tests~\cite{GroupTestingMonograph,capacity-adaptive}.
	
	We add together all the above terms that are related to HGBSA or BSA, and the result follows. 
\end{proof}
	
\subsection{Proof of \Cref{lem:expectedTests:probabilistic}}
\begin{proof}
	Let $\fracHeavInfectedProb$ be the expected fraction of infected families whose mixed sample is positive. 
	Then, because of the probabilistic setting, 
	$\fracHeavInfectedProb = 1-\left(1-\memberInfectRateSymmetric\right)^{\nSamples}$.
	
	Alg.~\ref{algorithm-noiseless} performs testing at lines 4, 8, 13.
	
	$\bullet$ At line 4, the expected number of mixed samples that are positive is $\nFamilies \familyInfectRate \fracHeavInfectedProb$. 
	So, if BSA is used in the place of $\adapt()$, 
	then the maximum number of tests needed to identify all mixed samples is on expectation $\nFamilies \familyInfectRate \fracHeavInfectedProb \log_2{\nFamilies} + \nFamilies \familyInfectRate \fracHeavInfectedProb$~\cite{GroupTestingMonograph,capacity-adaptive}.
	
	$\bullet$ At line 8, the expected number of individual tests is equal to:
	$\nFamilies \familyInfectRate \fracHeavInfectedProb \nMembersSymmetric$. 
	
	$\bullet$ At line 13, the expected number of items that are tested is:
	$\nItems - \nFamilies \familyInfectRate \fracHeavInfectedProb \nMembersSymmetric$,
	and the expected number of infected members is equal to the expected number of all infected members minus the expected number of the ones that are identified though individual testing at line 8: i.e.,
	$\nFamilies \familyInfectRate \nMembersSymmetric \memberInfectRateSymmetric - \nFamilies \familyInfectRate \fracHeavInfectedProb \nMembersSymmetric \memberInfectRateSymmetric = \nFamilies \familyInfectRate \nMembersSymmetric \memberInfectRateSymmetric \left(1-\fracHeavInfectedProb\right)=
	\nItems \familyInfectRate \memberInfectRateSymmetric \left(1-\fracHeavInfectedProb\right)$.
	So, if BSA is used in the place of $\adapt()$, it is expected to succeed using at most $\nItems \familyInfectRate  \memberInfectRateSymmetric \left(1-\fracHeavInfectedProb\right) \log_2{(\nItems - \nFamilies \familyInfectRate \fracHeavInfectedProb \nMembersSymmetric)} + \nItems \familyInfectRate  \memberInfectRateSymmetric \left(1-\fracHeavInfectedProb\right)$ tests~\cite{GroupTestingMonograph,capacity-adaptive}.
	
	We add together all the above terms and the result follows. 
\end{proof}

\section{Appendix for \Cref{sec:nonadapt}: The Noiseless Non-adaptive case}
\label{app:sec:Non-adaptive}

\subsection{Zero error requirements}    \label{lemma_zero_error_proof}
For our design of $\testmatrix_2$, we have the following lemma and observation.
\begin{lemma} \label{lemma_zero_error}
	To achieve zero-error w.r.t.\ $\testmatrix_2$, we need $\nTests_2\geq \nItems$. 
\end{lemma}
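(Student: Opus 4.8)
The plan is to first notice that the prescribed block structure already forces $\nTests_2 \le \nItems$, so the lemma really says that the \emph{only} zero-error design in this family is (a permutation of) individual testing. Concretely, $\testmatrix_2$ has $\nBlocks$ block rows, each of height $\nMembersSymmetric$, so $\nTests_2 = \nBlocks \nMembersSymmetric$, whereas $\nItems = \nFamilies \nMembersSymmetric$. Since every block column carries exactly one $I_{\nMembersSymmetric}$, and (after discarding empty rows) each block row carries $\nOnesPerRow_i \ge 1$ of them, we have $\sum_{i=1}^{\nBlocks}\nOnesPerRow_i = \nFamilies$ with every $\nOnesPerRow_i \ge 1$; hence $\nBlocks \le \nFamilies$, i.e. $\nTests_2 \le \nItems$, with equality precisely when $\nBlocks = \nFamilies$, i.e. when $\nOnesPerRow_i = 1$ for all $i$. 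It therefore suffices to show that whenever some block row contains $\nOnesPerRow_{i_0} \ge 2$ identity blocks, zero-error decoding is impossible.

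For that step I would exhibit an infection realization on which the COMP-type decoder of \Cref{sec:nonadapt} necessarily errs. If $\nOnesPerRow_{i_0}\ge 2$, pick two distinct families $\familyIndex_1\neq\familyIndex_2$ whose identity blocks lie in block row $i_0$, fix a local index $\ell\in\{1,\dots,\nMembersSymmetric\}$, and consider the event that both $\familyIndex_1$ and $\familyIndex_2$ are infected, the member of $\familyIndex_1$ at local index $\ell$ is infected, and the member of $\familyIndex_2$ at local index $\ell$ is healthy. This event has strictly positive probability under model~(I) whenever $\nDefFamilies\ge 2$ and $1\le\nDefMembersSymmetric\le\nMembersSymmetric-1$, and under model~(II) whenever $0<\familyInfectRate,\memberInfectRateSymmetric<1$ (the boundary cases $\nDefMembersSymmetric\in\{0,\nMembersSymmetric\}$, where there is either no infected or no healthy member, are trivial). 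On this realization, even if the first-stage matrix $\testmatrix_1$ correctly flags all infected families so that the columns of both $\familyIndex_1$ and $\familyIndex_2$ survive the pruning, the test given by row $\ell$ of block row $i_0$ pools the infected member of $\familyIndex_1$ together with the healthy member of $\familyIndex_2$ and hence returns positive; and because each member is tested only once in $\testmatrix_2$, that healthy member of $\familyIndex_2$ appears in no other test, in particular in no negative test. The decoding rule then declares it infected --- a false positive.

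Putting the two parts together, zero-error decoding forces $\nOnesPerRow_i=1$ for every $i$, hence $\nBlocks=\nFamilies$ and $\nTests_2=\nFamilies\nMembersSymmetric=\nItems$, which in particular gives $\nTests_2\ge\nItems$. I expect the only slightly delicate points to be (a) checking that the constructed realization genuinely has positive probability under each model and dispatching the degenerate parameter ranges, and (b) making explicit that we grant $\testmatrix_1$ a perfect identification of the infected families, so that the obstruction is attributed cleanly to the structure of $\testmatrix_2$ together with the COMP decoding rule rather than to errors inherited from the first stage --- note that this concession only strengthens the conclusion, since the false positive above persists even then.
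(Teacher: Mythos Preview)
Your argument is correct for the specific block-structured design of $\testmatrix_2$, but the paper's proof takes a more general route that does not rely on that structure at all. The paper argues directly by pigeonhole: if $\nTests_2<\nItems$, then not every member can have a test in which it appears alone, so some member $m_1$ shares every one of its tests with at least one other member; declaring all of $m_1$'s test-partners infected while $m_1$ itself is a healthy member of an infected family produces a false positive under the COMP rule. This establishes $\nTests_2\ge\nItems$ for \emph{any} second-stage matrix, not only those built from identity blocks.

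Your route, by contrast, first uses the block structure to observe $\nTests_2=\nBlocks\nMembersSymmetric\le\nFamilies\nMembersSymmetric=\nItems$ and then reduces the question to showing that $\nOnesPerRow_{i_0}\ge2$ for some block row precludes zero error. The bad realization you construct is essentially a special case of the paper's (the healthy member at local index $\ell$ of family $\familyIndex_2$ plays the role of $m_1$, and the infected member of $\familyIndex_1$ at index $\ell$ is its inescapable test-partner). What your approach buys is an explicit link between zero error and the condition $\nOnesPerRow_i=1$ for all $i$, which is exactly the quantity the subsequent lemmas optimize; what the paper's approach buys is that the bound $\nTests_2\ge\nItems$ holds regardless of how $\testmatrix_2$ is structured, so the block design is not artificially handicapping itself. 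Your careful handling of the degenerate parameter ranges and of the assumption that $\testmatrix_1$ acts perfectly is a nice addition that the paper leaves implicit.
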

\begin{proof}
	A trivial implementation for $\testmatrix_2$ is to use an identity matrix of size $\nItems$;
	since each member is tested individually, we can identify all the infected members correctly. 
	We next  argue that $\nTests_2\geq \nItems$ for the zero-error case.  We prove this through contradiction.
	Assume that $\nTests_2<\nItems$. Then, from the pigeonhole principle, there exists one member, say ${m}_1$ that does not participate in any test alone -it always participates together with one or more members from a set $\mathcal{S}_1$. Assume that all members in ${S}_1$ are infected, while ${m}_1$ belongs in an infected family but  is not infected -our decoding will result in a FP.
\end{proof}
\noindent{\bf Observation:} $\testmatrix_2$ leads to zero error if and only if it has the following property:\\
{\em Zero Error Property:} Any subset of $\{1,2,\cdots,\nItems\}$ of size $(\nFamilies-\nDefFamilies)\nMembersSymmetric + \nDefFamilies(\nMembersSymmetric-\nDefMembersSymmetric)$ equals the union of some testing rows 
of $\testmatrix_2$.  Namely, the members of the not-infected families together with the not-infected members of the infected families, need to be the only participants in some rows of $\testmatrix_2$, for all possible not-infected families and not-infected members.  This requirement can lead to an alternative proof of \Cref{lemma_zero_error}.

\subsection{Rationale for the structure of $\testmatrix_2$}
\label{rationaleG2}
Our goal  is to design a non-trivial matrix $\testmatrix_2$ that can identify almost all the infected members with high probability and a small number of tests.
We next discuss two intuitive properties we would like our designs to have to minimize the error probability.\\

\noindent{\em Desirable Property 1:} Use identity matrices as building blocks.\\
{\it Intuition:} ideally, after removing the $(\nFamilies-\nDefFamilies)\nMembersSymmetric$ columns corresponding to the members in non-infected families, we would like the remaining columns to form an identity matrix so that we can identify all the infected members correctly. To reduce the number of tests, there should be more than one members included in each test. 
Thus we  use overlapping identity matrices, one corresponding to each family. We assume the index for the $\nItems$ members is family-by-family, i.e., the indexes for the members in the same family are consecutive. 
Then each family corresponds to an identity sub-matrix $I_\nMembersSymmetric$ in $\testmatrix_2$. 
Now the problem becomes how to arrange the identity sub-matrices. \\

\noindent{\em Desirable Property 2:} The identity matrices corresponding to different families either appear in the same set of $\nMembersSymmetric$ rows in $\testmatrix_2$ or they do not appear in any shared rows.\\
{\it Intuition:} otherwise, a family would share tests with more other families. Then the probability that this family shares tests with infected families becomes larger. This would increase the probability that two infected families share tests after removing all the non-infected family columns, which in turn would increase the FP probability. 

\subsection{Proof of \Cref{lem:blockRowProb}}
\label{app:explain-Pjoint}

\begin{proof}
The probabilities can be explained as follows: 
\begin{enumerate}[(i)]
\item For $\Pr_{\text{joint}}^{I}$ in \eqref{prob-overlap1}, the numerator gives the number of possibilities that each block row contains at most one infected family, which is obtained by randomly choosing $\nDefFamilies$ block rows (the summation) and then from each chosen block row choosing one family to be infected ($\nOnesPerRow_i$ possible choices for $i$-th block row). 
The denominator is the total number of infection possibilities, and then the fraction denotes the probability that each block row contains at most one infected family. Thus, $\Pr_{\text{joint}}^{I}$ is obtained as the probability that there is some block row that contains two or more infected families. 
    
\item For $\Pr_{\text{joint}}^{II}$ in \eqref{prob-overlap2}, $(1-\familyInfectRate)^{\nOnesPerRow_i}$ is the probability that there is no infected family in the $i$-th block row, and $\nOnesPerRow_i\familyInfectRate(1-\familyInfectRate)^{\nOnesPerRow_i-1}$ is the probability that there is only one infected family in the $i$-th block row. The multiplication $\prod$ denotes the probability that any one block row contains at most one infected family. Thus, $\Pr_{\text{joint}}^{II}$ is obtained as the probability that there is some block row that contains two or more infected families. 
\end{enumerate}
\end{proof}

\subsection{Proof of \Cref{lemma_symmetric_c}}
\label{lemma_symmetric_c_proof}

\begin{proof}
Consider $\nOnesPerRow_i>\nOnesPerRow_j+1$, let $\nOnesPerRow_i'=\nOnesPerRow_i-1$ and $\nOnesPerRow_j'=\nOnesPerRow_j+1$. For the combinatorial model, we can verify the difference of the probability 
for $\nOnesPerRow_i'$ and $\nOnesPerRow_i$ by
\begin{align*}
\sum\limits_{\substack{|\mathcal{B}|=\nDefFamilies: \\ \mathcal{B}\subseteq\{1,2,\cdots,\nBlocks\}}} \prod\limits_{\ell\in\mathcal{B}}\nOnesPerRow_{\ell}'- \sum\limits_{\substack{|\mathcal{B}|=\nDefFamilies: \\ \mathcal{B}\subseteq\{1,2,\cdots,\nBlocks\}}} \prod\limits_{\ell\in\mathcal{B}}\nOnesPerRow_{\ell} &= (\nOnesPerRow_i'\nOnesPerRow_j'-\nOnesPerRow_i\nOnesPerRow_j)\cdot X \\
&=(\nOnesPerRow_i-\nOnesPerRow_j-1)\cdot X  \\
&>0,
\end{align*}
where $X$ is a positive value independent of $\nOnesPerRow_i$ and $\nOnesPerRow_j$. 
This implies that the minimum of the probability $\Pr_{\text{joint}}^{I}$ in \eqref{prob-overlap2} achieves its minimum roughly at the symmetric case where all $\nOnesPerRow_i$'s are equal, 
i.e., $\nOnesPerRow_i=\nOnesPerRow$ for all $i\in\{1,2,\cdots,\nBlocks\}$. 

Similarly, for the probabilistic model, consider the probability in \eqref{prob-overlap2}, we can calculate that
\begin{align}
&\prod_{\ell=1}^\nBlocks\left[(1-\familyInfectRate)^{\nOnesPerRow_{\ell}'}+\nOnesPerRow_{\ell}'\familyInfectRate(1-\familyInfectRate)^{\nOnesPerRow_{\ell}'-1}\right]  \nonumber \\
&\quad - \prod_{\ell=1}^\nBlocks\left[(1-\familyInfectRate)^{\nOnesPerRow_{\ell}}+\nOnesPerRow_{\ell}\familyInfectRate(1-\familyInfectRate)^{\nOnesPerRow_{\ell}-1}\right]  \\ \nonumber 
&=[(\nOnesPerRow_i-\nOnesPerRow_j)-(1-\familyInfectRate)^2]\familyInfectRate^2(1-\familyInfectRate)^{\nOnesPerRow_i+\nOnesPerRow_j-2}\cdot Y \nonumber \\
&>0,
\end{align}
where $Y=\prod_{\ell\neq i,j}\left[(1-\familyInfectRate)^{\nOnesPerRow_{\ell}}+\nOnesPerRow_{\ell}\familyInfectRate(1-\familyInfectRate)^{\nOnesPerRow_{\ell}-1}\right]>0$ is independent of $\nOnesPerRow_i$ and $\nOnesPerRow_j$. 
This implies that the minimum of the probability in \eqref{prob-overlap2} achieves its minimum roughly at the symmetric case where all $\nOnesPerRow_i$'s are equal, 
i.e., $\nOnesPerRow_i=\nOnesPerRow$ for all $i\in\{1,2,\cdots,\nBlocks\}$. 
\end{proof}

\subsection{Proof of \Cref{lemma_FP-prob}}
\label{lemma_FP-prob_proof}
The lemma is obtained under the assumption that the number of families $\nFamilies$ is a multiple of $\nBlocks$ and $\nOnesPerRow$. If $\nFamilies$ cannot be factorized, the error probabilities in \Cref{lemma_FP-prob} can be viewed as an upper bound  for the corresponding error probabilities. This can be seen by simply adding $\nFamilies'$ auxiliary families so that $\nFamilies+\nFamilies'=\nBlocks\nOnesPerRow$.

\begin{proof}
In the symmetric case, i.e., $\nOnesPerRow_i=\nOnesPerRow$ for all $i\in\{1,2,\cdots,\nBlocks\}$,
the probabilities in \eqref{prob-overlap1} and 
\eqref{prob-overlap2} become 
\begin{align}
\Pr_{\text{joint}}^{I}&=1-\frac{{\nBlocks \choose \nDefFamilies}\nOnesPerRow^{\nDefFamilies}}{{\nFamilies \choose \nDefFamilies}},   \label{prob-overlap1-symmetric}  \\
\Pr_{\text{joint}}^{II}&=1-\left((1-\familyInfectRate)^{\nOnesPerRow-1}(1-\familyInfectRate+\nOnesPerRow\familyInfectRate)\right)^\nBlocks.  \label{prob-overlap2symmetric}
\end{align}
For the symmetric combinatorial model, the number of infected members in an infected family $\nDefMembers=\nDefMembersSymmetric$ for all infected families $j$. 
If two families appear in the same set of $\nMembersSymmetric$ tests, the probability that all infected members in one family share the same $\nDefMembersSymmetric$ tests as the other family is simply 
\begin{align}
    \Pr(\text{no FP}|\text{joint}) =\frac{1}{{\nMembersSymmetric \choose \nDefMembersSymmetric}}. 
\end{align}
Thus the probability that FPs happen is 
\begin{align}
Pe&=\Pr(\text{FP}|\text{joint})\cdot \Pr_{\text{joint}}^{I}  =\left[1-\frac{1}{{\nMembersSymmetric \choose \nDefMembersSymmetric}}\right]\left[1-\frac{{\nBlocks \choose \nDefFamilies}\nOnesPerRow^{\nDefFamilies}}{{\nFamilies \choose \nDefFamilies}}\right]. 
\label{FP-prob1}
\end{align}
For the symmetric probabilistic model, the infection probability in an infected family $\memberInfectRate=\memberInfectRateSymmetric$ for all infected families $j$. 
If two families appear in the same set of $\nMembersSymmetric$ tests, then there is no false positives only when the two families have the same number of infected members and the infected (non-infected) members in one family must appear in the same set of tests as infected (non-infected) members of the other family. 
The probability that two families both have $i$ infected members is $\left[\memberInfectRateSymmetric^i(1-\memberInfectRateSymmetric)^{\nMembersSymmetric-i}\right]^2$, and the probability that all infected members in one family share tests with only infected members in the other family is simply $\frac{1}{{\nMembersSymmetric \choose i}}$. 
Thus, the probability that there is no false positives is given as follows, 
\begin{equation}
\Pr(\text{no FP}|\text{joint})=\sum_{i=1}^\nMembersSymmetric\left[\memberInfectRateSymmetric^i(1-\memberInfectRateSymmetric)^{\nMembersSymmetric-i}\right]^2 \frac{1}{{\nMembersSymmetric \choose i}}.
\end{equation}
Thus the probability that a false positive happens can be obtained as 
\begin{align}
Pe&=\Pr(\text{FP}|\text{joint})\cdot \Pr_{\text{joint}}^{II}  \nonumber \\ 
&=\left[\sum_{i=1}^\nMembersSymmetric\left[\memberInfectRateSymmetric^i(1-\memberInfectRateSymmetric)^{\nMembersSymmetric-i}\right]^2 \frac{1}{{\nMembersSymmetric \choose i}}\right] \nonumber \\
&\qquad \cdot \left[1-\left((1-\familyInfectRate)^{\nOnesPerRow-1}(1-\familyInfectRate+\nOnesPerRow\familyInfectRate)\right)^\nBlocks\right]. 
\label{FP-prob2}
\end{align}
Replacing $\nBlocks$ by $\nTests_2/\nMembersSymmetric$ and $\nOnesPerRow$ by $\nFamilies\nMembersSymmetric/\nTests_2$ completes the result.
\end{proof}

\subsection{Proof of \Cref{lemma-ErrorRate} and Discussions} \label{app:proof-lemma-ErrorRate}

\begin{proof}
For the combinatorial model (I), it is hard to explicitly calculate the expected error rate. The upper bound in \eqref{ErrorRate-1} is obtained by assuming that if there exist errors (FPs), then all non-infected members in infected families are misidentified as infected in the decoding of $\testmatrix_2$. (Note that all non-infected members in non-infected families are correctly identified by decoding of $\testmatrix_1$.)

For the probabilistic model (II), the upper bound for the expected error rate in \eqref{ErrorRate-2} is obtained by 
\begin{align}
&R_{II}(\text{error})  =\frac{1}{\nItems}\cdot \nBlocks \cdot  \left[\sum_{j=2}^\nOnesPerRow {\nOnesPerRow \choose j}\familyInfectRate^j (1-\familyInfectRate)^{\nOnesPerRow-j} \right. \nonumber \\
&\qquad\quad \left.\cdot 
\left(\sum_{i=1}^j{j \choose i}\memberInfectRateSymmetric^i (1-\memberInfectRateSymmetric)^{j-i}(j-i)\right)\cdot \nMembersSymmetric\right]  \label{ErrorRate-2-pf-1}  \\
&=\frac{\nBlocks\nMembersSymmetric}{\nItems}\cdot \Bigg[\sum_{j=2}^\nOnesPerRow {\nOnesPerRow \choose j}\familyInfectRate^j (1-\familyInfectRate)^{\nOnesPerRow-j}  \nonumber \\
&\qquad\qquad \cdot \big(j(1-\memberInfectRateSymmetric)-j(1-\memberInfectRateSymmetric)^j\big)\Bigg]  \label{ErrorRate-2-pf-2} \\
&<\frac{(1-\memberInfectRateSymmetric)\nTests_2}{\nItems}\cdot \left[\sum_{j=2}^\nOnesPerRow {\nOnesPerRow \choose j}\familyInfectRate^j (1-\familyInfectRate)^{\nOnesPerRow-j} \cdot j\right]  \nonumber \\
&=\frac{(1-\memberInfectRateSymmetric)\nTests_2}{\nItems}\cdot \big[\nOnesPerRow\familyInfectRate-\nOnesPerRow\familyInfectRate(1-\familyInfectRate)^{\nOnesPerRow-1}\big], \nonumber \\
&=(1-\memberInfectRateSymmetric)\familyInfectRate\big[1-(1-\familyInfectRate)^{\nOnesPerRow-1}\big],  \label{ErrorRate-2-pf-3}
\end{align}
where the expression in the bracket in \eqref{ErrorRate-2-pf-1} for each~$j$ denotes the expected number of FPs in one block row if there are $j$ families infected in this block row, \eqref{ErrorRate-2-pf-2} is obtained from the expected value of binomial distribution, and \eqref{ErrorRate-2-pf-3} follows by substituting $\nOnesPerRow=\frac{\nItems}{\nTests_2}$. 
\end{proof}

We here make the following observation about the system FP probability $\Pr(\text{any-FP})$:
As we explore further in \Cref{section-experiments} non-adaptive group testing requires more tests than adaptive.
Assume that $\nDefFamilies = \Theta(\nFamilies^\sparseRegimeFamilyPar)$ for $\sparseRegimeFamilyPar \in [0,1)$ and choose $\nSamples = \nMembersSymmetric-1$ in \Cref{algorithm-noiseless}. 
Adaptive testing allows to achieve zero error with $\nDefFamilies\log_2\nFamilies+\nDefFamilies\nMembersSymmetric$ tests;
if we use the same (order) number of tests with a non-adaptive strategy, i.e., $T_1=\nDefFamilies\log_2\frac{\nFamilies}{\nDefFamilies}$ and $T_2=\nDefFamilies\left(\log_2\nDefFamilies+\nMembersSymmetric\right)$, we get $\Pr(\text{any-FP})$ in \Cref{lemma_FP-prob} 
approximately equal to $\left(1-\frac{1}{\nMembersSymmetric}\right)\left[1-\frac{{\nTests_2/\nMembersSymmetric \choose \nDefFamilies}}{\left(\frac{\nTests_2/\nMembersSymmetric}{\nDefFamilies}\right)^\nDefFamilies}\frac{\left(\nFamilies/\nDefFamilies\right)^{\nDefFamilies}}{{\nFamilies \choose \nDefFamilies}}\right]$ which is bounded away from 0. 
The latter can be seen as follows: i) $\nTests_2/\nMembersSymmetric\approx \nDefFamilies\ll\nFamilies$; ii) $\frac{{n \choose k}}{\left(\frac{n}{k}\right)^ k} \big/ \frac{{n+m \choose k}}{\left(\frac{n+m}{k}\right)^ k}=\left(\frac{n}{n+m}\right)^k\cdot\prod_{i=1}^m\frac{n+i-k}{n+i}$ is decreasing with $m$ and can be very small when $m\gg n$.

Fig.~\ref{fig_FPprob} depicts $\Pr(\text{any-FP})$ and $R(\text{error})$ for parameters $\nFamilies=64$, $\nDefFamilies=6$, $\nDefMembersSymmetric=4$, $\nMembersSymmetric=5$, $\familyInfectRate= 1/8$, and $\memberInfectRateSymmetric=0.8$.
\begin{figure}[!t]
\centering
\includegraphics[width=0.8\linewidth, height=3.5cm]{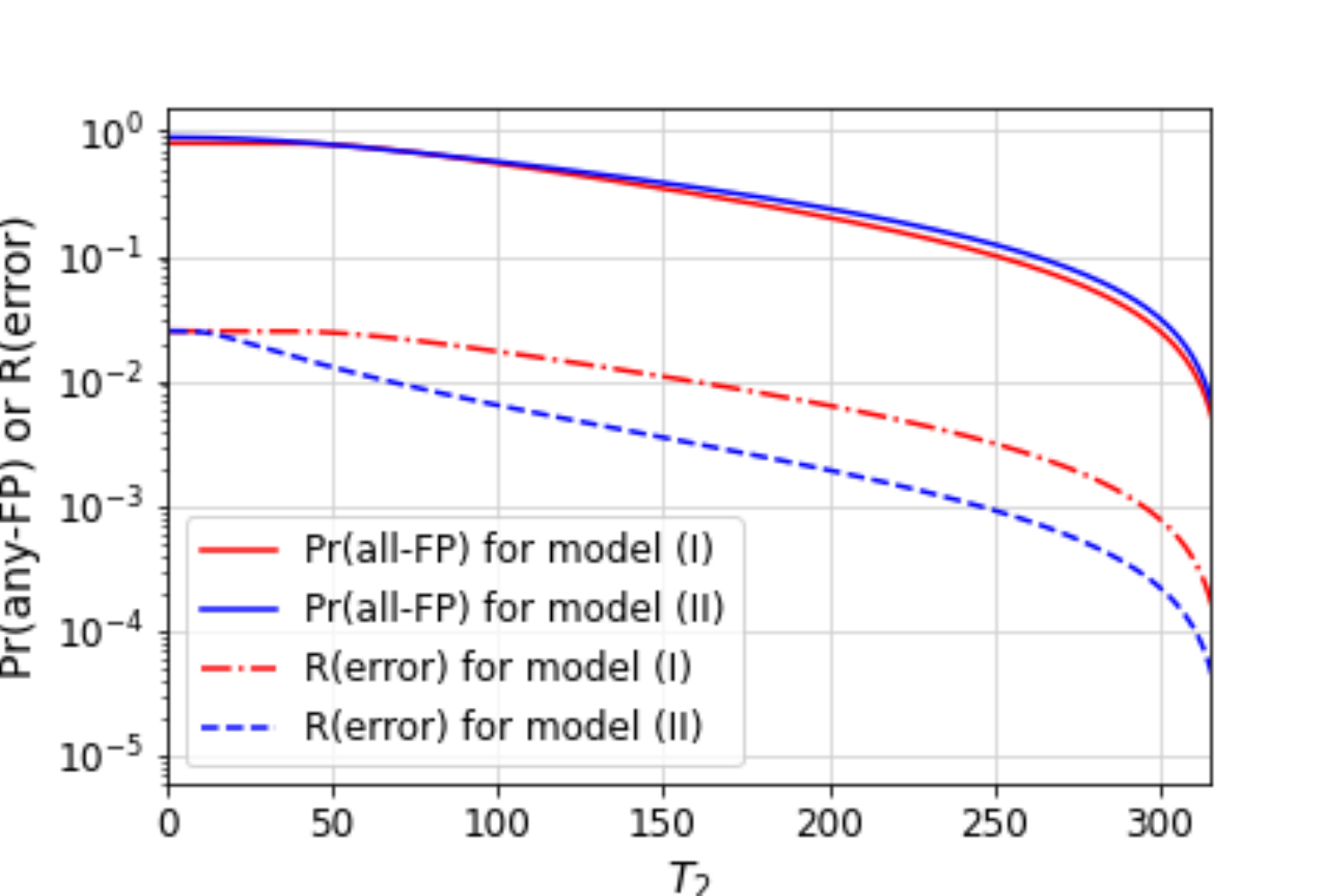}
\caption{System FP probability and FP error rate.}
\label{fig_FPprob}
\end{figure}

%

\section{Appendix for \Cref{sec:LBP}: Loopy Belief Propagation algorithm}
\label{app:sec:LBP}
We here describe our loopy belief propagation algorithm (LBP) and update rules for our probabilistic model (II). We use the factor graph framework of \cite{kschischang2001factor} and derive closed-form expressions for the sum-product update rules (see equations (5) and (6) in \cite{kschischang2001factor}).

The LBP algorithm on a factor graph iteratively exchanges messages across the variable and factor nodes. The messages to and from a variable node $\defFamilyVariable_\familyIndex$ or $\defVariable_{\memberIndex}$ are \textit{beliefs} about the variable or distributions (a local estimate of $\Pr(\defFamilyVariable_\familyIndex|\text{observations})$ or $\Pr(\defVariable_{\memberIndex}|\text{observations})$). Since all the random variables are binary, in our case each message would be a 2-dimensional vector $[a,b]$ where $a,b \geq 0$. 
Suppose the result of each test is $y_\testIndex$, i.e., $\testResult_{\testIndex}=y_{\testIndex}$ and we wish to compute the marginals $\Pr(\defFamilyVariable_\familyIndex=\defectFamilyValue|\testResult_{1}=y_1,\testResult_{2}=y_2,...,\testResult_{\nTests}=y_T)$  and  $\Pr(\defVariable_{\memberIndex}=\defectValue|\testResult_{1}=y_1,\testResult_{2}=y_2,...,\testResult_{\nTests}=y_\nTests)$ for $\defectFamilyValue,\defectValue\in\{0,1\}$. 
The LBP algorithm proceeds as follows:

\begin{figure*}[h!]
	\centering
	\captionsetup{justification=centering}
	\includegraphics[height=12cm, width=0.8\linewidth]{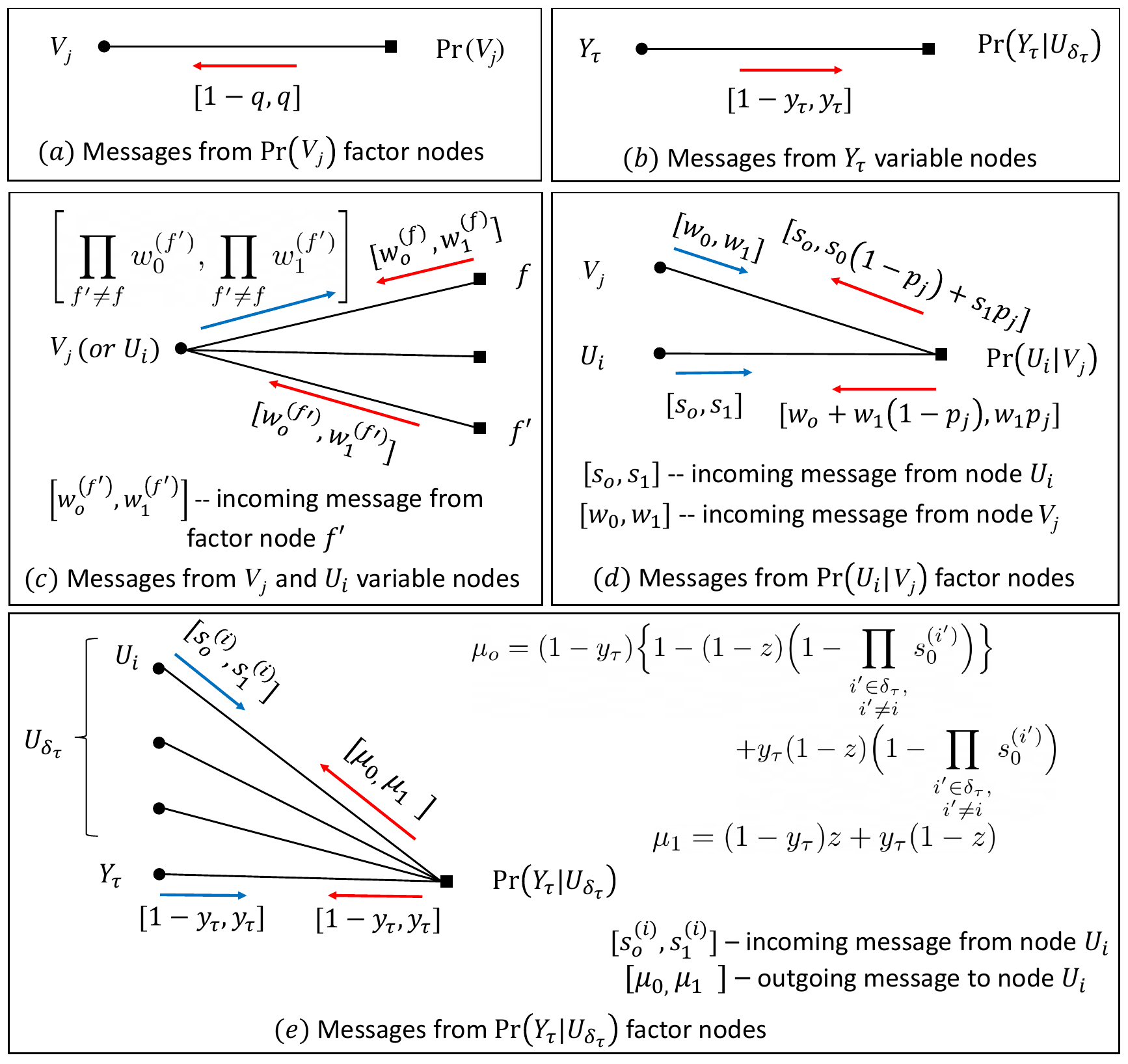}
	\caption{The update rules for the factor and variable node messages.}
	\vspace{-0.2cm}
	\label{fig:messages-appendix}
\end{figure*}

\begin{enumerate}
	\item \textit{Initialization:} The variable nodes $\defFamilyVariable_\familyIndex$ and $\defVariable_{\memberIndex}$ transmit the message $[0.5,0.5]$ on each of their incident edges. Each variable node $Y_\tau$ transmits the message $[1-y_{\tau},y_{\tau}]$, where $y_{\tau}$ is the observed test result, on its incident edge.
	
	\item \textit{Factor node messages:} Each factor node receives the messages from the neighboring variable nodes and computes a new set of messages to send on each incident edge. The rules on how to compute these messages are described next.
	
	\item \textit{Iteration and completion.} The algorithm alternates between steps 2 and 3 above a fixed number of times (in practice 10 or 20 times works well) and computes an estimate of the posterior marginals as follows -- for each variable node $\defFamilyVariable_\familyIndex$ and $\defVariable_{\memberIndex}$, we take the coordinatewise product of the incoming factor messages and normalize to obtain an estimate of $\Pr(\defFamilyVariable_\familyIndex=\defectFamilyValue|y_1...y_\nTests)$ and $\Pr(\defVariable_{\memberIndex}=\defectValue|y_1...y_\nTests)$ for $\defectFamilyValue,\defectValue\in\{0,1\}$.
\end{enumerate}

Next we describe the simplified variable and factor node message update rules. We use equations (5) and (6) of \cite{kschischang2001factor} to compute the messages. 

\textit{Leaf node messages:} At every iteration, the variable node $\testResult_\testIndex$ continually transmits the message $[0,1]$ if $\testResult_\testIndex=1$ and $[1,0]$ if $\testResult_\testIndex=0$ on its incident edge. The factor node $\Pr(\defFamilyVariable_\familyIndex)$ continually transmits $[1-\familyInfectRate,\familyInfectRate]$ on its incident edge; see Fig.~\ref{fig:messages-appendix} (a) and (b).

\textit{Variable node messages:} The other variable nodes $\defFamilyVariable\familyIndex$ and $\defVariable_{\memberIndex}$ use the following rule to transmit messages along the incident edges: for incident each edge $e$, a variable node takes the elementwise product of the messages from every other incident edge $e'$ and transmits this along $e$; see Fig.~\ref{fig:messages-appendix} (c).

\textit{Factor node messages:} For the factor node messages, we calculate closed form expressions for the sum-product update rule (equation (6) in \cite{kschischang2001factor}). The simplified expressions are summarized in Fig.~\ref{fig:messages-appendix} (d) and (e). Next we briefly describe these calculations. 

Firstly, we note that each message represents a probability distribution. One could, without loss of generality, normalize each message before transmission. Therefore, we assume that each message $\mu=[a,b]$ is such that $a+b=1$. Now, the the leaf nodes labeled $\Pr(V_j)$ perennially transmit the prior distribution corresponding to
$V_j$. 

Next, consider the factor node $\Pr(U_i|V_j)$ as shown in Fig.~\ref{fig:messages-appendix} (d). The message sent to $U_i$ is calculated as 
\begin{align*}
\mu_u &=\sum_{v\in \{0,1\}} \Pr(U_i=u|V_j=v) w_v\\
&= w_0 (1-u) +w_1 p_j^u (1-p_j)^{1-u}. 
\end{align*}
Similarly, the message sent to $V_i$ is 
\begin{align*}
\nu_v &=\sum_{u\in \{0,1\}} \Pr(U_i=u|V_j=v) s_u\\
&= s_0 (v(1-p_j)+1-v) + s_1 v p_j. 
\end{align*}

Finally for the factor nodes $\Pr(Y_{\tau}|U_{\delta_{\tau}})$ as shown in Fig.~\ref{fig:messages-appendix} (e), note that the messages to $Y_{\tau}$ play no role since they are never used to recompute the variable messages. The messages to $U_i$ nodes are expressed as 
\begin{align*}
\mu_u &= \sum_{\substack{y\in\{0,1\},\\ \{u_{i'}\in \{0,1\}: i'\in \delta_{\tau}\setminus\{i\}\}}} \Big( \Pr(Y_{\tau}=y|U_{\delta_{\tau}}=u_{\delta_{\tau}})\\ &\hspace{3cm}(1-y_{\tau})^{1-y} y_{\tau}^y \prod_{i'\in \delta_{\tau}\setminus\{i\}\}} s_{u_{i'}}^{(i')} \Big)\\
 &= (1-y_{\tau})\sum_{\substack{\{u_{i'}\in \{0,1\}:\\ i'\in \delta_{\tau}\setminus\{i\}\}}} \Big( \Pr(Y_{\tau}=0|U_{\delta_{\tau}}=u_{\delta_{\tau}}) \\
 &\hspace{4cm} \prod_{i'\in \delta_{\tau}\setminus\{i\}\}} s_{u_{i'}}^{(i')} \Big)\\
  &+ y_{\tau}\sum_{\substack{\{u_{i'}\in \{0,1\}:\\ i'\in \delta_{\tau}\setminus\{i\}\}}} \Big( \Pr(Y_{\tau}=1|U_{\delta_{\tau}}=u_{\delta_{\tau}}) \prod_{i'\in \delta_{\tau}\setminus\{i\}\}} s_{u_{i'}}^{(i')} \Big).
\end{align*}
From our Z-channel model, recall that $\Pr(Y_{\tau}=0|U_{\delta_{\tau}}=u_{\delta_{\tau}}) = 1$ if $u_i=0\ \forall\ i \in \delta_{\tau}$ and $\Pr(Y_{\tau}=0|U_{\delta_{\tau}}=u_{\delta_{\tau}}) = z$ otherwise. Thus we split the summation terms into 2 cases -- one where  $u_{i'}=0$ for all $i'$ and the other its complement. Also combining this with the assumption that the messages are normalized, i.e., $s^{(i)}_0+s^{(i)}_1=1,$ we get 
\begin{align*}
    \sum_{\substack{\{u_{i'}\in \{0,1\}:\\ i'\in \delta_{\tau}\setminus\{i\}\}}} &\Big( \Pr(Y_{\tau}=0|U_{\delta_{\tau}}=u_{\delta_{\tau}}) \prod_{i'\in \delta_{\tau}\setminus\{i\}\}} s_{u_{i'}}^{(i')} \Big)\\
    &= \mathbbm{1}_{u=1}z + \mathbbm{1}_{u=0}\Big\{ 1-(1-z)(1-\prod_{\substack{i'\in\delta_{\tau}\\i'\neq i}} s^{(i')}_0) \Big\},
\end{align*}
and
\begin{align*}
    \sum_{\substack{\{u_{i'}\in \{0,1\}:\\ i'\in \delta_{\tau}\setminus\{i\}\}}} &\Big( \Pr(Y_{\tau}=1|U_{\delta_{\tau}}=u_{\delta_{\tau}}) \prod_{i'\in \delta_{\tau}\setminus\{i\}\}} s_{u_{i'}}^{(i')} \Big)\\
    &= \mathbbm{1}_{u=1}(1-z) + \mathbbm{1}_{u=0}\Big( (1-z)(1-\prod_{\substack{i'\in\delta_{\tau}\\i'\neq i}} s^{(i')}_0) \Big).
\end{align*}
Substituting $u=0$, and $u=1$ we obtain the messages
\begin{align*}
\mu_0 &= (1-y_{\tau})\Big\{ 1-(1-z)(1-\prod_{\substack{i'\in\delta_{\tau}\\i'\neq i}} s^{(i')}_0) \Big\}\\
&+ y_{\tau} (1-z)(1-\prod_{\substack{i'\in\delta_{\tau}\\i'\neq i}} s^{(i')}_0),
\end{align*}
and
\begin{align*}
\mu_1 &= (1-y_{\tau})z+y_{\tau}(1-z).
\end{align*}
For our probabilistic model, the complexity of computing the factor node messages increases only linearly with the factor node degree. 

\section{Appendix for \Cref{section-experiments}: Other Results}
\label{app:sec:eval}
We next provide additional experimental results to the ones provided in Section~\ref{section-experiments}.

(i)~\textit{Noiseless testing -- Average number of tests:}
In \Cref{fig:exp1}, we reproduce additional numerics akin to the ones in Section~\ref{section-experiments} for number of tests in the noiseless-testing case.
As earlier, we measure the average number of tests needed by 3 algorithms that achieve zero-error reconstruction
(Alg.~\ref{algorithm-noiseless} with $\nSamples = 1$, Alg.~\ref{algorithm-noiseless} with $\nSamples = \nMembersSymmetric$,
and
classic BSA),
and a version of our nonadaptive algorithm (\Cref{sec:nonadapt}) that uses $\nTests_1 = \nFamilies$ tests for submatrix $\testmatrix_1$ and has an overall FP rate around $0.5\%$.
Alg.~\ref{algorithm-noiseless} assumes no prior knowledge of the number of infected families/classes or members/students, 
hence uses BSA as group-testing algorithm for the $\adapt()$ function.

\begin{figure}
	\centering
	\captionsetup{justification=centering}	
\begin{subfigure}{0.38\textwidth}
	\centering
	\captionsetup{justification=centering}	
	\includegraphics[ height=3.8cm, width = \textwidth]{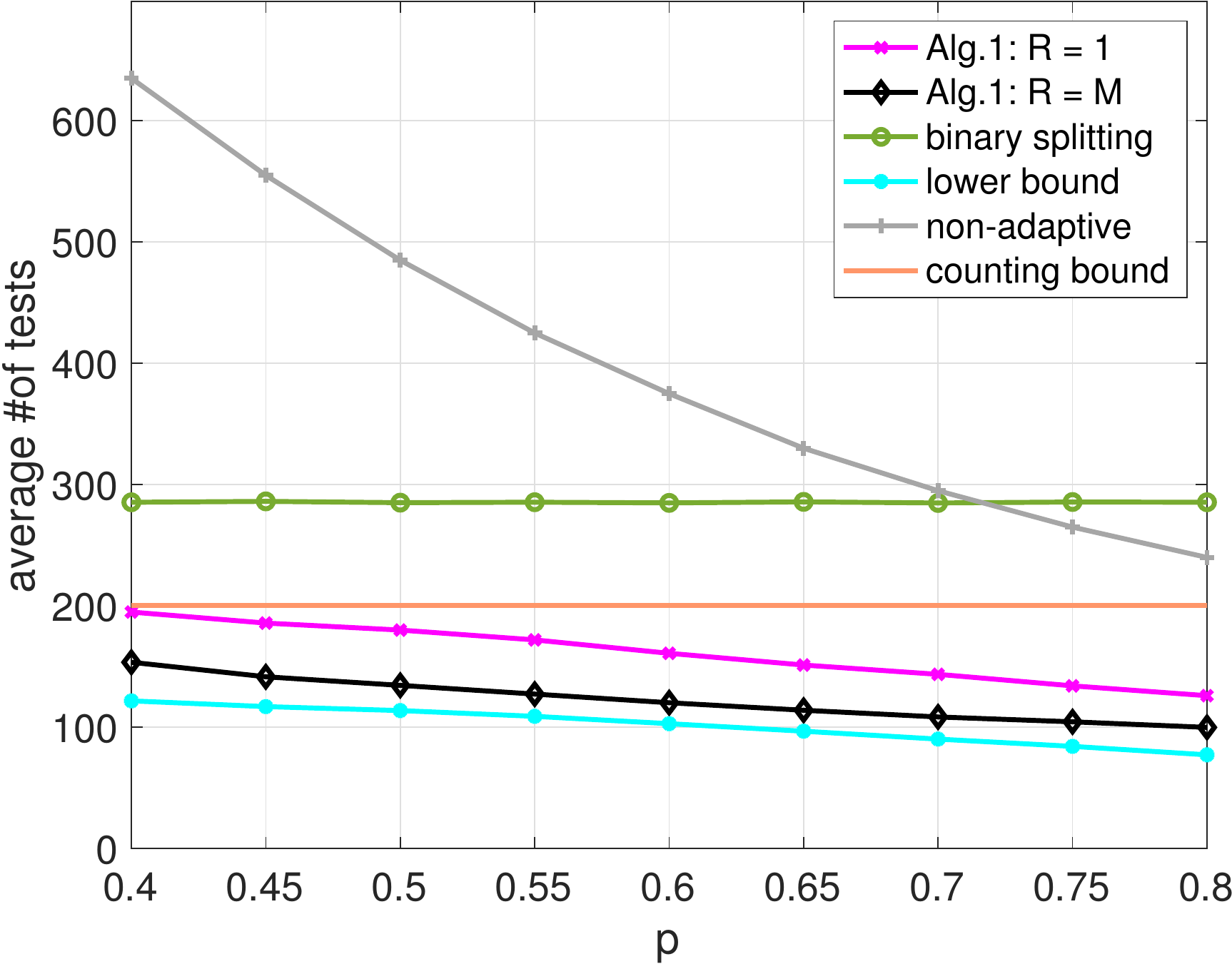}
	\caption{Community 1---sparse regime.}
	\label{fig:nTests_Com1_sparse}
	\vspace{0.3cm}
\end{subfigure}

\begin{subfigure}{0.38\textwidth}
	\centering
	\captionsetup{justification=centering}	
	\includegraphics[ height=3.8cm, width = \textwidth]{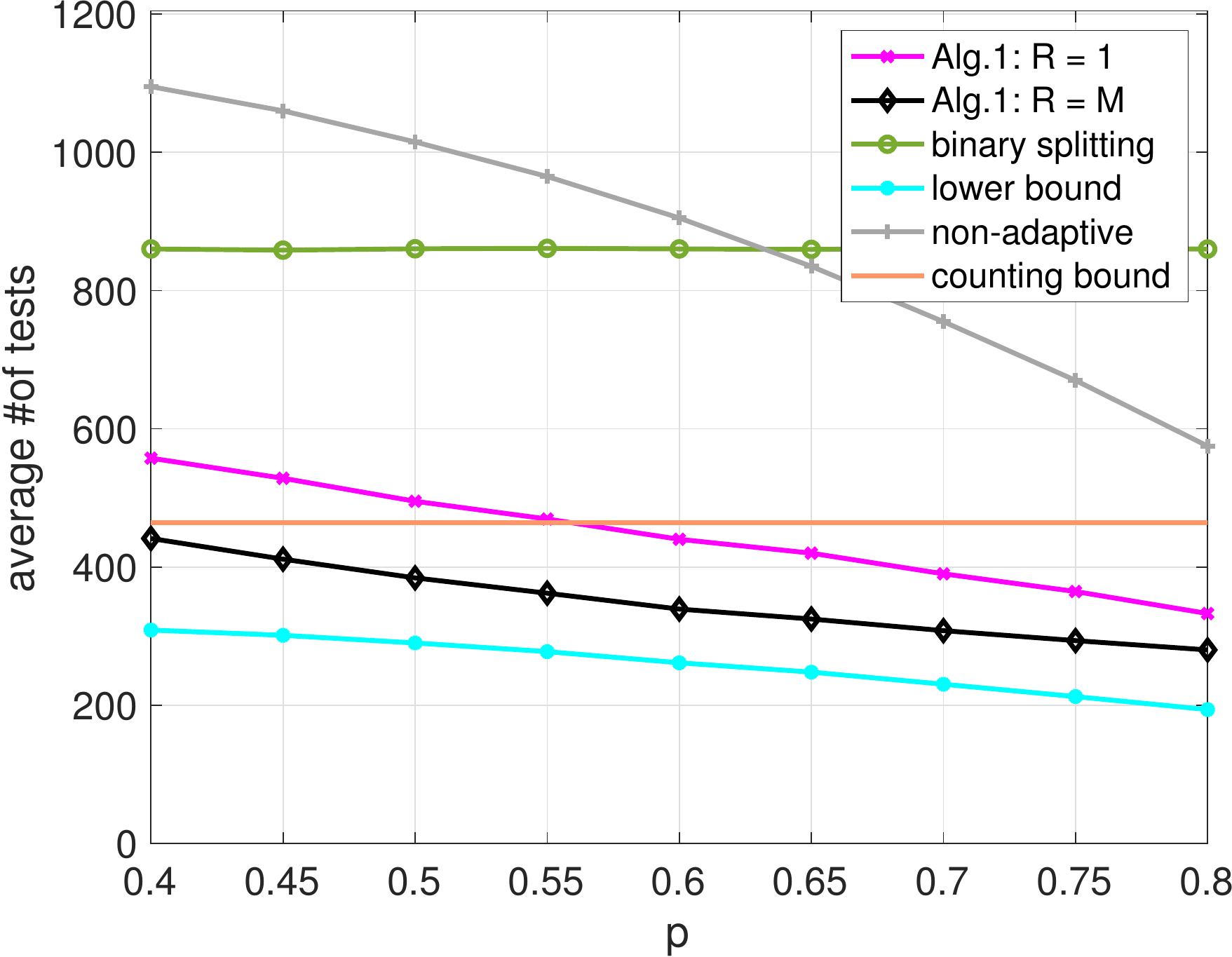}
	\caption{Community 1---linear regime.}
	\label{fig:nTests_Com1_linear}
	\vspace{0.3cm}
\end{subfigure}

\begin{subfigure}{0.38\textwidth}
	\centering
	\captionsetup{justification=centering}	
	\includegraphics[ height=3.8cm, width = \textwidth]{figs/pVarying-sparse-20-50-eps-converted-to.pdf}
	\caption{Community 2---sparse regime.}
	\label{fig:nTests_Com2_sparse}
	\vspace{0.3cm}
\end{subfigure}

\begin{subfigure}{0.38\textwidth}
	\centering
	\captionsetup{justification=centering}	
	\includegraphics[ height=3.8cm, width = \textwidth]{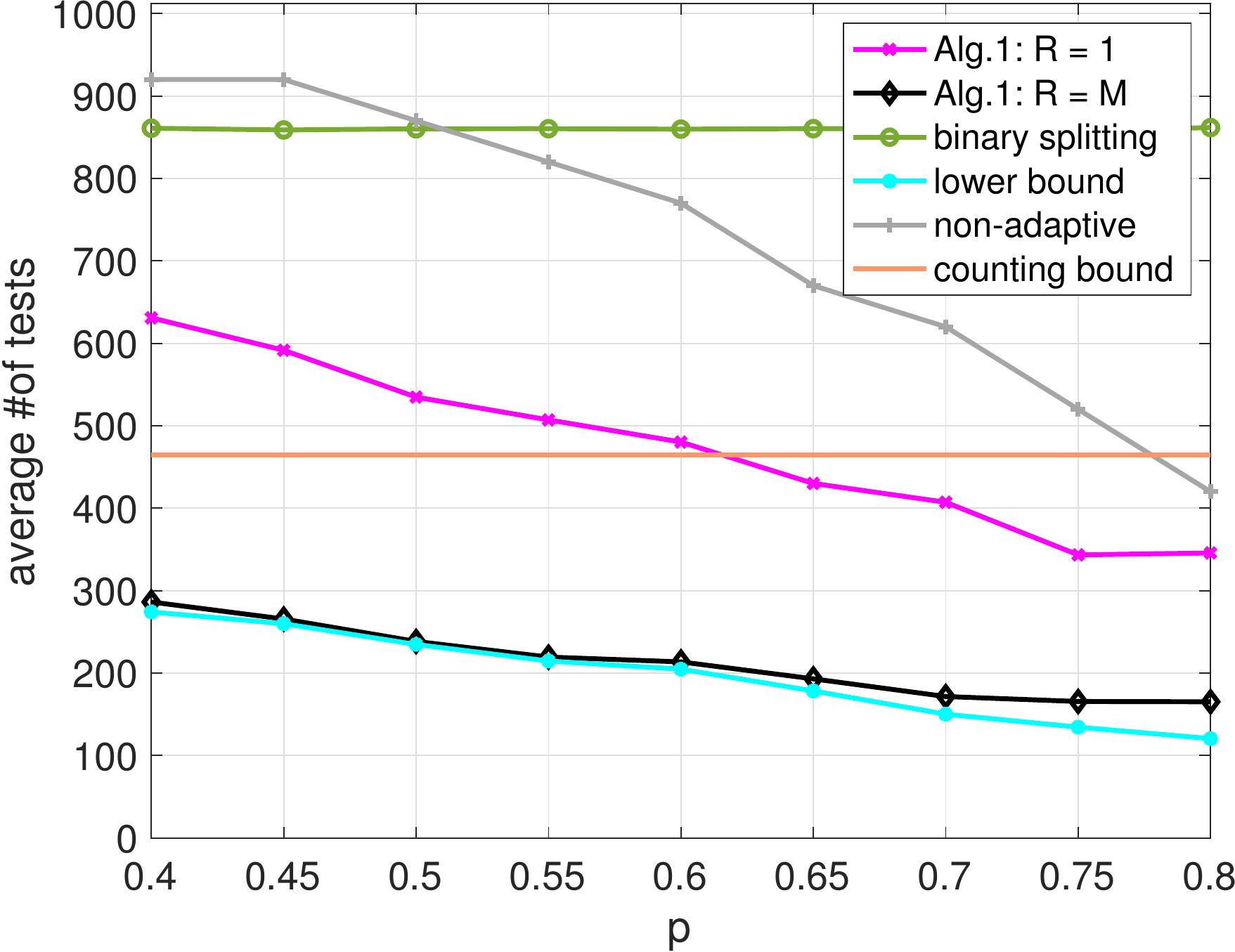}
	\caption{Community 2---linear regime.}
	\label{fig:nTests_Com2_linear}
	\vspace{0.3cm}
\end{subfigure}
\caption{Experiment (i): \\Noiseless case---Average number of tests.}
\label{fig:exp1}
\end{figure}

Fig.~\ref{fig:exp1} depicts our results:
We observe that both versions of Alg.~\ref{algorithm-noiseless} (black and magenta lines) need significantly fewer tests compared to classic BSA (green line),
while staying below the counting bound.
This indicates the potential benefits from the community structure,
even when the number of infected members is unknown.
More interestingly, when $\nSamples = \nMembersSymmetric$, Alg.~\ref{algorithm-noiseless}
performs close to the lower bound in most realistic scenarios $\memberInfectRateSymmetric \in [0.5,0.8]$ (as also shown in \Cref{sec:adapt}). 
The grey line shows number of tests needed by our
nonadaptive algorithm; we observe that even that algorithm needs fewer tests than BSA when $\memberInfectRateSymmetric$ gets larger than $0.5$, of course at the cost of a (FP) error rate of $0.5\%$.

(ii)~\textit{Noiseless testing -- Average error rate:}
In Fig.~\ref{fig:exp2}, we reproduce additional numerics akin to the ones in Section~\ref{section-experiments} for average error rates in the noiseless-testing case.
As earlier, we quantify the additional cost in terms of error rate, 
when one goes from a two-stage adaptive algorithm that achieves zero-error identification to much faster single-stage nonadaptive algorithms.

Fig.~\ref{fig:Noiseless_p08} is a reproduction of Fig.~\ref{fig:noiseless-T} for $p = 0.8$, and as can be seen its behavior is very similar to  Fig.~\ref{fig:noiseless-T}.

Fig.~\ref{fig:Noiseless_linear_2stage_test} depicts the FP and FN error rates (averaged over $500$ runs)
as a function of $\memberInfectRateSymmetric \in [0.3,0.8]$ for Community~1 for the linear regime.
We observe that any community-aware nonadaptive algorithm performs better than traditional nonadaptive group testing (red line) when $\memberInfectRateSymmetric > 0.5$ -- the absolute performance gap ranges from $0.2\%$ (when $\memberInfectRateSymmetric = 0.5$) to $8.5\%$ (when $\memberInfectRateSymmetric = 0.8$).
``COMP with C-encoder'' has a stable FP rate across for all $\memberInfectRateSymmetric$ values that was close to $2\%$, and a zero FN rate by construction.
Unlike the sparse regime, the LBP consistently produces better error rates compared to the COMP decoder. However, for low values of $p$, LBP produces more FN errors. For $\memberInfectRateSymmetric>0.6$, both the FN and FP error rates are close to 0 for LBP.

Fig.~\ref{fig:Noiseless_linear_p06} and Fig.~\ref{fig:Noiseless_linear_p08} examine the effect of the number of tests in the linear regime.
For $p=0.6$, ``C-LBP with NC-encoder'' performs better than ``COMP with C-encoder'' for $T>450$ until which both have high error rates. On the other hand, for $p=0.8$, ``C-LBP with NC-encoder'' performs better than ``COMP with C-encoder'' for all values of $T$.
More importantly, ``COMP with C-encoder'' seems to saturate to  a non-zero FP error rate, while  ``C-LBP with NC-encoder'' is able to attain close to zero error FP and FN rates. These results contrast with the results for the sparse regime.

(iii)~\textit{Noisy testing:}

In \Cref{fig:exp3}, we reproduce additional numerics akin to the ones in Section~\ref{section-experiments} for average error rates in the noisy-testing case.
As earlier, we assuming the Z-channel noise of \Cref{subsec:Noise-Error} with parameter $\znoiseProb = 0.15$, 
and we evaluate the performance of our community-based LBP decoder of \Cref{sec:LBP} against a LBP that does not account for community---namely its factor graph has no $\defFamilyVariable_\familyIndex$ nodes.

Fig.~\ref{fig:Noisy_p06} is a reproduction of Fig.~\ref{fig:noisy} for  $p = 0.6$, and as can be seen its behavior is very similar to it.

Fig.~\ref{fig:Noisy_linear_p06} and Fig.~\ref{fig:Noisy_linear_p08} depict our results for Community 1 and
for $\memberInfectRateSymmetric = 0.6$ and $\memberInfectRateSymmetric = 0.8$ in the linear infection regime.
We observe that the knowledge of the community structure reduces the FN rates achieved by LBP. The FP error rates are always close to 0 while the, FN error rates drop significantly (up to $60\%$ when tests are few), which is important in our context since FN errors lead to further infections.

(iv) \textit{Asymmetric case---Linear regime:}

\pavlos{Here we offer the results about an asymmetric setup that parallels the one of \Cref{section-experiments}. 
	Infections follow again the probabilistic model (II), and the size of each family is randomly selected from the interval $[5, 50]$ and the infection rate of each infected family is randomly selected from the range $[0.4, 0.8]$. 
	But, this time $\familyInfectRate = 5\%$.
	
	\Cref{fig:assymteric} depicts our results.
	BSA needs on average $6.19\times$ (that can
	reach up to $13.87\times$) more tests compared to the probabilistic bound, while the two versions of \Cref{algorithm-noiseless} with $\nSamples = 1$ and $\nSamples = \nMembersSymmetric$ need only $2.74\times$ and $1.19\times$ (that can reach up to
	$9.7\times$ and $2.03\times$) more tests, respectively.
	Also, similarly to the sparse regime, there is a significantly smaller range between the $25$-th and $75$-th percentiles of the boxplots related to \Cref{algorithm-noiseless} that indicates its more predictable performance compared to BSA. 
}

\begin{figure}
	\centering
	\captionsetup{justification=centering}	
	\begin{subfigure}{0.38\textwidth}
		\centering
		\captionsetup{justification=centering}	
		\includegraphics[height=3.8cm, width = \textwidth]{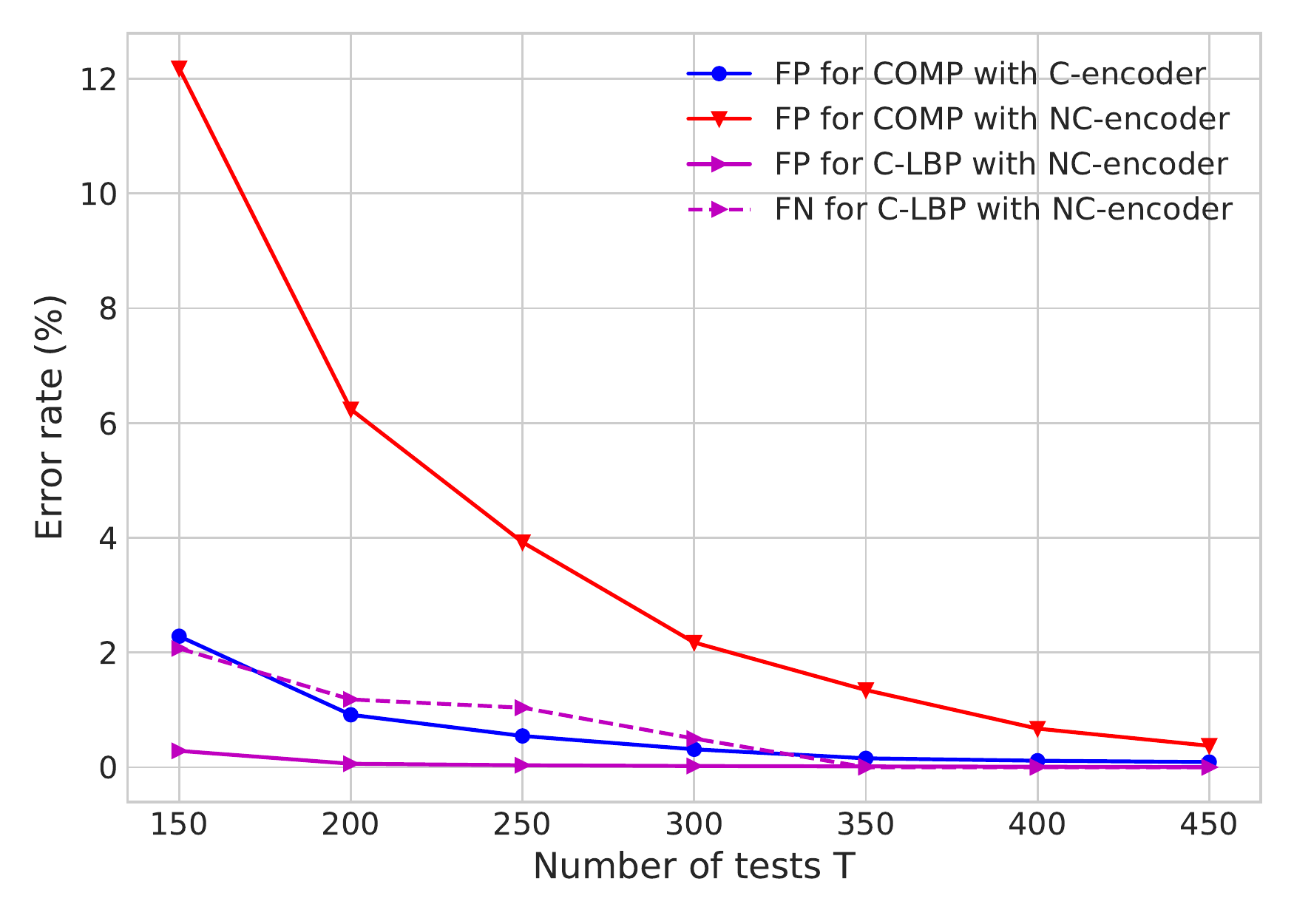}
		\caption{Noiseless case: Average error rate $p=0.8$ for sparse regime.}
		\label{fig:Noiseless_p08}
	\end{subfigure}
	
	\begin{subfigure}{0.38\textwidth}
		\centering
		\captionsetup{justification=centering}			\includegraphics[height=3.8cm, width = \textwidth]{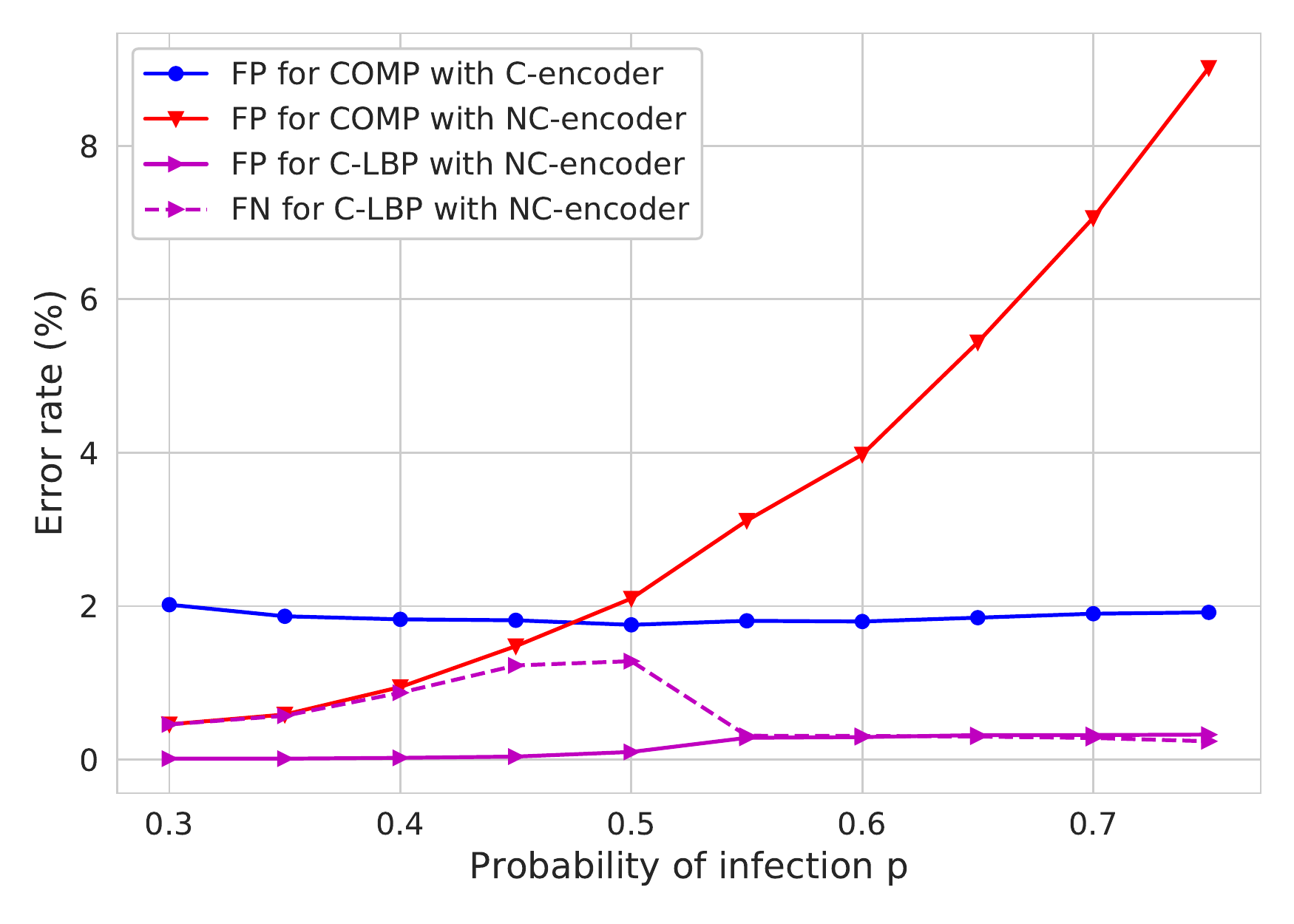}
		\caption{Noiseless case: Average error rate with few tests for linear regime.}
		\label{fig:Noiseless_linear_2stage_test}
	\end{subfigure}
	
	\begin{subfigure}{0.38\textwidth}
		\centering
		\captionsetup{justification=centering}			\includegraphics[height=3.8cm, width = \textwidth]{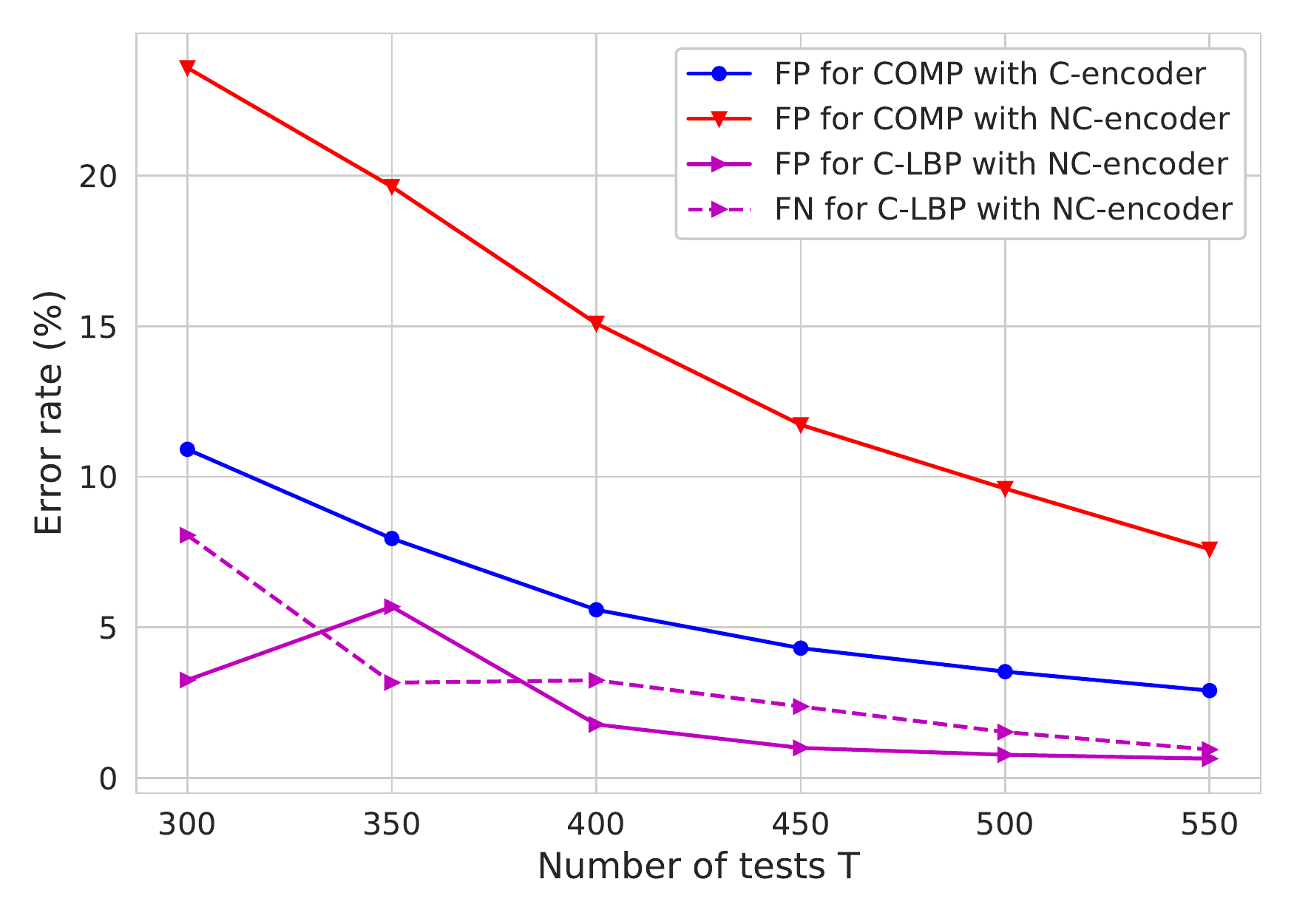}
		\caption{Noiseless case: Average error rate $p=0.6$ for linear regime.}
		\label{fig:Noiseless_linear_p06}
	\end{subfigure}
	
	\begin{subfigure}{0.38\textwidth}
		\centering
		\captionsetup{justification=centering}			\includegraphics[height=3.8cm, width = \textwidth]{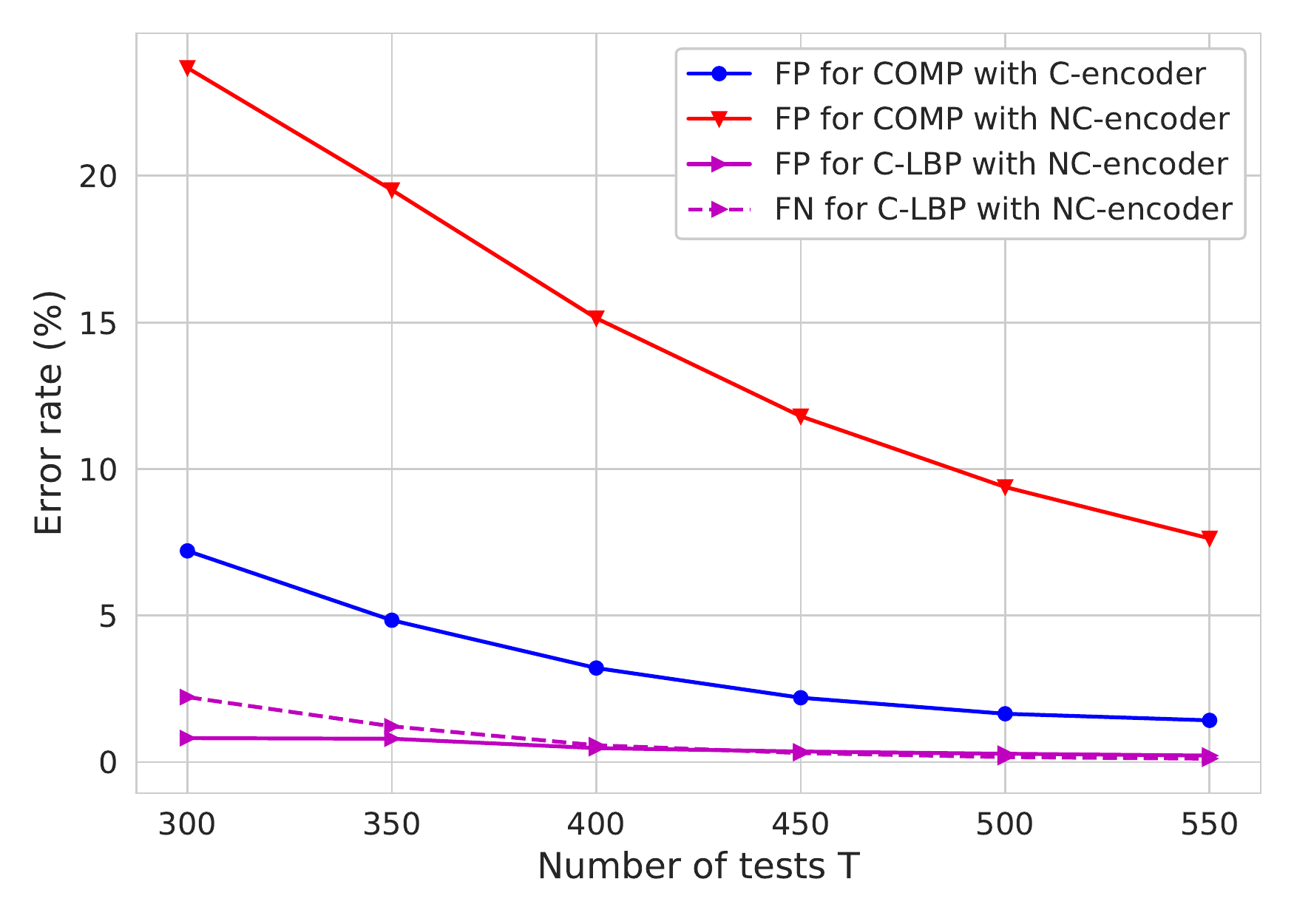}
		\caption{Noiseless case: Average error rate $p=0.8$ for linear regime.}
		\label{fig:Noiseless_linear_p08}
	\end{subfigure}
	\caption{Experiment (ii): \\Noiseless case---Average error rate.}
	\label{fig:exp2}
\end{figure}

\begin{figure}
	\centering
	\captionsetup{justification=centering}
	\begin{subfigure}{0.38\textwidth}
		\centering
		\captionsetup{justification=centering}		\includegraphics[height=3.5cm, width = \textwidth]{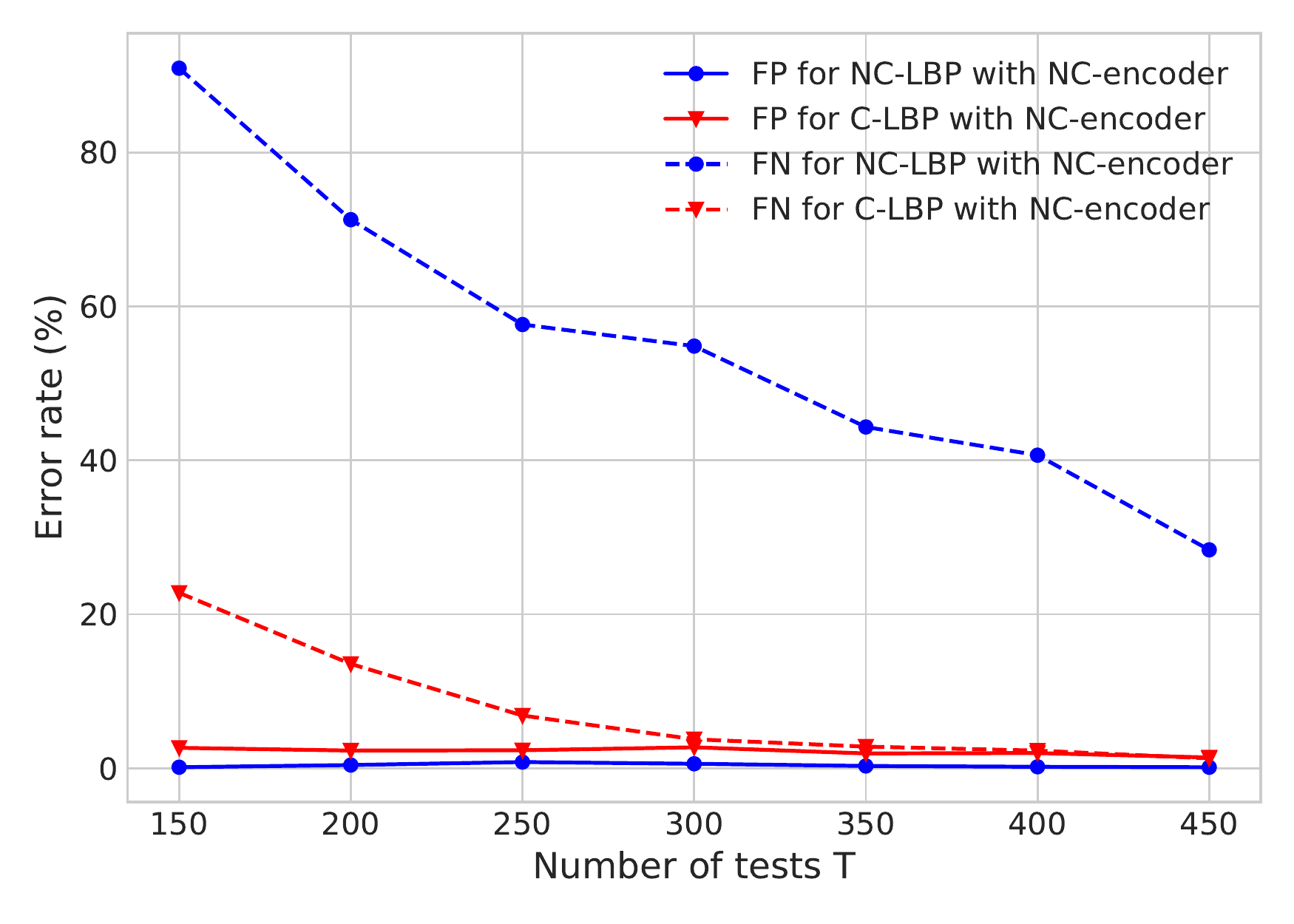}
		\caption{Noisy case: Average error rate $p=0.6$ for sparse regime.}
		\label{fig:Noisy_p06}
	\end{subfigure}
	
	\begin{subfigure}{0.38\textwidth}
		\centering
		\captionsetup{justification=centering}
		\includegraphics[height=3.5cm, width = \textwidth]{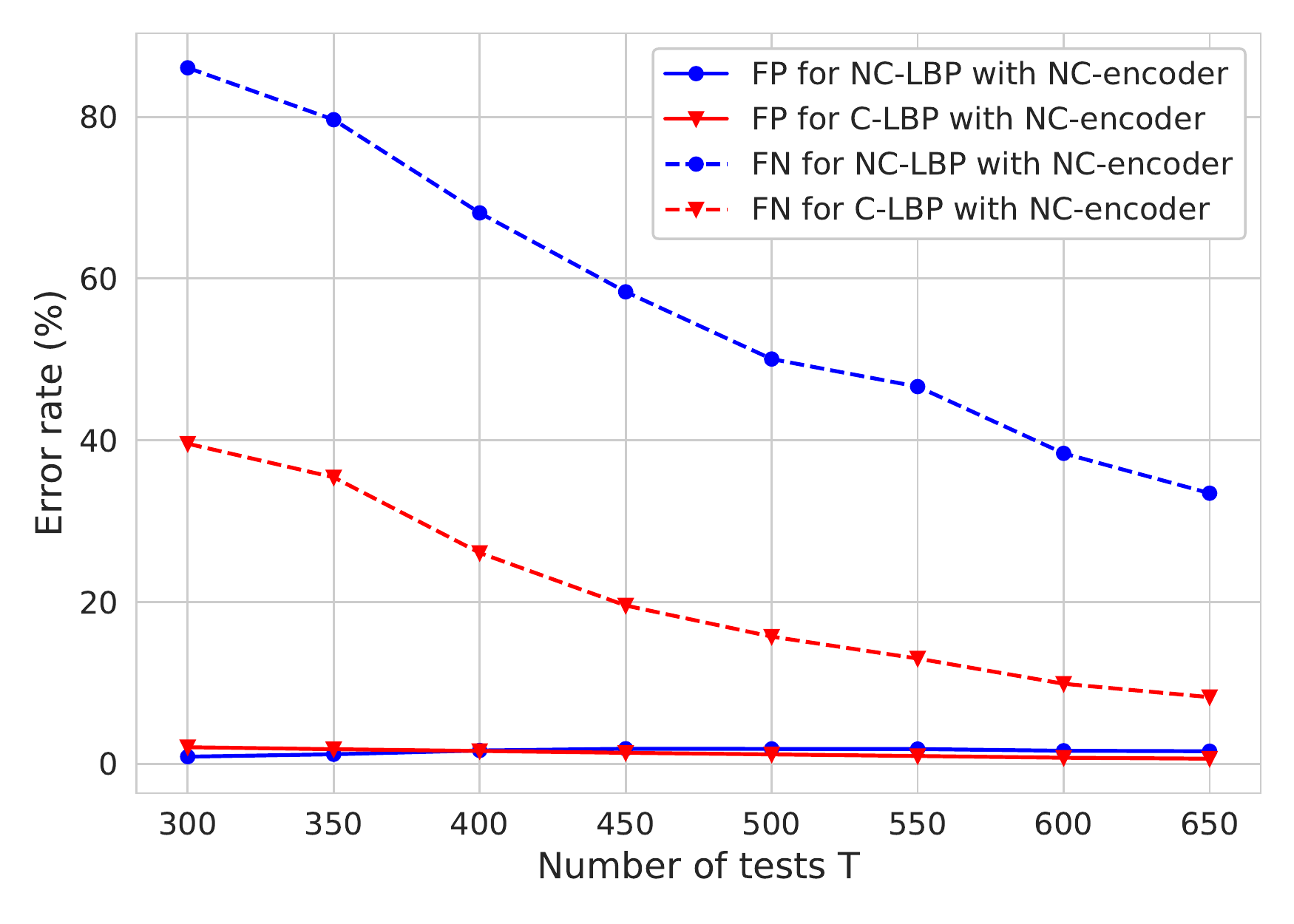}
		\caption{Noisy case: Average error rate $p=0.6$ for linear regime.}
		\label{fig:Noisy_linear_p06}
	\end{subfigure}
	
	\begin{subfigure}{0.38\textwidth}
		\centering
		\captionsetup{justification=centering}
		\includegraphics[height=3.5cm, width = \textwidth]{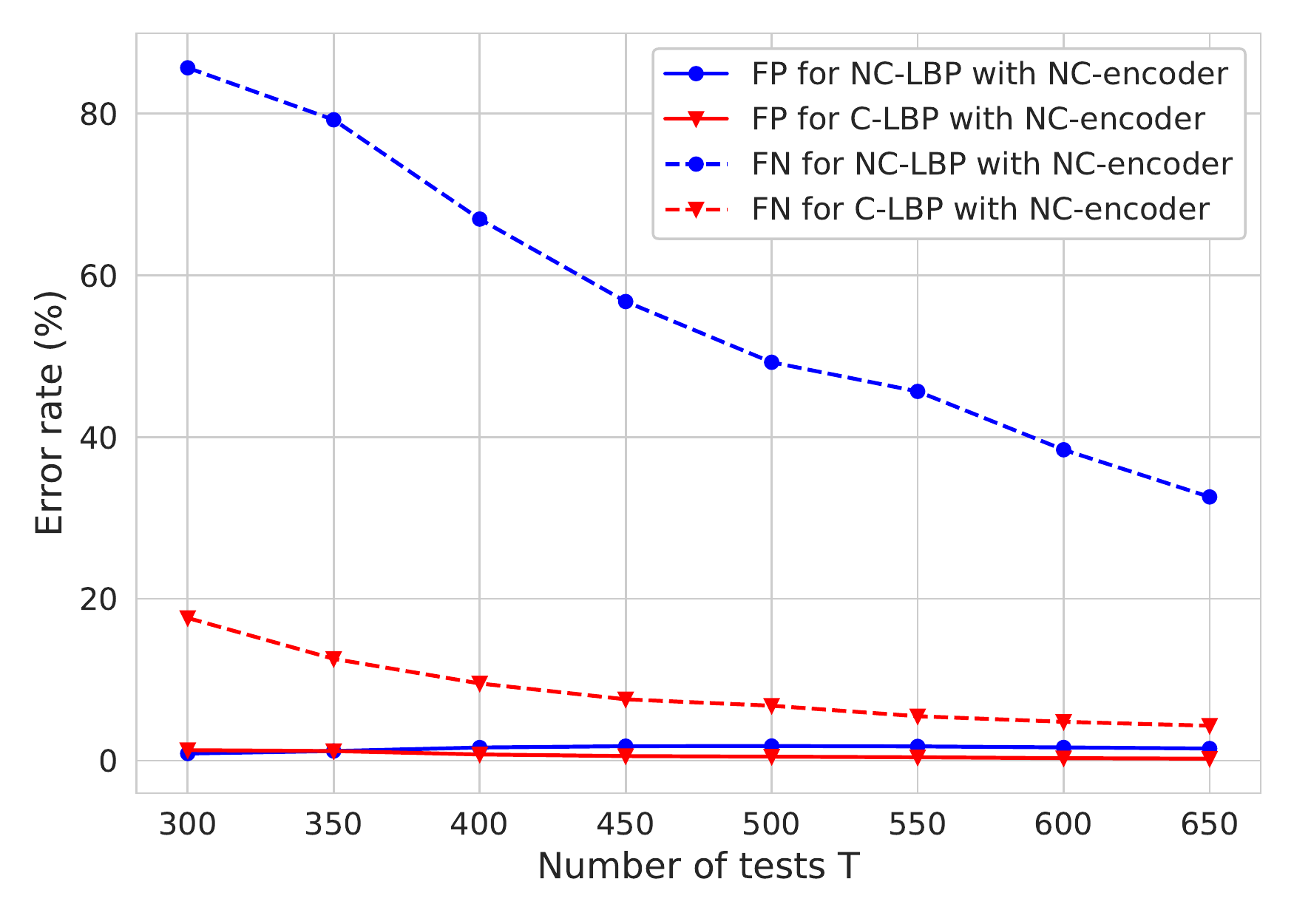}
		\caption{Noisy case: Average error rate $p=0.8$ for linear regime.}
		\label{fig:Noisy_linear_p08}
	\end{subfigure}
	\caption{Experiment (iii): \\Noisy case---Average error rate.}
	\label{fig:exp3}
\end{figure}

\begin{figure}
	\vspace{-0.2cm}
	\centering
	\captionsetup{justification=centering}	
	\includegraphics[ height=4cm,width=0.35\textwidth]{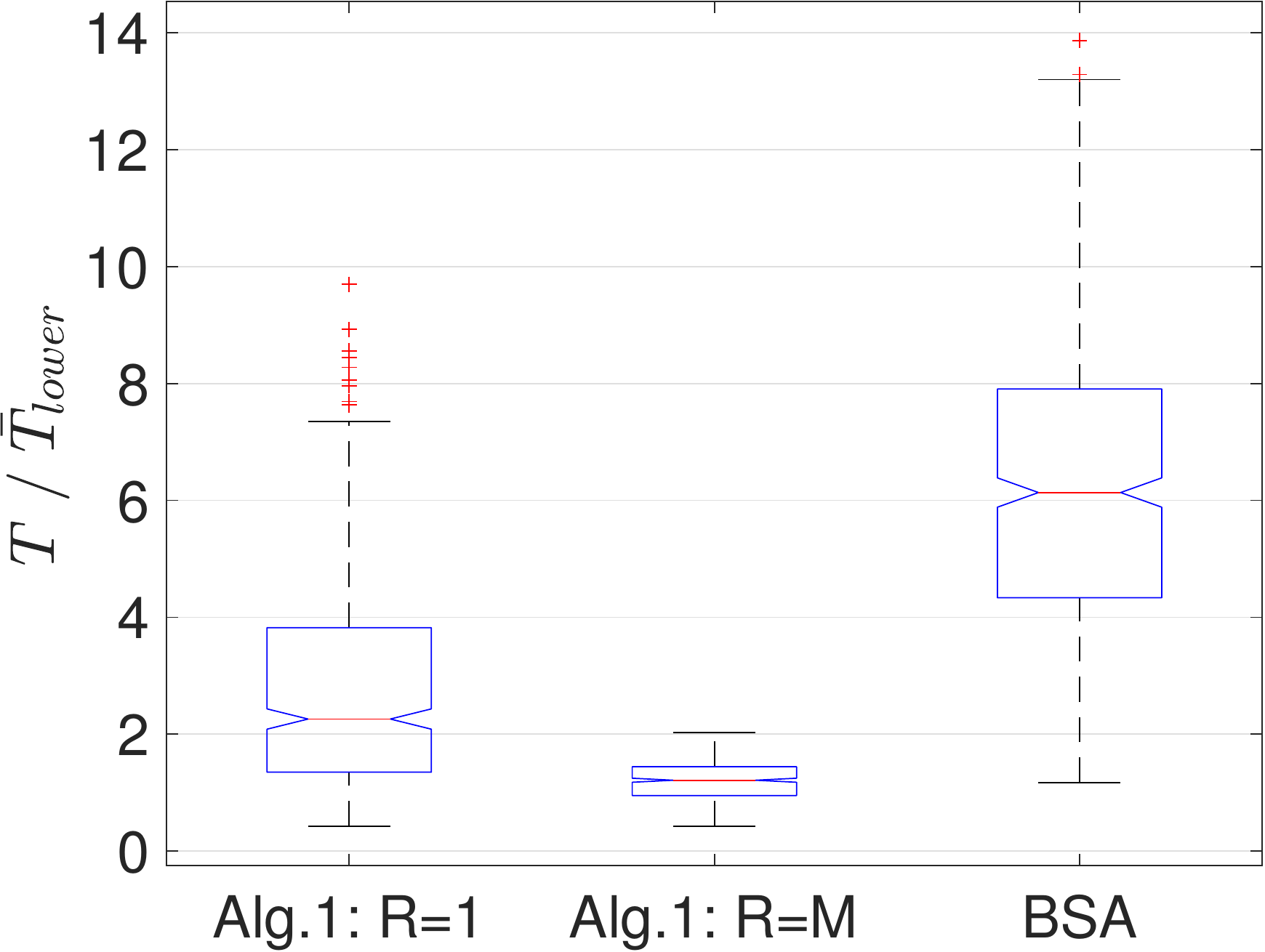}
	\vspace{-0.2cm}
	\caption{Asymmetric case---Linear regime: Cost efficiency for number of tests.}
	\label{fig:assymtericLinear}
	\vspace{-0.2cm}
\end{figure}

\end{document}